\renewcommand{\theAlgoLine}{%
	\@arabic{\numexpr\value{algocf}+1\relax}.\arabic{AlgoLine}}
\theoremstyle{plain}
\newtheorem{theorem}{Theorem}
\newtheorem{corollary}{Corollary}
\newtheorem{lemma}{Lemma}
\newtheorem{claim}{Claim}
\theoremstyle{definition}
\newtheorem{definition}{Definition}
\newcommand{\pname}{\textsc}
\newcommand{\polyn}{n^{\Oh(1)}}
\newcommand{\claimqed}{$\lrcorner$}
\newenvironment{claimproof}{\medskip\noindent \emph{Proof of Claim~\theclaim.}  }{\hfill\claimqed\medskip}
\newcommand{\pathlength}{L}
\newcommand{\cyclelength}{L}
\newcommand{\mathpathlength}{$\pathlength$\xspace}
\newcommand{\mathcyclelength}{$\cyclelength$\xspace}
\newlength{\RoundedBoxWidth}
\newsavebox{\GrayRoundedBox}
\newenvironment{GrayBox}[1]%
   {\setlength{\RoundedBoxWidth}{.93\textwidth}
    \def\boxheading{#1}
    \begin{lrbox}{\GrayRoundedBox}
       \begin{minipage}{\RoundedBoxWidth}}%
   {   \end{minipage}
    \end{lrbox}
    \begin{center}
    \begin{tikzpicture}%
       \node(Text)[draw=black!20,fill=white,rounded corners,%
             inner sep=2ex,text width=\RoundedBoxWidth]%
             {\usebox{\GrayRoundedBox}};
        \coordinate(x) at (current bounding box.north west);
        \node [draw=white,rectangle,inner sep=3pt,anchor=north west,fill=white] 
        at ($(x)+(6pt,.75em)$) {\boxheading};
    \end{tikzpicture}
    \end{center}}     
\newenvironment{defproblemx}[2][]{\noindent\ignorespaces%
                                \FrameSep=6pt%
                                \parindent=0pt%
                \vspace*{-1.5em}
                \ifthenelse{\isempty{#1}}{%
                  \begin{GrayBox}{#2}%                
                }{%
                  \begin{GrayBox}{#2 parameterized by~{#1}}%  
                }
                \begin{tabular*}{\textwidth}{@{\hspace{.1em}} >{\itshape} p{1.8cm} p{0.8\textwidth} @{}}%        
            }{
                \end{tabular*}%
                \end{GrayBox}%
                \ignorespacesafterend
            }  
\newcommand{\Oh}{\mathcal{O}}
\newcommand{\dg}{{\rm dg}}
\newcommand{\dist}{{\rm dist}}
\newcommand{\probDC}{\pname{Long Dirac  Cycle}\xspace}
\newcommand{\probKCycle}{\pname{Longest Cycle}\xspace}
	\newcommand{\bananadec}{Erd{\H {o}}s-Gallai decomposition\xspace}
\newcommand{\banana}{Erd{\H {o}}s-Gallai component\xspace}
\newcommand{\bananas}{Erd{\H {o}}s-Gallai components\xspace}
	\newcommand{\Bananadec}{Erd{\H {o}}s-Gallai decomposition\xspace}
\newcommand{\cyclebananadec}{Dirac decomposition\xspace}
\newcommand{\cyclebanana}{Dirac component\xspace}
\newcommand{\Cyclebananadec}{Dirac decomposition\xspace}
\newcommand{\Cyclebanana}{Dirac component\xspace}
\newcommand{\Nestedbananadec}{Nested \bananadec}
\newcommand{\nestedbananadec}{nested \bananadec}
\newcommand{\LongNestedPathName}{long\_nested\_st\_path}
\newcommand{\NestedCompressionName}{nested\_compress}
\newcommand{\NestedDecompressionName}{nested\_decompress}
\title{Approximating Long Cycle Above Dirac’s Guarantee}
\author{
Fedor V. Fomin\thanks{
Department of Informatics, University of Bergen, Norway.}\\fomin@ii.uib.no
\and
Petr A. Golovach\addtocounter{footnote}{-1}\footnotemark{}\\petr.golovach@ii.uib.no
\and
Danil Sagunov\thanks{
   St.\ Petersburg Department of V.A.\ Steklov Institute of Mathematics, Russia.
}\\danilka.pro@gmail.com
\and 
Kirill Simonov\thanks{Hasso Plattner Institute, University of Potsdam, Germany}\\kirillsimonov@gmail.com
}
\begin{document}

\maketitle

\begin{abstract}
Parameterization above (or below) a guarantee is a successful concept in parameterized algorithms.  
The idea is that many computational problems admit ``natural''  guarantees
bringing to algorithmic questions whether a better solution (above the guarantee)  could be obtained efficiently. For example, for every boolean CNF formula on $m$ clauses, there is an assignment that satisfies at least $m/2$ clauses. How difficult is it to decide whether there is an assignment satisfying more than $m/2 +k$ clauses?  
  Or, if an $n$-vertex graph has a perfect matching, then its vertex cover is at least $n/2$. Is there a vertex cover of size at least $n/2 +k$ for some $k\geq 1$ and how difficult is it to find such a vertex cover? 
  
The above guarantee paradigm has led to several exciting discoveries in the areas of parameterized algorithms and kernelization. We argue that this paradigm could bring forth fresh perspectives on well-studied problems in approximation algorithms. Our example is the longest cycle problem. One of the oldest results in extremal combinatorics is the celebrated Dirac's theorem from 1952. Dirac's theorem provides the following guarantee on the length of the longest cycle:  for every 2-connected $n$-vertex graph $G$ with minimum degree $\delta(G)\leq n/2$,   the length of a longest cycle $L$ is  at least $2\delta(G)$. Thus the ``essential'' part in finding the longest cycle is in approximating the ``offset'' $k = L - 2 \delta(G)$. The main result of this paper is the above-guarantee approximation theorem for  $k$. 
Informally, the theorem says that approximating the offset $k$ is not harder than approximating the total length $L$ of a cycle. In other words, 
for any (reasonably well-behaved) function $f$, a polynomial time algorithm constructing a cycle 
of length $ f(L)$ in an undirected graph with a cycle of length $L$, yields a polynomial time algorithm constructing a cycle of length 
$2\delta(G)+\Omega(f(k))$.
\end{abstract}

\section{Introduction}
One of the concepts that had a strong impact on the development of parameterized algorithms and kernelization is the idea of the above guarantee parameterization. Above guarantee parameterization  grounds on the following observation: 
\emph{the natural parameterization of a maximization/minimization problem by the solution size is not satisfactory if there is a lower bound for the solution size that is sufficiently large} \cite{FominGLPSZ21}. To make this discussion concrete, consider the example of the classical NP-complete problem \textsc{Max Cut}. Observe that in any graph with $m$ edges there is always a cut containing at least  $m/2$ edges. (Actually,  slightly better bounds are known in the literature  \cite{MR0337686,BollobasScot02}.) 
Thus \textsc{Max Cut} is trivially fixed-parameter tractable (FPT) parameterized by the size of the max-cut. Indeed, the following simple algorithm shows that the problem is FPT: 
 If~$k\leq m/2$, then return {\sf yes}; else~$m\leq 2k$  and any brute-force algorithm will do the job. However, the question about \textsc{Max Cut} becomes much more meaningful and interesting, when one seeks a cut above the ``guaranteed'' lower bound $m/2$. 
 
 The above guarantee approach was introduced by Mahajan and Raman~\cite{MahajanRS09} and it was successfully applied in the study of several fundamental problems in parameterized complexity and kernelization. For illustrative examples, we refer to~\cite{AlonGKSY10,BezakovaCDF17,CrowstonJMPRS13,FominGLPSZ21,FominGSS22esa,GargP16,DBLP:journals/mst/GutinKLM11,GutinIMY12,GutinP16,GutinRSY07,LokshtanovNRRS14}, see also the recent survey of Gutin and Mnich \cite{GutinMnich22}.  Quite surprisingly, the theory of the above (or below) guarantee \emph{approximation} remains unexplored. (Notable exceptions are the works of Mishra et al. \cite{MishraRSSS11} on approximating the minimum vertex cover beyond the size of a maximum matching
  and of Bollob\'{a}s and Scott on approximating max-cut beyond the $m/2 +\sqrt{m/8}$ bound \cite{BollobasScot02}.)

In this paper, we bring the philosophy of the above guarantee parameterization into the realm of approximation algorithms. 
 In particular,
 \begin{quote}
 \begin{mdframed}[backgroundcolor=gray!20]  
 The goal of this paper is to study the approximability of the classical problems of finding a longest cycle and a longest $(s,t)$-path in a graph from the viewpoint of the above guarantee parameterization. 
    \end{mdframed}
\end{quote}

\subparagraph{Our results.}  
 Approximating the length of a longest cycle in a graph enjoys a lengthy and rich history~\cite{BjorkHus03,BjorklundHK04,FederMS02,FederMotw10,FurerR94,Gabow07,Vishwanathan04}. 
There are several fundamental results in extremal combinatorics providing lower bounds on the length of a longest cycle in a graph. The oldest of these bounds is given by   
 Dirac's Theorem from  1952 
\cite{Dirac52}. Dirac's Theorem 
states that a $2$-connected graph $G$ with the minimum vertex degree $\delta(G)$ contains a cycle of length $\cyclelength\geq \min\{2\delta(G),|V(G)|\}$.
Since  every longest cycle in a graph $G$ with $\delta(G)<\frac{1}{2}|V(G)|$ (otherwise, $G$ is Hamiltonian and a longest cycle can be found in polynomial time) always has a   ``complementary'' part of length $2\delta(G)$, the 
essence of the problem is in  computing the ``offset'' $k=\cyclelength-2\delta(G)$. 
Informally, the first main finding of our paper is that Dirac's theorem is well-compatible with approximation.
We prove that approximating the offset $k$ is essentially not more difficult than approximating the length  $\cyclelength$. 

More precisely. Recall that $f$ is subadditive if for all $x$, $y$ it holds that $f(x + y) \le f(x) + f(y)$. Our  main result is the following theorem.

\begin{restatable}{theorem}{diracappxtheorem}\label{thm:diracappxtheorem}\label{thm:long_cycle_approx}
 	Let $f: \mathbb{R}_+\to \mathbb{R}_+$ be a non-decreasing subadditive function and suppose that we are given a polynomial-time algorithm finding a cycle of length at least $f(\cyclelength)$ in graphs  with 
	the longest cycle length  \mathcyclelength. Then there exists a polynomial time algorithm that finds a cycle of length at least $2\delta(G)+\Omega(f(\cyclelength-2\delta(G)))$ in a $2$-connected graph $G$ with $\delta(G) \le \frac{1}{2}|V(G)|$ and the longest cycle length $L$.
\end{restatable}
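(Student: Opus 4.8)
We handle the ``guaranteed'' length $2\delta$ combinatorially and spend the assumed algorithm only on the offset; here and below set $\delta := \delta(G)$, $n := |V(G)|$ and $k := \cyclelength - 2\delta$ (so $k \ge 0$ by Dirac's theorem). Two preliminary reductions. If $\delta = \frac{1}{2}n$, then $G$ is Hamiltonian and $k = 0$, and a Hamiltonian cycle is computable in polynomial time (e.g.\ via the constructive proof of Dirac's theorem); so assume $\delta < \frac{1}{2}n$. If $k$ is bounded by an absolute constant, then $f(k) = \Oh(1)$, and the algorithmic form of Dirac's theorem (longest-path extension / P\'osa rotations) outputs a cycle of length $2\delta \ge 2\delta + \Omega(f(k))$ for a suitable hidden constant; alternatively one may invoke an \classFPT algorithm for \probDC parameterized by $k$ and solve the instance exactly. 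Hence from now on $k$ exceeds any fixed constant. It is worth stressing why the theorem is not immediate: running the assumed algorithm on $G$ itself only yields a cycle of length $f(\cyclelength)$, which may be \emph{far below} $2\delta$ when $f$ grows slowly; one genuinely must splice a ``cheap'' length-$2\delta$ cycle together with an ``approximated'' offset. Finally, since $f$ is subadditive and non-decreasing, $f(k) \le k f(1) = \Oh(k)$, and $f(\Omega(k)) = \Omega(f(k))$ (indeed $f(k) \le c\, f(k/c)$ for every integer $c \ge 1$); this last identity is the only place subadditivity enters.

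The heart of the argument is an algorithmic \cyclebananadec: by pushing the classical proofs of Dirac's and the Erd{\H o}s--Gallai theorems (cycle rotations, ear decompositions) to their extremal limit, one computes in polynomial time one of two outcomes. \textbf{(a)} A cycle of $G$ of length at least $2\delta + \Omega(k)$; this already beats $2\delta + \Omega(f(k))$ since $f(k) = \Oh(k)$. \textbf{(b)} A decomposition presenting $G$ as a ``core'' $G_0$ --- again $2$-connected with the degree conditions needed for the Dirac bound, so that a cycle of length $\min\{2\delta, |V(G_0)|\} \ge 2\delta$ lives in $G_0$ and can be found there --- glued along two-element vertex separators to a bounded-depth \emph{nested} family of banana-like blobs (the Dirac components, the extremal configurations for cycle length), in such a way that the entire offset $k$ is confined to these blobs. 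Descending the nesting, we isolate one subgraph $H$, attached to the remainder of $G$ through exactly two vertices $s,t$, in which the longest cycle has length $\Theta(k)$.

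In outcome (b) we make a \emph{single} call to the assumed algorithm, on $H$, obtaining a cycle $C_H \subseteq H$ of length at least $f(\Theta(k)) = \Omega(f(k))$. As $H$ is $2$-connected, two vertex-disjoint paths from $\{s,t\}$ to $V(C_H)$ inside $H$ (trivial if $s,t \in V(C_H)$) reroute $C_H$ into an $(s,t)$-path $Q$ of $H$ of length at least $\frac{1}{2}|C_H| = \Omega(f(k))$, all of whose internal vertices lie in $V(H) \setminus \{s,t\}$. The graph $G' := G - (V(H)\setminus\{s,t\})$ is $2$-connected and inherits minimum degree $\ge \delta$ off $\{s,t\}$, and the decomposition guarantees that the length-$2\delta$ Dirac bound in $G'$ is realizable as an $(s,t)$-path: an algorithmic Erd{\H o}s--Gallai argument for $(s,t)$-paths produces an $(s,t)$-path $R$ of $G'$ of length $2\delta - \Oh(1)$, internally disjoint from $Q$. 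Then $Q \cup R$ is a cycle of $G$ of length $|Q| + |R| \ge 2\delta + \Omega(f(k))$ (the additive $\Oh(1)$ being absorbed into the $\Omega$-term, as $k$ exceeds a fixed constant), which is the claimed bound.

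The main obstacle is outcome (b): establishing the \cyclebananadec and making it constructive and polynomial-time --- proving that whenever $G$ has no cycle much longer than $2\delta$ its structure is rigid enough that, after the nested peeling, the whole offset concentrates in one banana-like subgraph $H$ whose longest cycle has length $\Theta(k)$, while the residual core retains the full length-$2\delta$ Dirac guarantee in the stronger $(s,t)$-path form. This parallels, and is the technically heaviest ingredient of, the structure theory underlying the fixed-parameter tractability of \probDC; the approximation setting is easier only in that constant-factor losses in $k$ are free (by subadditivity of $f$) and no dynamic programming over the decomposition is needed --- one deepest Dirac component suffices. The remaining ingredients are comparatively routine: the Hamiltonian and bounded-$k$ base cases, the $2$-connected rerouting of $C_H$, the algorithmic Erd{\H o}s--Gallai bound for $(s,t)$-paths, and the bookkeeping of the $\Oh(1)$ additive terms so that the final length is genuinely $2\delta + \Omega(f(\cyclelength - 2\delta))$.
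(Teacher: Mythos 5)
Your plan splices the cycle as ``$\Omega(f(k))$ from the blob $H$ plus the full $2\delta$ from the residual core, realized as an $(s,t)$-path through the two attachment vertices.'' That second half is not available: Dirac's length-$2\delta$ cycle cannot in general be threaded through a prescribed pair $\{s,t\}$ as an $(s,t)$-path. What Erd\H{o}s--Gallai/Menger give (Corollary~\ref{thm:relaxed_st_path}, Lemma~\ref{lemma:cycle_to_path} in the paper) is an $(s,t)$-path of length about $\delta$, and this is tight --- e.g.\ if the core is two copies of $K_{\delta+1}$ glued along $\{s,t\}$, its longest cycle is $2\delta$ but its longest $(s,t)$-path has length $\delta$. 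So your path $R$ of length $2\delta-\Oh(1)$ need not exist, and with $|R|\approx\delta$ your construction only yields $\delta+\Omega(f(k))$, far below the target. The other half of the plan is equally problematic: a component of a \cyclebananadec{} attaches to the two short paths through essentially two vertices, so its internal minimum degree is about $\delta-2$ and its longest cycle has length at least about $\delta$ --- never $\Theta(k)$ when $k\ll\delta$. No ``nested peeling'' can confine the offset to a subgraph whose longest cycle is $\Theta(k)$; inside any component the extra length $k$ is again entangled with a degree-$\delta$ guarantee, so a single black-box call on $H$ returns only $f(\delta+k)$-type lengths, which may be $o(\delta)$ and cannot supply the $\delta$ your splice silently needs from that side. (Your outcome (a), a cycle of length $2\delta+\Omega(k)$ straight from rotation arguments, is also stronger than what the enlargement machinery provides; the paper only pushes the cycle to length $(2+\tfrac{1}{24})\delta$ or extracts a vertex cover certifying $k=\Oh(\text{current offset})$, and handles large $f(L)$ by a direct black-box call.)

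The paper's proof avoids both obstacles by balancing $\delta$ on \emph{each} side of a $2$-separator: its new Lemma~\ref{lemma:single_path_lemma} shows that when a \cyclebananadec{} exists there is a separating pair $\{u,v\}$ with a $(u,v)$-path of length at least $\delta+(k-2)/4$; the algorithm then iterates over all separating pairs and, for every component $H$ of $G-\{u,v\}$, runs not the black box but the $(s,t)$-path approximation \emph{above} $\delta(\cdot-\{s,t\})$ of Theorem~\ref{thm:eg_approx} on $G[V(H)\cup\{u,v\}]+uv$, obtaining on the good side a $(u,v)$-path of length $\delta+\Omega(f(k))$, and closes it with a plain Erd\H{o}s--Gallai $(u,v)$-path of length $\ge\delta-2$ from another component, giving $2\delta+\Omega(f(k))$. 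That Theorem~\ref{thm:eg_approx}, proved via the nested Erd\H{o}s--Gallai decomposition with its compression/decompression routines, is precisely the machinery your proposal declares unnecessary; without it (or an argument of comparable strength) the step from ``black-box cycle approximation'' to ``offset above the degree bound inside a component'' is missing, and the proof does not go through.
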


The  2-connectivity condition is important. As was noted in ~\cite{FominGSS22},   deciding whether a connected graph $G$ contains a cycle of length at least $2\delta(G)$ is NP-complete.
\Cref{thm:diracappxtheorem} trivially extends to approximating the longest path problem above $2\delta(G)$. 
For the longest path, the requirement on 2-connectivity of a graph can be relaxed to connectivity. 
This can be done by a standard reduction of adding an apex vertex $v$  to the connected graph $G$, see e.g.~\cite{FominGSS22}.    The minimum vertex degree in the new graph $G+v$, which is 2-connected,  is equal to $\delta(G)+1$, and $G$ has a path of length at least $
\pathlength$ if and only if $G+v$ has a cycle of length at least $\pathlength+2$.  Thus approximation of the longest cycle (by making use of \Cref{thm:diracappxtheorem}) in $G+v$, is also the approximation of the longest path in $G$.

\subparagraph{Related work.}   The first approximation algorithms for longest paths and cycles followed the development of exact parameterized algorithms. Monien \cite{Monien85}  and Bodlaender  \cite{Bodlaender93a}  gave parameterized algorithms computing a path of length $\pathlength$ in times 
 $\Oh(\pathlength!2^\pathlength n)$ and  $\Oh(\pathlength! n m)$ respectively.  These algorithms imply also  approximation algorithms constructing in polynomial time  a path of length $  \Omega(\log{\pathlength}/\log\log{\pathlength)}$, where  $\pathlength$ is the longest path length in graph $G$. In their celebrated work on color coding,  Alon, Yuster, and, Zwick~\cite{AlonYZ95} obtained an algorithm 
  that in time   $\Oh(5.44^\pathlength n)$   finds   a path/cycle of length $\pathlength$. The algorithm of Alon et al. implies constructing in polynomial time a path of   length $  \Omega(\log{\pathlength})$. 
 A significant amount of the consecutive work targets to improve the base of the exponent $c^\pathlength$ in the running times of the parameterized algorithms for longest paths and cycles \cite{Koutis08,Williams09,FominLS14,DBLP:journals/siamcomp/Bjorklund14,BjHuKK10}. The surveys    \cite{FominK13,KoutisW16}, and 
\cite[Chapter~10]{cygan2015parameterized} provide an overview of  ideas and methods in this research direction.
The exponential dependence in $\pathlength$ in the running times of these algorithms is   
  asymptotically optimal: An algorithm finding a path (or cycle) of length $\pathlength$ in time  $2^{o(\pathlength)}n^{\Oh(1)}$ would fail   the Exponential Time Hypothesis (ETH) of Impagliazzo, Paturi,  and Zane~\cite{ImpagliazzoPZ01}.
  Thus none of the further improvements in the running times of parameterized algorithms for longest cycle or path, would lead to a better than   $  \Omega(\log{\pathlength})$ approximation bound.  
  
Bj\"{o}rklund and Husfeldt \cite{BjorkHus03} made the first step ``beyond color-coding'' in approximating the longest path. They gave a polynomial-time algorithm that finds a path of length $\Omega(\log \pathlength/ \log \log \pathlength)^2$ in a graph with the longest path length $\pathlength$. Gabow in \cite{Gabow07} enhanced and extended this result to  approximating the longest cycle. His algorithm computes a cycle of length $2^{\Omega(\sqrt{\log \cyclelength/\log \log \cyclelength})}$ in a graph with a cycle of length $\cyclelength$. Gabow and Nie \cite{GabowNisaac08} observed that  a refinement of Gabow's algorithm leads to a polynomial-time algorithm constructing cycles of length  $2^{\Omega(\sqrt{\log \cyclelength})}$.
This is better than $(\log (L))^{\Oh(1)}$  but worse than $L^\varepsilon$.
Pipelining  the algorithm of Gabow and Nie  with \Cref{thm:diracappxtheorem} yields a polynomial time algorithm constructing in a $2$-connected graph $G$ a cycle of length  $2\delta(G) +\Omega(c^{\sqrt{\log{k}}})$.
For graphs of bounded vertex degrees, better approximation algorithms are known~\cite{chen2006approximating,FederMS02}.

The gap between the upper and lower bounds for the longest path approximation is still big. Karger, Motwani,  and Ramkumar~\cite{KargerMR97} proved that 
the longest path problem does not belong to APX unless P = NP. They also show that for any $\varepsilon >0$,  it cannot be approximated within  $2^{\log^{1-\varepsilon} n}$ unless NP $\subseteq$  DTIME($2^{O(\log^{1/\varepsilon} n)}$). Bazgan, Santha, and Tuza~\cite{BazganST99} extended these lower bounds to cubic Hamiltonian graphs.  
For directed graphs the gap between the upper and lower bounds is narrower \cite{BjorklundHK04,GabowN08}.

Our approximation algorithms are inspired by the recent work Fomin, Golovach, Sagunov, and Simonov~\cite{FominGSS22} on the parameterized complexity of the longest cycle beyond Dirac's bound. 
  Fomin et al. were interested in computing the ``offset'' beyond $2\delta(G)$ exactly. Their parameterizes algorithm decides whether $G$ contains a cycle of length at least $2\delta(G) +k$ in 
  time $2^{\Oh(k)}n^{\Oh(1)} $, and thus in   
 polynomial time computes a cycle of length   $2\delta(G) +\Omega(\log{k})$. However, the tools developed 
 in~\cite{FominGSS22} are not sufficient to go beyond $\Omega(\log{k})$-bound on the offset. The main combinatorial tools from ~\cite{FominGSS22} are \bananadec  and \cyclebananadec of graphs. For the needs of approximation, we have to develop novel  (``nested'') variants or prove additional structural properties of these decompositions.

  Dirac's theorem is one of the central pillars of  Extremal Graph Theory. The excellent surveys \cite{MR1373656} and \cite{MR1373679} provide an introduction to this fundamental subarea of graph theory. Besides \cite{FominGSS22}, the algorithmic applications of Dirac's theorem from the perspective of parameterized complexity were studied by  
   Jansen, Kozma, and Nederlof  in \cite{DBLP:conf/wg/Jansen0N19}. 

   \subparagraph{Paper structure.}  \Cref{sec:overview} provides an overview of the techniques employed to achieve our results. Then, \Cref{sec:prelim} introduces notations and lists auxiliary results.
   \Cref{sec:eg} guides through the proof of the approximation result for $(s,t)$-paths, which is the key ingredient required for \Cref{thm:long_cycle_approx}. \Cref{sec:cycle} is dedicated to the proof of \Cref{thm:long_cycle_approx} itself. Finally, we conclude with a summary and some open questions in \Cref{sec:concl}.

\section{Overview of the proofs}\label{sec:overview}
In this section, we provide a high-level strategy of the proof of \Cref{thm:long_cycle_approx}, as well as key technical ideas needed along the way.
The central concept of our work is an approximation algorithm for the \textsc{Longest Cycle} problem.
Formally, such an algorithm should run in polynomial time for a given graph $G$ and should output a cycle of length at least $f(\cyclelength)$, where $\cyclelength$ is the length of the longest cycle in $G$.
The function $f$ here is the \emph{approximation guarantee} of the algorithm.
In our work, we allow it to be an arbitrary non-decreasing function $f: \mathbb{R}_+\to \mathbb{R}_+$ that is also subadditive (i.e., $f(x)+f(y)\ge f(x+y)$ for arbitrary $x,y$).
We also note that an $f(\cyclelength)$-approximation algorithm for \textsc{Longest Cycle} immediately gives a $\frac{1}{2}f(2\cyclelength)$-approximation algorithm for \textsc{Long $(s,t)$-Path} in $2$-connected graphs (by Menger's theorem, see \Cref{lemma:cycle_to_path} for details).

Our two main contributions assume that we are given such an $f$-approximation algorithm as a black box.
In fact, we only require to run this algorithm on an arbitrary graph as an oracle and receive its output.
We do not need to modify or know the algorithm routine.

While the basis of our algorithm comes from the structural results of Fomin et al.~\cite{FominGSS22}, in the first part of this section we do not provide the details on how it is used.

The first of our contributions is a polynomial-time algorithm that finds a long $(s,t)$-path in a given $2$-connected graph $G$ with  two vertices $s,t\in V(G)$.
The longest $(s,t)$-path in $G$ always has length $\delta(G-\{s,t\})+k$ for $k\ge 0$ by Erd\H{o}s-Gallai theorem, and the goal of the algorithm is to find an $(s,t)$-path of length at least $\delta(G-\{s,t\})+\Omega(f(k))$ in $G$.
To find such a path, this algorithm first recursively decomposes the graph $G$ in a specific technical way.
As a result, it outputs several triples $(H_i, s_i, t_i)$ in polynomial time, where $H_i$ is a $2$-connected minor of $G$ and $s_i,t_i \in V(H_i)$.
For each triple, the  algorithm runs the black box to find a $f$-approximation of the longest $(s_i,t_i)$-path in $H_i$.
In the second round, our algorithm cleverly uses constructed approximations to construct a path of length at least $\delta(G-\{s,t\})+\Omega(f(k))$ in the initial graph $G$.
This is summarized as the following theorem.

\begin{restatable}{theorem}{erdgallappxtheorem}\label{thm:eg_approx}\label{thm:erdgallaiappxtheorem}
	Let $f: \mathbb{R}_+\to \mathbb{R}_+$ be a non-decreasing subadditive function and suppose that we are given a polynomial-time algorithm computing an $(s,t)$-path of length at least $f(\pathlength)$ in graphs with given two vertices $s$ and $t$ having  the longest $(s,t)$-path of length  $\pathlength$. Then 	
	there is a polynomial-time algorithm that outputs an $(s,t)$-path of length at least  $\delta(G-\{s,t\})+\Omega(f(\pathlength-\delta(G-\{s,t\})))$ in a $2$-connected graph $G$ with two given vertices $s$ and $t$ having the longest $(s,t)$-path length $\pathlength$. 
\end{restatable}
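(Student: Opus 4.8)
The plan is to build on the \bananadec of Fomin et al., upgraded to a \nestedbananadec, and to assemble the output path from a polynomial number of black-box calls; what makes this possible is that subadditivity of $f$ lets the small gains obtained at the individual pieces add up to $\Omega(f(k))$. Set $\delta := \delta(G-\{s,t\})$ and $k := \pathlength-\delta\ge 0$, write $\ell(H,a,b)$ for the length of a longest $(a,b)$-path in $H$, and note that by the constructive Erd\H{o}s--Gallai bound we may assume $k\ge 1$. The first step is to compute in polynomial time a \nestedbananadec of $(G,s,t)$: roughly, a hierarchical description of $G$ along an $(s,t)$-backbone $s=v_0,v_1,\dots,v_r=t$ whose consecutive vertices are joined by ``bananas'' (blocks attached at two vertices), where the constituents of a banana are themselves recursively decomposed, the recursion stopping only at pieces of minimum degree $O(1)$. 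From this I would extract two objects: (i) an explicit $(s,t)$-path $Q$ in $G$, computed in polynomial time, with $|Q|\ge\delta$ --- this is the algorithmic content of the Erd\H{o}s--Gallai lower bound together with the longest strands of the ``easy'' parts of the decomposition, which can be found outright; and (ii) a polynomial-size family $R_1,\dots,R_m$ of the remaining ``hard'' leaf pieces, pairwise non-interfering (disjoint interiors), each attached to $Q$ at two vertices $s_j,t_j$ with $|Q\cap R_j|=O(1)$, and each giving rise to a $2$-connected minor $H_j:=R_j+s_jt_j$ of $G$ with terminals $s_j,t_j$ (a minor because $2$-connectivity and Menger's theorem supply an $s_j$--$t_j$ path in $G$ avoiding $R_j$'s interior, which we contract to the new edge; $2$-connected because $R_j$ is a connected block on $s_j,t_j$). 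The structural heart is an additive accounting: with $\operatorname{exc}(R_j):=\ell(H_j,s_j,t_j)-|Q\cap R_j|\ge 0$, the path $Q$ together with simultaneous optimal detours through all the $R_j$ is a longest $(s,t)$-path, so $|Q|+\sum_{j=1}^m\operatorname{exc}(R_j)=\pathlength$, whence $E:=\sum_j\operatorname{exc}(R_j)=k-(|Q|-\delta)\le k$.

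Granting this, the algorithm is short. For each $j$ with $\operatorname{exc}(R_j)$ above a fixed constant threshold (depending on $f$), run the given black box on $(H_j,s_j,t_j)$: it returns an $(s_j,t_j)$-path of length $\ge f(\ell(H_j,s_j,t_j))\ge f(\operatorname{exc}(R_j))$; since a non-trivial $(s_j,t_j)$-path in $H_j$ lies entirely inside $R_j$ while $|Q\cap R_j|=O(1)$, rerouting $Q$ through it turns $R_j$ into a detour of gain $\ge f(\operatorname{exc}(R_j))-O(1)=\Omega(f(\operatorname{exc}(R_j)))$ (using that $\operatorname{exc}(R_j)$ exceeds the threshold, and that $f(n)\le n$ for integers $n\ge 1$, which is forced by the mere existence of the assumed algorithm on $(s,t)$-path graphs). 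For the remaining pieces, whose excess is bounded by a constant, compute an optimal $(s_j,t_j)$-path in $H_j$ by brute force in polynomial time, giving a detour of gain $\operatorname{exc}(R_j)\ge f(\operatorname{exc}(R_j))$. As the $R_j$ are pairwise non-interfering, all these detours splice into $Q$ at once, producing an $(s,t)$-path of length at least
\[
|Q|\;+\;\Omega\!\Bigl(\,\textstyle\sum_{j=1}^m f(\operatorname{exc}(R_j))\Bigr)\;\ge\;\delta+\Omega\!\bigl(f(k)\bigr).
\]
For the last inequality: if $E\le k/2$ then $|Q|=\pathlength-E\ge\delta+k/2=\delta+\Omega(f(k))$ already (as $f(k)=O(k)$); if $E>k/2$ then by subadditivity $\sum_j f(\operatorname{exc}(R_j))\ge f(E)\ge f(k/2)\ge\tfrac12 f(k)$, while $|Q|\ge\delta$. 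The same construction thus covers both extremes --- a single heavy region handled by one black-box call, or the excess scattered over many regions and recovered by combining many small detours --- and it is exactly subadditivity, rather than any multiplicative property, that makes the scattered case go through.

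The main obstacle is the first step: designing the \nestedbananadec and proving its properties --- that it is polynomial-time computable with the claimed backbone-plus-bananas form; that its hard leaf pieces can be chosen pairwise non-interfering with $O(1)$-length baselines along $Q$ so that their detours compose; that each completes to a $2$-connected minor of $G$ (the delicacy being that the added $s_j$--$t_j$ connection must neither introduce a cut vertex nor spoil minor-hood); that a black-box $(s_j,t_j)$-path in $H_j$ lifts back through the contractions to a genuine detour of $Q$; and --- the crux --- the additive accounting $|Q|+\sum_j\operatorname{exc}(R_j)=\pathlength$ with $|Q|\ge\delta$, which has to be proved level by level through the nesting, including that the ``easy'' parts are truly resolvable in polynomial time and that no further excess hides between levels. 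The flat \bananadec of Fomin et al.\ already yields a bounded-depth version of this, enough via their exact parameterized algorithm for an $\Omega(\log k)$ offset; the new work is to iterate it into a genuinely nested decomposition and to check that the quantitative guarantee does not decay with recursion depth --- which is exactly why subadditivity of $f$ is the natural hypothesis under which everything closes.
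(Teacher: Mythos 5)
There is a genuine gap, and it sits exactly where you park it: the ``additive accounting'' $|Q|+\sum_{j}\operatorname{exc}(R_j)=\pathlength$ over a polynomial family of pairwise non-interfering hard pieces attached to a backbone $Q$ with $|Q\cap R_j|=O(1)$. Nothing in the \bananadec machinery yields such a structure, and the known structure is essentially the opposite. In the decomposition of Fomin et al.\ (and in the strengthened form used here, \Cref{lemma:st_path_or_tunnel}), one can assume that \emph{no} $(s,t)$-path enters two \bananas — if some path did, it would already have length about $2\delta-4$ and the extremal case would not arise — so a longest $(s,t)$-path gains its excess $k$ not as a sum of independent detours into many disjoint regions but along a \emph{single} root-to-leaf chain of nested components, partly inside the deepest component and partly in the ``slices'' joining consecutive levels. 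Moreover the components hang off the two short separator subpaths $P_1,P_2$, not off a long backbone at two vertices, and traversing a component costs a baseline of about $\delta(G_i-\{s_i,t_i\})$ edges, so the hypotheses $|Q\cap R_j|=O(1)$ and exact additivity of excesses are not available; an optimal path is not, in general, a fixed path $Q$ re-routed through locally optimal detours, and no argument is offered (or known) for that identity. Since your splicing step and the subadditivity bound $\sum_j f(\operatorname{exc}(R_j))\ge f(E)$ both rest on this unproved claim, the proof does not go through; you have deferred to ``the first step'' precisely the content of the theorem.

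For contrast, the paper's route never sums gains over disjoint pieces. It builds the \nestedbananadec, then \emph{compresses} the graph: connections between a component and its parent that are forced (no two disjoint connecting paths of total length $d_i+1$, tested via \Cref{cor:two_long_st_paths}) are contracted, and undecomposed leaves are replaced by marked edges. A \emph{single} black-box call on the compressed graph, followed by a decompression argument (\Cref{lemma:path_in_H}: a path of length $r$ in $H$ lifts to length at least $\delta(G-\{s,t\})+r/8-3$ in $G$), handles the case where a constant fraction of the optimum survives outside the deepest component; otherwise, the excess is concentrated in one component $G_h$, and there it suffices to take the best of the path from \Cref{lemma:st_path_or_tunnel}, the black-box approximation on $(G_h,s_h,t_h)$, and the Erd\H{o}s--Gallai path, extended to $\{s,t\}$ by an arbitrary pair of disjoint paths. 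Subadditivity enters only through inequalities such as $f(k/4+1)\ge f(k)/4$ and for absorbing the bounded degree offsets between levels — not to aggregate contributions of many regions. If you want to salvage your plan, the piece you must replace is the backbone-plus-independent-detours model with something that respects the ``at most one \banana per level'' structure, which is effectively what the compression/decompression argument does.
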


The second (and main) contribution of this paper is the polynomial-time algorithm that approximates the longest cycle in a  given $2$-connected graph $G$  such that $2\delta(G) \le |V(G)|$.
It employs the black-box $f$-approximation algorithm for \probKCycle to find a cycle of length $2\delta(G)+\Omega(f(k))$, where $2\delta(G)+k$ is the length of the longest cycle in $G$.
By Dirac's theorem applied to $G$, $k$ is always at least $0$.

To achieve that, our algorithm first tries to decompose the graph $G$. However, in contrast to the first contributed algorithm, here the decomposition process is much simpler.
In fact, the decomposition routine is never applied recursively, as the decomposition itself needs not to be used: its existence is sufficient to apply another, simple, procedure.

Similarly to the first contribution, the algorithm then outputs a series of triples $(H_i, s_i, t_i)$, where $H_i$ is a $2$-connected minor of $G$ and $s_i, t_i \in V(H_i)$.
The difference here is that for each triple the algorithm runs not the initial black-box $f$-approximation algorithm, but the algorithm of the first contribution, i.e.\ the algorithm of \Cref{thm:erdgallaiappxtheorem}.
Thus, the output of each run is an $(s_i,t_i)$-path of length $\delta(H_i-\{s_i,t_i\})+\Omega(f(k_i))$ in $H_i$, where $\delta(H_i-\{s_i,t_i\})+k_i$ is the length of the longest $(s_i,t_i)$-path in $H_i$.

Finally, from each approximation, our algorithm constructs a cycle of length at least $2\delta(G)+\Omega(f(k_i))$.
It is guaranteed that $k_i=\Omega(k)$ for at least one $i$, so the longest of all constructed cycles is of length at least $2\delta(G)+\Omega(f(k))$.
The following theorem is in order.

\diracappxtheorem*

One may note that \Cref{thm:erdgallaiappxtheorem} actually follows from \Cref{thm:diracappxtheorem} (again, by Menger's theorem, see \Cref{lemma:cycle_to_path}).
However, as described above, the algorithm in \Cref{thm:diracappxtheorem} employs the algorithm of \Cref{thm:erdgallaiappxtheorem}, so we have to prove the latter before the former.

In the remaining part of this section, we provide more detailed proof overviews of both theorems, in particular, we explain how the algorithms employ the structural results of~\cite{FominGSS22}.
In both proofs, we complement these results by showing useful properties of specific graph decompositions.
For clarity, we start with \Cref{thm:diracappxtheorem}, as its proof is less involved.

\subsection{Approximating long cycles}
The basis of our algorithm is the structural result due to Fomin et al.~\cite{FominGSS22}.
In that work, the authors show the following: There is an algorithm that, given a cycle in a $2$-connected graph, either finds a longer cycle or finds that $G$ is of a ``particular structure''.
This algorithm can be applied to any cycle of length less than $(2+\sigma_1)\cdot \delta(G)$ (to be specific, we use $\sigma_1=\frac{1}{24}$, see  \Cref{lemma:main_cycle_lemma} for details).

To see how this stuctural result is important, recall that we aim to find a cycle of length at least $2\delta(G)+\Omega(f(k))$ in a $2$-connected graph $G$ with the longest cycle length $2\delta(G)+k$.
Our algorithm simply starts with some cycle in $G$ and applies the result of~\cite{FominGSS22} to enlarge it exhaustively.
It stops when either a cycle is of length at least $(2+\sigma_1)\cdot \delta(G)$, or the particular structure of $G$ is found.

The crucial observation here is that if a long cycle is found, we can trivially find a good approximation.
If $\sigma_1 \cdot \delta(G)$ is, e.g., less than $\sigma_1/10 \cdot f(k)$, then $10\delta(G)<f(k)$.
If we just apply the blackbox $f$-approximation algorithm for the \probKCycle problem, we get a cycle of length at least $f(2\delta(G)+k)\ge f(k)\ge 2\delta(G)+4/5 \cdot f(k)$.
Hence, by taking the longest of the cycles of length $(2+\sigma_1)\cdot f(k)$ and of length $f(2\delta(G)+k)$ we always achieve a good approximation guarantee on $k$.

The most important part of the algorithm is employed when the ``particular structure'' outcome is received from the structural lemma applied on $G$ and the current cycle $C$.
Here we need to be specific about this structure, and the outcome can be of two types.
The first outcome is a bounded vertex cover of the graph.
This vertex cover is of size at most $\delta(G)+2(k'+1)$, where $k'\ge 0$ is such that $|V(C)|=2\delta(G)+k'$.
Such vertex cover is a guarantee that $C$ is not much shorter than the longest cycle in $G$: the length of the longest cycle is bounded by twice the vertex cover size, so $k\le 4(k'+1)$.
Hence, $k'=\Omega(k)$ and $C$ is a sufficient approximation.

The second, and last, structural outcome is the Dirac decomposition, defined in \cite{FominGSS22}.
Basically, this decomposition is obtained by finding a small separator of $G$ (that consists of just two subpaths $P_1,P_2$ of the cycle $C$), and the parts of this decomposition are the connected components of $G$ after the separation.
The main result on Dirac decomposition proved in \cite{FominGSS22} is that there always exists a longest cycle that contains an edge in at least one of these parts.

While the definition and properties of Dirac decomposition may seem quite involved, our algorithm does not even require the Dirac decomposition of $G$ to be found.
In fact, we show a new nice property of Dirac decomposition.
It guarantees that if a Dirac decomposition for $G$ exists, then there also exists a $2$-vertex separator $\{u,v\}$ of $G$ that also divides the longest cycle in $G$ into almost even parts.
Our contribution is formulated in the following lemma.

\begin{restatable}{lemma}{lemsinglepath}\label{lemma:single_path_lemma}
	Let $G$ be a $2$-connected graph and $P_1, P_2$ induce a \cyclebananadec  for a cycle $C$ of length at most $2\delta(G)+\kappa$ in $G$ such that $2\kappa \le \delta(G)$.
	If there exists a cycle of length at least $2\delta(G)+k$ in $G$, then there exist $u,v \in V(G)$ such that
	\begin{itemize}
		\item $G-\{u,v\}$ is not connected, and
		\item there is an $(u,v)$-path of length at least $\delta(G)+(k-2)/4$ in $G$.
	\end{itemize} 
\end{restatable}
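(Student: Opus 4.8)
The plan is to distill, from the \cyclebananadec, a single $2$-vertex separator of $G$, and then to read off the required $(u,v)$-path by cutting a longest cycle along it. Write $\delta := \delta(G)$ and fix a longest cycle $C^*$ in $G$; by Dirac's theorem $|V(C^*)| = 2\delta + k$ with $k \ge 0$. Put $S := V(P_1) \cup V(P_2)$; by the definition of a \cyclebananadec for $C$, the set $S$ separates $G$, and the parts of the decomposition are the connected components $D_1, \dots, D_r$ of $G - S$. Invoking the main structural property of the \cyclebananadec from~\cite{FominGSS22} — that some longest cycle uses an edge lying inside one of the parts — we may assume that $C^*$ uses an edge both of whose endpoints lie in a single part $D := D_j$.

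The heart of the argument, and the step I expect to be the main obstacle, is to replace the a priori large separator $S$ by a two-element one. Here one exploits that $P_1$ and $P_2$ are subpaths of $C$, together with the banana-like way in which the parts attach to them: walking around $C$, the vertices outside $S$ split into (at most) two arcs, each lying inside a single part, and — unpacking the defining properties of the \cyclebananadec — one shows that $D$ attaches to $P_1$ only through one endpoint $u$ of $P_1$ and to $P_2$ only through one endpoint $v$ of $P_2$; equivalently, every edge of $G$ with one endpoint in $V(D)$ has its other endpoint in $V(D) \cup \{u,v\}$. It follows that $\{u,v\}$ separates $G$: one side contains $V(D)$, while the other contains $(V(P_1) \cup V(P_2)) \setminus \{u,v\}$ together with the remaining parts, which is nonempty since the decomposition is nondegenerate. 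This already yields the first conclusion, that $G - \{u,v\}$ is disconnected. Making the informal phrase ``$D$ attaches only through the endpoints'' precise is exactly what requires digging into the definition of a \cyclebananadec, and one must also handle separately the part used by $C^*$ in case it is not one of the (at most two) parts visited by $C$ itself, where instead one argues $|N_G(V(D))| = 2$ directly from the decomposition.

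It remains to extract a long $(u,v)$-path. Since $C^*$ uses an edge inside $D$, it meets the side $X$ of $\{u,v\}$ containing $V(D)$. If $C^*$ also meets the other side $Y$, then, $\{u,v\}$ being a $2$-cut, $C^*$ passes through $u$ and through $v$, each exactly once, hence decomposes into two internally disjoint $(u,v)$-paths $Q_X \subseteq G[X \cup \{u,v\}]$ and $Q_Y \subseteq G[Y \cup \{u,v\}]$ with $|E(Q_X)| + |E(Q_Y)| = |V(C^*)| = 2\delta + k$; the longer of the two has length at least $\delta + k/2 \ge \delta + (k-2)/4$, which is what we want. The only remaining possibility is that $V(C^*) \subseteq V(D) \cup \{u,v\}$, i.e.\ the longest cycle hides inside a single part; this degenerate configuration is disposed of by a short separate argument, working inside $G[V(D) \cup \{u,v\}]$ (which by the previous step has minimum degree at least $\delta$ at every vertex of $V(D)$) and using the hypothesis $2\kappa \le \delta$, and it is precisely the slack between $k/2$ and $(k-2)/4$ that leaves room for this case, where only one of the two $(u,v)$-subpaths of the longest cycle turns out to be usable.

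In short, the only genuinely new ingredient is the second step, contracting $V(P_1) \cup V(P_2)$ to a $2$-vertex separator that isolates $D$ on one side; given that, the path is obtained simply by cutting a longest cycle at $u$ and $v$ and keeping the longer arc, with a minor auxiliary argument settling the degenerate configuration.
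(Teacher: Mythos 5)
Your plan hinges on two claims that the \cyclebananadec does not actually provide, and both fail in general. First, the ``two-vertex isolation'' of a part: you assert that a connected component $D$ of $G-V(P_1\cup P_2)$ attaches to the rest of $G$ only through one endpoint $u$ of $P_1$ and one endpoint $v$ of $P_2$, i.e.\ $N_G(V(D))\subseteq\{u,v\}$. For a component of type \ref{enum:cycle_tunnel_path_bic}, K\H{o}nig's theorem only gives one vertex covering all $V(D)$--$V(P_1)$ edges and one covering all $V(D)$--$V(P_2)$ edges; these need not be endpoints of the paths and may even lie \emph{inside} $D$. Worse, for a separable component of type \ref{enum:cycle_tunnel_path_cut_left} or \ref{enum:cycle_tunnel_path_cut_right} only one of the two paths has a single attachment vertex, while the other path may have arbitrarily many neighbours in the non-leaf-blocks of $D$ (only inner vertices of leaf-blocks are forbidden from seeing it), so no pair of vertices need separate the whole part from the rest. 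What is $2$-isolable is a \Cyclebanana (a \ref{enum:cycle_tunnel_path_bic}-type component, or a leaf-block of a separable component, cut off by its cut vertex together with the unique attachment vertex on the other side), and this is why the paper's proof works with \cyclebanana{s} and needs a case analysis. In particular the configuration you dismiss as ``degenerate'' --- the longest cycle hiding inside a part --- is the bulk of the paper's argument (its Cases 1--3), and its hardest subcase, the cycle lying in a \emph{non-leaf-block} of a separable component, has no ready-made pair $\{u,v\}$ at all: the paper must invoke Menger's theorem (two separating vertices or three disjoint paths) and, in the second branch, build a new long cycle through $P'$ and $P''$ that enters a \cyclebanana. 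Your treatment of that case, which relies on $G[V(D)\cup\{u,v\}]$ having minimum degree $\geq\delta(G)$ on $V(D)$, collapses together with the attachment claim.

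Second, you ``may assume'' that the longest cycle $C^*$ itself uses an edge inside a part by citing the structural property of~\cite{FominGSS22}. That property (their Lemma~19) is proved only when the offset of the long cycle is much smaller than $\delta(G)$, which is not available here ($k$ can be of order $n$); this is precisely why the paper proves the modified \Cref{lemma:dirac_cycle_edge_of_banana}, which, assuming the long cycle meets $V(P_1)\cup V(P_2)$, yields a possibly different cycle of length only $2\delta(G)+k/2-1$ entering a \Cyclebanana. Combining that weakened cycle with the consecutiveness property (\Cref{lemma:dirac_cycle_banana_consecutive}) and cutting at the two entry vertices is exactly what produces the bound $\delta(G)+(k-2)/4$; your accounting, which promises $\delta(G)+k/2$ in the main case and reserves the slack for the degenerate one, indicates this loss has been overlooked. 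So both pillars of the proposal --- the two-vertex separator for a whole part and the edge-in-a-part assumption for the full-length longest cycle --- are unjustified, and filling them in is essentially the content of the paper's proof.
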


Our algorithm employs Lemma~\ref{lemma:single_path_lemma} in the following way.
Since there are $\Oh(|V(G)|^2)$ vertex pairs in $G$, our algorithm iterates over all vertex pairs.
If a pair $u, v$ separates the graph into at least two parts, then our algorithm finds a long $(u,v)$-path that contains vertices in only one of the parts.
Formally, it iterates over all connected components in $G-\{u,v\}$.
For a fixed connected component $H$, our algorithm applies the algorithm of \Cref{thm:erdgallaiappxtheorem} to the graph $G[V(H)\cup \{u,v\}]+uv$ (the edge $uv$ is added to ensure $2$-connectivity), to find an approximation of the longest $(u,v)$-path.
By \Cref{lemma:single_path_lemma}, if $u,v$ is the required separating pair, then for at least one $H$ the length of the found $(u,v)$-path should be $\delta(G)+\Omega(k)$.
And if such a path is found, a sufficiently long $(u,v)$-path outside $H$ in $G$ is guaranteed by Erd\H{o}s-Gallai theorem.
Together, these two paths form the required cycle of length $2\delta(G)+\Omega(k)$.

With that, the proof overview of \Cref{thm:diracappxtheorem} is finished.
The formal proof is present in \Cref{sec:cycle}.

\subsection{Approximating long $(s,t)$-paths}

While the algorithm of \Cref{thm:diracappxtheorem} does not use the underlying Dirac decomposition explicitly, in the case of finding $(s,t)$-paths (and to prove \Cref{thm:erdgallaiappxtheorem}), we require deeper usage of the obtained graph decomposition.
While the Dirac decomposition of Fomin et al.\ was originally used in \cite{FominGSS22} to find long cycles above $2\delta(G)$, for finding $(s,t)$-paths above $\delta(G-\{s,t\})$ the authors introduced the Erd\H{o}s-Gallai decomposition.

In the formal proof of \Cref{thm:erdgallaiappxtheorem} in \Cref{sec:eg}, we give a complete definition of Erd\H{o}s-Gallai decomposition.
In this overview, we aim to avoid the most technical details in order to provide an intuition of the structure of the decomposition and how our algorithm employs it. 

Similarly to Dirac decomposition, the Erd\H{o}s-Gallai decomposition is obtained through the routine that, given a graph $G$ and an $(s,t)$-path inside it, either enlarges the path or reports that two subpaths $P_1$ (that starts with $s$) and $P_2$ (that starts with $t$) of the given path induce (when deleted) an Erd\H{o}s-Gallai decomposition in $G$.
This routine can be applied to an $(s,t)$-path until it reaches $(1+\sigma_2)\cdot \delta(G-\{s,t\})$ in length (specifically, $\sigma_2=\frac{1}{4}$, see \Cref{lemma:st_path_or_tunnel}; in this overview, we also skip the case of a Hamiltonian $(s,t)$-path for brevity).
Note that, in contrast to the cycle enlargement routine of the Dirac decomposition, here the bounded vertex cover outcome is not possible.
Similarly to the algorithm of the previous subsection, the only non-trivial part of the algorithm is dealing with the Erd\H{o}s-Gallai decomposition outcome.
In the other case, a single run of the black-box $f$-approximation algorithm for \probKCycle provides the desired approximation immediately.

The main property of this decomposition due to \cite{FominGSS22} is as follows: If an $(s,t)$-path of length at least $\delta(G-\{s,t\})+k$ exists in $G$, then there necessarily exists the path of length at least $\delta(G-\{s,t\})+k$ that goes through one of the connected components in the decomposition.
Moreover, for each of the connected components $G_i$ there is exactly one pair of distinct entrypoints $s_i, t_i$: if an $(s,t)$-path in $G$ goes through $G_i$, it should necessary enter $G_i$ in $s_i$ (or $t_i$) once and leave $G_i$ in $t_i$ (or $s_i$) exactly once as well.

Additionally to that, we have that the degree of each $G_i$ is not much different from $G$: $\delta(G_i-\{s_i,t_i\})\ge \delta(G-\{s,t\})-2$ holds true.
And this constant difference is always compensated by paths from $s$ and $t$ to $s_i$ and $t_i$: if we succeed to find an $(s_i,t_i)$-path of length at least $\delta(G_i-\{s_i,t_i\})+k_i$ inside $G_i$, we can always complete it with \emph{any} pair of disjoint paths from $\{s,t\}$ to $\{s_i,t_i\}$ into an $(s,t)$-path of length $\delta(G-\{s,t\})+k_i$ in $G$.
Should this pair be longer than the trivial lower bound of $2$, it grants the additional length above $\delta(G-\{s,t\})+k_i$.

The previous paragraph suggests the following approach for our approximation algorithm: for each $G_i,s_i,t_i$, our algorithm applies itself recursively to find an $(s_i,t_i)$-path of length $\delta(G_i-\{s_i,t_i\}+\Omega(f(k_i))$, where $k_i$ comes from the longest $(s_i,t_i)$-path length in $G_i$.
Since the other part of the additional length comes from two disjoint paths between $\{s,t\}$, and $\{s_i,t_i\}$, we would like to employ the black-box $f$-approximation algorithm to find the $f$-approximation of this pair of paths.

Unfortunately, finding such pair of paths reduces only to finding a long cycle through a given pair of vertices (it is enough to glue $s$ with $t$ and $s_i$ with $t_i$ in $G$, and ask to find the long cycle through the resulting pair of vertices).
In their work, Fomin et al.\ have shown that the problem of finding such a cycle of length at least $k$ can be done in $2^{\Oh(k)}\cdot \polyn$ time.
However, this is of little use to us, as $k$ is only bounded by $\Oh(\delta(G))$, but we require polynomial time.
Simultaneously, we do not know of any way to force the black-box algorithm to find an $f$-approximation for a cycle through the given pair of vertices.

These arguments bring us away from the idea of a recursive approximation algorithm.
Instead, our approximation algorithm will apply the black-box algorithm to a single ``complete-picture'' graph that is obtained according to the structure brought by the Erd\H{o}s-Gallai decomposition.
However, the recursion here remains in the sense that we apply the path-enlarging routine to each component of the decomposition.
This brings us to the idea of the recursive decomposition, which we define as the \emph{nested Erd\H{o}s-Gallai decomposition} in \Cref{sec:eg}.
This decomposition can be seen as a tree, where the root is the initial triple $(G,s,t)$, the children of a node represent the triples $(G_i, s_i,t_i)$ given by the Erd\H{o}s-Gallai decomposition, and the leaves of this decomposition are the graphs $G_i$ where sufficient approximations of long $(s_i,t_i)$-paths are found (by taking the longest of $(1+\sigma_2)\cdot \delta(G-\{s_i,t_i\})$-long path from the enlarging routine and the approximation obtained from the blackbox algorithm).
A schematic picture of this novel decomposition is present in \Cref{fig:nested_banana_example}.

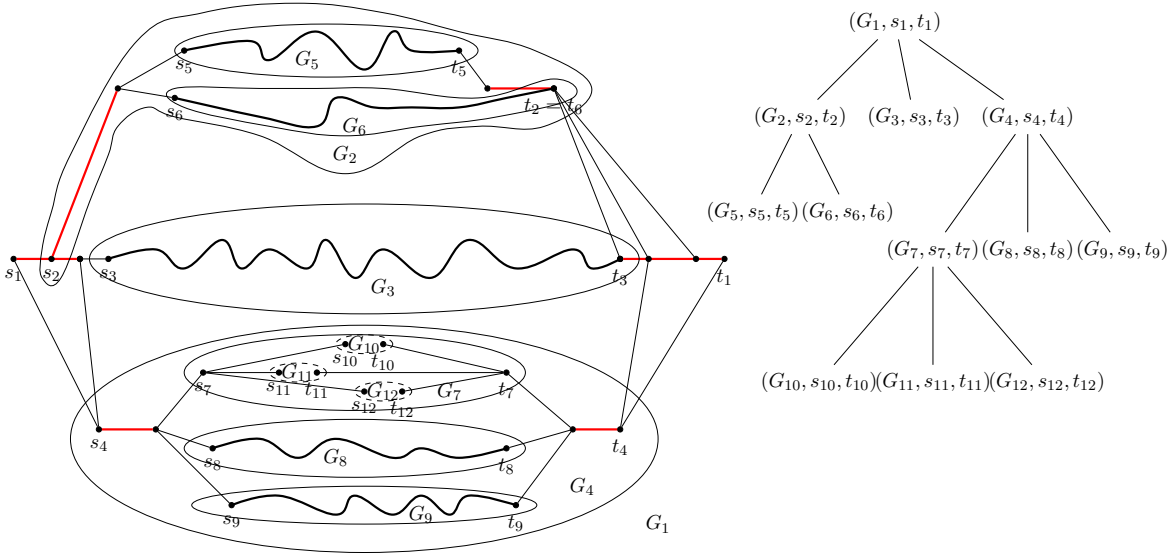
\begin{figure}[ht]
	\centering
	\scalebox{0.7}{\begin{tikzpicture}[scale=0.36]
	\tikzstyle{vertex}=[draw,circle,fill,inner sep=0pt,minimum size=0.1cm]
	\tikzstyle{edge}=[]
	\tikzstyle{longpath}=[very thick]
	\tikzstyle{suffixprefix}=[red,very thick,edge]
	\tikzstyle{borders}=[thin]
	\tikzstyle{smalldegree}=[dashed]
	\tikzstyle{treenode}=[]
	\tikzstyle{treeedge}=[]
	
\node [vertex,label=below:$s_1$] (s) at (-5.5,0.5) {};
\node [vertex] (a) at (-2,0.5) {};
\node [vertex,label=below:$s_3$] (s2) at (-0.5,0.5) {};
\node [vertex] (sp) at (a) {};
\draw[suffixprefix] (s) -- (sp);
\node [vertex,label=below:$t_1$] (t) at (32,0.5) {};
\node [vertex] (tp) at (26.5,0.5) {};

\node [vertex,label=below:$s_2$] (s1) at (-3.5,0.5) {};
\node [vertex,label=below:{$t_3$}] (t2) at (tp) {};

\draw[suffixprefix] (t) -- (tp);
\draw[edge] (sp)--(s2);

\draw[borders]  plot[smooth cycle, tension=.7] coordinates {(-3.5,3)  (-4,1) (-4,-0.5)  (-3,-0.5) (-1.5,3.5) (-0.5,6) (0,7.5)  (1.5,8.5) (3,8) (7.5,7) (12,5) (16,7) (19,7.5) (22.5,7.5)       (25,9.5) (22,11.5) (18,12.5) (14,13.5) (9.5,14) (4.5,13.5) (1,12) (-2.5,6) };
\node [vertex,label=below:{$t_2=t_6$}] (t1) at (23,9.5) {};
\node [vertex] (b) at (28,0.5) {};
\draw[edge] (t1) -- (b);
\node [vertex] (sp1) at (0,9.5) {};
\draw[suffixprefix] (s1) -- (sp1);
\draw[borders]  (13,0.5) ellipse (14.5 and 2.9);

\node [vertex,label=below:$s_4$] (s3) at (-1,-8.5) {};
\draw[edge] (a) -- (s3);
\draw[edge] (s) -- (s3);
\draw[longpath]  plot[smooth, tension=.7] coordinates {(s2) (2,1) (3.5,0) (5.5,1.5) (6.5,0) (8,1) (10,0) (11,1.5) (12.5,-0.5) (14,1) (16.5,-0.5) (18.5,1.5) (21,0) (24,1) (25,0) (tp)};

\node [vertex,label=below:$t_4$] (t3) at (26.5,-8.5) {};
\draw[edge] (t) -- (t3);

\node [vertex] (d) at (30.5,0.5) {};

\draw[edge] (d) -- (t1);
\draw[borders]  (13,-9) ellipse (15.5 and 6);
\node [vertex] (sp3) at (2,-8.5) {};
\draw[suffixprefix] (s3) -- (sp3);

\node [vertex] (tp3) at (24,-8.5) {};
\draw[suffixprefix] (t3) -- (tp3);
\node [vertex] (tp1) at (19.5,9.5) {};
\draw[suffixprefix] (t1) -- (tp1);
\draw[borders]  (11,11.5) ellipse (8 and 1.4);
\node [vertex,label=below:$s_5$] (s4) at (3.5,11.5) {};
\node [vertex,label=below:$t_5$] (t4) at (18,11.5) {};
\draw  (sp1) edge (s4);
\draw  (tp1) edge (t4);
\draw[longpath]  plot[smooth, tension=.7] coordinates {(s4) (7,12) (8,11) (10.5,12.5) (13,10.5) (14.5,12.5) (15.5,11.5) (t4)};
\draw[borders]  plot[smooth cycle, tension=.7] coordinates {(19,9) (23,10) (24,9) (20,8) (16,7.5) (12,7) (8,7.5) (3.5,8.5) (3,9.5) (7.5,9.5) (10.5,9.5) (14,9.5)  };
\node[vertex] (t5) at (t1) {};
\node [vertex,label=below:$s_6$] (s5) at (3,9) {};
\draw [edge] (sp1) -- (s5);
\draw [longpath] plot[smooth, tension=.7] coordinates {(s5) (7.5,8) (10.5,7.5) (11.5,9) (13.5,8.5) (15.5,8.5) (18,8.5) (20.5,9) (t1)};
\draw  [borders](12.5,-5.5) ellipse (9 and 2);
\draw[borders]  (12.5,-9.5) ellipse (9 and 1.53);
\draw [borders] (13,-12.5) ellipse (9.1029 and 1);
\node [vertex,label=below:$s_7$] (s6) at (4.5,-5.5) {};
\node [vertex,label=below:$t_7$] (t6) at (20.5,-5.5) {};
\node [vertex,label=below:$s_8$] (s7) at (5,-9.5) {};
\node [vertex,label=below:$t_8$] (t7) at (20.5,-9.5) {};
\node [vertex,label=below:$s_9$] (s8) at (6,-12.5) {};
\node [vertex,label=below:$t_9$] (t8) at (21,-12.5) {};
\draw[borders,smalldegree]  (13,-4) ellipse (1.5 and 0.5);
\draw[borders,smalldegree]  (14,-6.5) ellipse (1.5 and 0.5);
\draw[borders,smalldegree]  (9.5,-5.5) ellipse (1.5 and 0.5);
\draw[longpath]  plot[smooth, tension=.7] coordinates {(s7) (7.5,-9) (9.5,-10) (11.5,-9) (14.5,-10) (16,-9.5) (18,-10) (t7)};
\draw[longpath]  plot[smooth, tension=.7] coordinates {(t8) (18.5,-12) (17.5,-13) (16,-12) (14.5,-13) (12.5,-12) (11.5,-13) (8.5,-12) (s8)};
\node [vertex,label=below:$s_{10}$] (s9) at (12,-4) {};
\node [vertex,label=below:$t_{10}$] (t9) at (14,-4) {};
\node [vertex,label=below:$s_{11}$] (s10) at (8.5,-5.5) {};
\node [vertex,label=below:$t_{11}$] (t10) at (10.5,-5.5) {};
\node [vertex,label=below:$s_{12}$] (s11) at (13,-6.5) {};
\node [vertex,label=below:$t_{12}$] (t11) at (15,-6.5) {};
\draw[edge] (s6) -- (s9);
\draw[edge] (s6) -- (s10);
\draw[edge] (s6) -- (s11);
\draw[edge] (t6) -- (t9);
\draw[edge] (t6) -- (t10);
\draw[edge] (t6) -- (t11);
\draw[edge] (sp3) -- (s6);
\draw[edge] (sp3) -- (s7);
\draw[edge] (sp3) -- (s8);
\draw[edge] (tp3) -- (t6);
\draw[edge] (tp3) -- (t7);
\draw[edge] (tp3) -- (t8);

\node at (10,11) {$G_5$};
\node at (12.5,7.5) {$G_6$};
\node at (14,-1) {$G_3$};
\node at (17.5,-6.5) {$G_7$};
\node at (14,-6.5) {$G_{12}$};
\node at (9.5,-5.5) {$G_{11}$};
\node at (13,-4) {$G_{10}$};
\node at (11.5,-10) {$G_8$};
\node at (16,-13) {$G_9$};
\node at (12,6) {$G_2$};
\node at (24.5,-11.5) {$G_4$};
\draw[edge] (tp) -- (t1);
\draw[edge] (b) -- (t3);
\node at (28.5,-13.5) {$G_1$};
\begin{scope}[shift={(33,31)}]
\node [treenode] (T1) at (8,-18) {$(G_1,s_1,t_1)$};
\node [treenode] (T2) at (3,-23) {$(G_2,s_2,t_2)$};
\node [treenode] (T3) at (9,-23) {$(G_3,s_3,t_3)$};
\node [treenode] (T4) at (15,-23) {$(G_4,s_4,t_4)$};
\draw[treeedge] (T1)--(T2);
\draw[treeedge] (T1)--(T3);
\draw[treeedge] (T1)--(T4);
\node [treenode] (T5) at (0.5,-28) {$(G_5,s_5,t_5)$};
\node [treenode] (T6) at (5.5,-28) {$(G_6,s_6,t_6)$};
\draw[treeedge] (T2) -- (T5);
\draw[treeedge] (T2) -- (T6);
\node [treenode] (T7) at (10,-30) {$(G_7,s_7,t_7)$};
\node [treenode] (T8) at (15,-30) {$(G_8,s_8,t_8)$};
\node [treenode] (T9) at (20,-30) {$(G_9,s_9,t_9)$};
\draw[treeedge] (T4) -- (T7);
\draw[treeedge] (T4) -- (T8);
\draw[treeedge] (T4) -- (T9);

\node [treenode] (T10) at (4,-37) {$(G_{10},s_{10},t_{10})$};
\node [treenode] (T11) at (10,-37) {$(G_{11},s_{11},t_{11})$};
\node [treenode] (T12) at (16,-37) {$(G_{12},s_{12},t_{12})$};
\draw[treeedge] (T7) -- (T10);
\draw[treeedge] (T7) -- (T11);
\draw[treeedge] (T7) -- (T12);
\end{scope}
\end{tikzpicture}}
	\caption{A schematic example of a nested \bananadec (left) and the corresponding recursion tree (right).
		Red straight paths inside $G_i$ denote the pair of paths inducing an \bananadec in $G_i$.
		Bold $(s_i,t_i)$-paths are sufficient approximations of the longest $(s_i,t_i)$-paths in $G_i$.
		Dashed contours correspond to $G_i$ with constant $\delta(G_i-\{s_i,t_i\})$, which is one of a few technical cases in the proof.}\label{fig:nested_banana_example}
\end{figure}

In \Cref{sec:eg}, we show that a long path found inside a leaf $(G_i,s_i,t_i)$ of the decomposition can be contracted into a single edge $s_it_i$.
Moreover, if $(G_j,s_j,t_j)$ is a child of a $(G_i, s_i, t_i)$ in the decomposition, and the longest pair of paths from $\{s_i,t_i\}$ to $\{s_j,t_j\}$ is just a pair of edges (so it does not grant any additional length as described before), we contract these edges.
The crucial in our proof is the claim that after such a contraction, if an $(s,t)$-path of length $\delta(G-\{s,t\})+k$ exists in the initial graph, an $(s,t)$-path of length at least $\Omega(k)$ exists in the graph obtained with described contractions.
After doing all the contractions, the algorithm applies the black-box algorithm to the transformed graph and finds an $(s,t)$-path of length $f(\Omega(k))$ (which is $\Omega(f(k))$ by subadditivity) inside it.

The final part of our algorithm (and the proof of \Cref{thm:erdgallaiappxtheorem}) is the routine that \emph{transforms} this $(s,t)$-path inside the \emph{contracted} graph $G$ into a path of length $\delta(G-\{s,t\})+\Omega(f(k))$ in the \emph{initial} graph $G$.
In this part, we prove that it is always possible to transform an $(s,t)$-path of length $r$ in the contracted graph into a path of length $\Omega(r)$ that goes through at least one edge corresponding to a leaf of the nested Erd\H{o}s-Gallai decomposition (hence, to a good approximation of $(s_i,t_i)$-path inside $G_i$).
Finally, we observe that reversing the contractions in $G$ transforms this path into the required approximation.

This finishes the overview of the proof of \Cref{thm:erdgallaiappxtheorem}.
\Cref{sec:eg} contains it fully, with all technical details and formal proofs.

\section{Preliminaries}\label{sec:prelim} 
In this section, we define the notation used throughout  the paper and provide some auxiliary results. We use $[n]$ to denote the set of positive integers $\{1,\ldots,n\}$. We remind that a function 
$f\colon D\rightarrow \mathbb{R}$ is \emph{subadditive} if $f(x+y)\leq f(x)+f(y)$ for all $x,y\in D\subseteq\mathbb{R}$. We denote the set of all nonnegative real numbers by $\mathbb{R}_+$.

Recall that our main theorems are stated for arbitrary nondecreasing subadditive functions $f: \mathbb{R}_+ \to \mathbb{R}$, such that an algorithm achieving the respective approximation exists.
Throughout the proofs, we will additionally assume that $f(x)\le x$ for every $x \in \mathbb{R}_+$.
For any integer $x \ge 3$, this is already implied by the statement, since a consistent approximation algorithm cannot output an $(s, t)$-path (respectively, cycle) of length greater than $x$ in a graph where the longest $(s, t)$-path (respectively, cycle) has length $x$.
However, for a general function $f(\cdot)$ this does not necessarily hold on the whole $\mathbb{R}_+$.
If this is the case, for clarity of the proofs we redefine $f(x):=\min\{x,f(x)\}$ for every $x \in \mathbb{R}_+$.
Clearly, $f$ remains subadditive and non-decreasing, while also imposing exactly the same guarantee on the approximation algorithm.

\subparagraph{Graphs.}
We consider only finite simple undirected graphs and use the standard notation (see, e.g.,  the book of Diestel~\cite{Diestel}).
We use $V[G]$ and $E(G)$ to denote the sets of vertices and edges, respectively, of a graph $G$.
Throughout the paper, we use $n$ and $m$ to denote the number of vertices and the number of edges of a considered graph if it does not create confusion. 
For a set $X\subseteq V(G)$, $G[X]$ is used to denote the subgraph of $G$ \emph{induced} by $X$ and we write $G-X$ to denote the subgraph of $G$ induced by $V(G)\setminus X$. For a single-vertex set $\{v\}$, we write $G-v$ instead of $G-\{v\}$.   For a vertex $v$, $N_G(v)$ denotes the \emph{(open) neighborhood} of $v$, that is, the set of the neighbors of $v$ in $G$.  
For a set  set $X\subseteq V(G)$, $N_G(X)=\big(\bigcup_{v\in X}N_G(v)\big)\setminus X$. 
The \emph{degree} of a vertex $v$ is $\deg_G(v)=|N_G(v)|$. We denote by  $\delta(G)=\min_{v\in V(G)}\deg_G(v)$  the \emph{minimum degree} of $G$.
We may omit the subscript in the above notation if the considered graph is clear from the context. 
We remind that the \emph{edge contraction} operation for $uv\in E(G)$ replaces $u$ and $v$ by a single vertex $w_{uv}$ that is adjacent to every vertex of $N_G(\{u,v\})$.   
A set of vertices $X\subseteq V(G)$ is \emph{vertex cover} if every edge of $G$ has at least one endpoint in $X$.

A \emph{path} $P$ in a graph $G$ is a subgraph of $G$ whose set of vertices can be written as $\{v_0,\ldots,v_k\}$ where $E(P)=\{v_{i-1}v_i\mid  i\in[k]\}$. We may write a path $P$ as the sequence of its vertices $v_0,\ldots, v_k$. The vertices $v_0$ and $v_k$ are called \emph{endpoints} of $P$ and other vertices are \emph{internal}. For a path $P$ with endpoints $s$ and $t$, we say that $P$ is an $(s,t)$-path.   
Two paths $P_1$ and $P_2$ are \emph{(vertex-)disjoint} if they have no common vertex and \emph{internally disjoint} if no internal vertex of either of the paths is a vertex of the other path. 
A \emph{cycle} $C$ in $G$ is a subgraph of $G$ with $V(C)=\{v_1,\ldots v_k\}$ and $E(C)=\{v_{i-1}v_i\mid i\in[k]\}$, where $k\geq 3$ and it is assumed that $v_0=v_k$.
The \emph{length} of a path (a cycle, respectively) is the number of its edges. 
For two internally disjoint paths $P_1=v_0,\ldots,v_k$ and $P_2=u_0,\ldots,v_s$ sharing exactly one endpoint $v_k=u_0$, we write $P_1P_2$ to denote their \emph{concatenation}, that is, the path $v_0,\ldots,v_k,u_1,\ldots,u_s$. If $P_1$ and $P_2$ share both endpoints and at least one of them has internal vertices, we write $P_1P_2$ to denote the cycle composed by the paths. 
A path $P$ (a cycle $C$, respectively) is \emph{Hamiltonian} if $V(P)=V(G)$ ($V(C)=V(G)$, respectively).  

Recall that $G$ is \emph{connected} if for every two vertices $s$ and $t$, $G$ contains an $(s,t)$-path. A \emph{(connected) component} of $G$ is an inclusion maximal connected induced subgraph.  A connected graph $G$ with at least three vertices is \emph{$2$-connected} if for every $v\in V(G)$, $G-v$ is connected. A vertex $v$ of a connected graph $G$ with at least two vertices is a \emph{cut-vertex} if $G-v$ is disconnected. 
 A \emph{block} of a connected graph $G$ is an inclusion maximal induced subgraph without cut-vertices. Note that if $G$ has at least two vertices, then each block is either isomorphic to $K_2$ or a 2-connected graph. 
 For a block $B$ of $G$, a vertex $v\in V(B)$ that is not a cut-vertex of $G$ is called \emph{inner}. 
  Blocks in a connected graph form a tree  structure (viewing each block as a vertex of the forest and two blocks are adjacent if they share a cut-vertex). The blocks corresponding to the leaves of the block-tee, are called \emph{leaf-blocks}.  
For $s,t\in V(G)$, $S\subseteq V(G)\setminus \{s,t\}$ is an \emph{$(s,t)$-separator} if $G-S$ has no $(s,t)$-path; we also say that $S$ \emph{separates} $s$ from $t$. We also say that $S$ separates two sets of vertices $A$ and $B$ if $S$ separates each vertex of $A$ from every vertex of $B$.

\medskip
The following useful observation follows immediately from Menger's theorem (see, e.g.,~\cite{Diestel,KleinbergT06}). 

 \begin{lemma}\label{lemma:cycle_to_path}
For any $2$-connected graph $G$ with a cycle of length $\cyclelength$, there is a path of length at least $L/2$ between any pair of vertices in $G$. Moreover, given a cycle $C$ and two distinct vertices $s$ and $t$, an $(s,t)$-path of length at least $|V(C)|/2$ can be constructed in polynomial time. 
\end{lemma}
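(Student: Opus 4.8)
The plan is to reduce the whole statement to the fan version of Menger's theorem. Fix a cycle $C$ of $G$ of length $L$ (so $|V(C)| = L \ge 3$) and two distinct vertices $s,t \in V(G)$. First I would dispose of the easy case $s,t \in V(C)$: then $s$ and $t$ split $C$ into two internally disjoint $(s,t)$-arcs whose lengths sum to $L$, so the longer one is an $(s,t)$-path of length at least $L/2$, and this is trivially constructive.

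For the general case I would introduce an auxiliary graph $G^{+} := G + z$, where $z$ is a new vertex adjacent to exactly $s$ and $t$. One checks quickly that $G^{+}$ is again $2$-connected: removing $z$ leaves $G$, and removing any $v \in V(G)$ leaves $G-v$ connected with $z$ still attached through $s$ or $t$. Now I apply the fan version of Menger's theorem in $G^{+}$ to the vertex $z$ and the set $U := V(C)$: since $G^{+}$ is $2$-connected and $|U| \ge 2$, there is a $2$-fan from $z$ to $U$, i.e.\ two paths from $z$ to $U$ that share only $z$ and each meet $U$ only in their endpoint (in particular with distinct endpoints). Because $N_{G^{+}}(z) = \{s,t\}$ and the two fan paths are internally disjoint, one of them leaves $z$ via $s$ and the other via $t$; deleting $z$ turns them into two vertex-disjoint paths $P_s$ (from $s$ to some $a \in V(C)$) and $P_t$ (from $t$ to some $b \in V(C)$) in $G$ with $a \ne b$, where $P_s$ meets $V(C)$ only in $a$ and $P_t$ meets $V(C)$ only in $b$. (If, say, $s \in V(C)$, then the fan path through $s$ is forced to be just $z,s$, so $a=s$ and $P_s$ is trivial — this causes no trouble.) The vertices $a$ and $b$ split $C$ into two arcs of total length $L$; I choose the longer arc $A$, of length at least $L/2$, and concatenate $P_s$, $A$, $P_t$. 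This concatenation is a genuine simple $(s,t)$-path in $G$, because $P_s$ and $P_t$ are disjoint and each meets $A$ only in the shared endpoint ($a$, respectively $b$), and its length is $|P_s| + |A| + |P_t| \ge |A| \ge L/2$.

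For the \emph{moreover} part I would observe that every ingredient is polynomial: building $G^{+}$ is immediate, a $2$-fan from $z$ to $V(C)$ is extracted from a vertex-capacitated max flow of value $2$ between $z$ and a super-sink joined to all of $V(C)$ (such a flow exists precisely because $G^{+}$ is $2$-connected), and selecting the longer arc of $C$ and concatenating are trivial. I do not expect a genuine obstacle; the only points needing a little care are stating the fan lemma in exactly the form used, the bookkeeping that the concatenated walk is a simple path, and the degenerate subcases where $s$ or $t$ lies on $C$. If one prefers to avoid the auxiliary vertex $z$, the same result can be obtained by splitting into three cases ($s,t\in V(C)$; exactly one of $s,t$ on $C$; neither on $C$), where the first two need only connectivity plus a shortest-path argument and only the last genuinely invokes $2$-connectivity; but the $G^{+}$ reduction handles all three uniformly, which is why I would take that route.
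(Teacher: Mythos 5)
Your proposal is correct and follows exactly the route the paper intends: the paper gives no detailed proof, stating only that the lemma ``follows immediately from Menger's theorem,'' and your fan-lemma argument with the auxiliary vertex $z$ (two disjoint paths from $\{s,t\}$ to $V(C)$ meeting $C$ only at their endpoints, then the longer arc) is the standard instantiation of that remark, including the constructive max-flow step for the polynomial-time claim.
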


We observe that given an approximation algorithm  for a longest cycle, we can use it as a black box to approximate a longest path between any two vertices.

 \begin{lemma}\label{lemma:stpath_approx}
 Let $\mathcal{A}$ be a polynomial-time algorithm that finds a cycle of length at least $f(\cyclelength)$ in a graph with the longest cycle length $\cyclelength$.
Then there is a polynomial-time algorithm using $\mathcal{A}$ as a subroutine that, given a graph $G$ and two distinct vertices $s$ and $t$, finds an $(s,t)$-path of length at least  $\frac{1}{2}f(2\pathlength)$, where $\pathlength$  is the length of a longest $(s,t)$-path in $G$.
\end{lemma}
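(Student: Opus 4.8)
\textbf{Proof plan for Lemma~\ref{lemma:stpath_approx}.}
The plan is to reduce the $(s,t)$-path problem in $G$ to a longest-cycle problem in a slightly modified graph, apply $\mathcal{A}$ there, and then translate the returned cycle back into an $(s,t)$-path in $G$. The natural construction is the following ``doubling'' gadget: take two disjoint copies $G_1$ and $G_2$ of $G$, and for the copies $s_1,s_2$ of $s$ and $t_1,t_2$ of $t$, identify $s_1$ with $s_2$ and identify $t_1$ with $t_2$; call the resulting graph $G'$. Any $(s,t)$-path $P$ in $G$ of length $\ell$ gives, by taking its copy in $G_1$ followed by its copy in $G_2$ (reversed), a cycle in $G'$ of length $2\ell$; conversely any cycle in $G'$, after projecting back to $G$, decomposes into closed walks, and one shows it yields an $(s,t)$-walk, hence an $(s,t)$-path, of length at least half the cycle length. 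Thus the longest cycle in $G'$ has length exactly $2P$, where $P$ is the longest $(s,t)$-path length in $G$, and running $\mathcal{A}$ on $G'$ returns a cycle of length at least $f(2P)$.

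First I would verify the forward direction carefully: the two copies of an optimal $(s,t)$-path share only the vertices $s$ and $t$ (which are the identified vertices), so their union is genuinely a cycle of length $2P$ in $G'$, establishing that the longest cycle in $G'$ has length at least $2P$. Then I would argue the reverse inequality — no cycle in $G'$ is longer than $2P$ — which also underpins the extraction step. Given a cycle $C'$ in $G'$, project each of its edges to the corresponding edge of $G$; since every edge of $G'$ lies in exactly one of the two copies, $C'$ splits into an arc $A_1$ lying in $G_1$ and an arc $A_2$ lying in $G_2$ (possibly one of them empty), meeting only at (preimages of) $s$ and $t$. Projecting $A_1$ and $A_2$ down to $G$ gives two $(s,t)$-walks in $G$ (or, if $C'$ lies entirely in one copy, a closed walk avoiding the use of the other copy, which forces it to pass through at most one of $s,t$ and hence projects to a closed walk that we can still handle). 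Each such walk contains an $(s,t)$-path as a subwalk, so the longer of the two projected arcs has length at least $|E(C')|/2$, giving both the bound $|E(C')|\le 2P$ and the means to recover a path of length $\ge |E(C')|/2$.

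The algorithm is then: build $G'$ in polynomial time, run $\mathcal{A}$ to get a cycle $C'$ with $|E(C')|\ge f(2P)$, split $C'$ into its two monochromatic arcs, project them to $G$, take the longer arc (of length $\ge f(2P)/2$), and extract from it an $(s,t)$-path, which can only be longer than or equal to what we need after removing repeated vertices — all of this is clearly polynomial. The resulting path has length at least $\tfrac12 f(2P)$, as claimed. The main obstacle I anticipate is the bookkeeping in the extraction step: one must handle degenerate cases cleanly — a cycle $C'$ contained in a single copy $G_i$ (which, because the only shared vertices with the ``outside'' are $s$ and $t$ and $C'$ is a cycle, either avoids both $s$ and $t$ or passes through them in a controlled way), and the possibility that projecting an arc to $G$ produces a walk that revisits vertices, so that one genuinely needs the elementary fact that every $(s,t)$-walk contains an $(s,t)$-path on a subset of its edges. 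None of this is deep, but it is the part where a careless argument could slip. Alternatively, one can phrase the whole reduction via Menger's theorem in the spirit of Lemma~\ref{lemma:cycle_to_path}, but the explicit doubling gadget makes the length accounting most transparent.
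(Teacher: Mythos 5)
Your doubling gadget is exactly the paper's construction, but there is a genuine gap in how you use it. You claim that the longest cycle in $G'$ has length exactly $2\pathlength$ and that any cycle returned by $\mathcal{A}$ on $G'$ can be split into two arcs meeting at the identified vertices. Neither holds in general, because the lemma does \emph{not} assume $G$ is $2$-connected. Take $G$ to be a long cycle $C_0$ plus two pendant-like vertices $s,t$ both attached only to one vertex $a$ of $C_0$: the longest $(s,t)$-path has length $2$, yet $G'$ contains a cycle (a copy of $C_0$) of length $|C_0|\gg 4$ that avoids both identified vertices. So the upper bound $|E(C')|\le 2\pathlength$ is false, and worse, $\mathcal{A}$ may legitimately return precisely such a cycle, from which your arc-splitting extracts no $(s,t)$-path at all. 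Your parenthetical ``which we can still handle'' is exactly the unhandled case; even when $G$ is $2$-connected, turning a cycle that misses $s'$ or $t'$ into an $(s',t')$-path of at least half its length is not bookkeeping --- it is a Menger-type argument (the paper's Lemma~\ref{lemma:cycle_to_path}), which you only mention as an optional alternative.

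The paper closes both holes with one move: after building $G'$, it does not run $\mathcal{A}$ on all of $G'$ but on the block $B$ of $G'$ containing $s'$ and $t'$. Since the two copies of a longest $(s,t)$-path are internally disjoint $(s',t')$-paths, $B$ is $2$-connected and contains a cycle of length $2\pathlength$, so $\mathcal{A}$ on $B$ returns a cycle of length at least $f(2\pathlength)$; then Lemma~\ref{lemma:cycle_to_path} converts that cycle (wherever it sits in $B$, possibly missing $s'$ and $t'$) into an $(s',t')$-path of length at least $\tfrac12 f(2\pathlength)$, which lies in a single copy because $\{s',t'\}$ separates the copies, and hence projects to the desired $(s,t)$-path in $G$. (The paper also disposes of the corner case where $st$ is a bridge, since then $B=K_2$ and no cycle exists.) To repair your write-up you would need both of these ingredients: restrict the black box to the right block, and invoke the Menger-based conversion rather than arc-splitting.
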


\begin{proof}
Let $G$ be a graph and let $s,t\in V(G)$ be distinct vertices. We assume without loss of generality that $G$ is connected. Let $P$ be a longest $(s,t)$-path in $G$ and let $\pathlength$ be its length. If $st$ is a bridge of $G$, then $G$ has a unique $(s,t)$-path and and its length is one. In this case, our algorithm returns this path that trivially can be found in polynomial time. Assume that this is not the case. Then $st\notin E(P)$ and $\pathlength\geq 2$.

We construct two copies $G_1$ and $G_2$ of $G$. Denote by $s_1$ and $s_2$ the copies of $s$ in $G_1$ and $G_2$, respectively. Similarly, let $t_1$ and $t_2$ be the copies of $t$, and denote by $P_1$ and $P_2$ the copes of $P$ in $G_1$ and $G_2$, respectively. Next, we construct the graph $G'$ by unifying $s_1$ and $s_2$, and $t_1$ and $t_2$ (if $st\in E(G)$, the edges $s_1t_1$ and $s_2t_2$ are unified as well). Denote by $s'$ the vertex of $G'$ obtained from $s_1$ and $s_2$, and let $t'$ be the vertex obtained from $t_1$ and $t_2$.  Note that $P_1$ and $P_2$ are internally disjoint  $(s',t')$-paths in $G'$. In particular, this implies that $s'$ and $t'$ are vertices of the same block $B$ of $G'$, and $P_1$ and $P_2$ are paths in $B$.  Therefore, $B$ contains the cycle $C=P_1P_2$ of length $2\pathlength$. We obtain that the longest cycle length in $B$ is at least $2\pathlength$. We call $\mathcal{A}$ on $B$ and this algorithm outputs a cycle $C$ of length at least $f(2\pathlength)$. Note that $B$ is distinct from $K_2$, i.e., is 2-connected. 
 By Lemma~\ref{lemma:cycle_to_path}, $B$ has an $(s',t')$-path $P'$ of length at least $\frac{1}{2}|V(C)|\geq\frac{1}{2}f(2\pathlength)$ that can be constructed in polynomial time.  
Notice that $\{s',t'\}$ separates $V(G_1)\setminus\{s_1,t_1\}$ from $V(G_2)\setminus \{s_2,t_2\}$. Hence, $P'$ is either an $(s_1,t_1)$-path in $G_1$ or $(s_2,t_2)$-path in $G_2$.
Assume  that $P'$ is a path in $G_1$ (the other case is symmetric). Since $G_1$ is a copy of $G$, the copy of $P'$ in $G$ is an $(s,t)$-path of length at least $\frac{1}{2}f(2\pathlength)$ as required by the lemma.

Since $G'$ can be constructed in polynomial time and the unique block $B$ of $G'$ containing $s'$ and $t'$ can be found in polynomial (linear) time (see, e.g.,~\cite{KleinbergT06}), the overall running time is polynomial. 
\end{proof}

We will use as a subroutine an algorithm finding two disjoint paths between two pairs of vertices of total length at least $k$, where $k$ is the given parameter. For us, constant values of $k$ suffice, though in fact there exists an FPT algorithm for this problem parameterized by the total length. It follows as an easy corollary from the following result of \cite{FominGSS22} about \textsc{Long ($s$, $t$)-Cycle}, the problem of finding a cycle of length at least $k$ through the given two vertices $s$ and $t$.

\begin{theorem}[Theorem 4 in \cite{FominGSS22}]
    There exists an FPT algorithm for \textsc{Long ($s$, $t$)-Cycle} parameterized by $k$.
    \label{thm:long_st_cycle}
\end{theorem}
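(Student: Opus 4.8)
The plan is to establish this via a dichotomy on the treewidth of the input graph. Fix an instance $(G,s,t,k)$; since a cycle has length at least $3$, for $k\le 3$ the question is merely whether $s$ and $t$ lie in a common $2$-connected block, which is decidable in linear time, so assume $k\ge 4$. First reduce to the $2$-connected case: every cycle of $G$ lies inside a single block, so a cycle through both $s$ and $t$ exists only if $s,t$ belong to a common block $B$, and it then lies entirely in $B$; computing the block tree in linear time, we may thus assume $G=B$ is $2$-connected (returning ``no'' if the common block is a single edge, or no common block exists). It is convenient to record that a cycle through $s$ and $t$ of length $\ell$ is exactly the union of two internally vertex-disjoint $(s,t)$-paths of total length $\ell$; also, if $\dist_G(s,t)\ge\lceil k/2\rceil$ then any two internally disjoint $(s,t)$-paths (which exist by Menger's theorem) already have total length at least $k$, so that case is output immediately, and we may assume $\dist_G(s,t)<k/2$.

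Next, run a constant-factor treewidth approximation in time $2^{\Oh(g(k))}\cdot\polyn$, where $g$ is a fixed polynomial specified below: it either returns a tree decomposition of $G$ of width $\Oh(g(k))$, or certifies $\mathrm{tw}(G)>g(k)$. In the first case, a standard bottom-up dynamic program over the tree decomposition computes the maximum length of a cycle through both $s$ and $t$: a state at a bag stores a pairing of some bag vertices into endpoints of the partial path segments built so far, the lengths of those segments, and two bits recording whether $s$ and whether $t$ have already been absorbed into a segment; the introduce/forget/join transitions are the usual ones for longest-path type dynamic programs. This runs in time $2^{\Oh(g(k)\log g(k))}\cdot\polyn$ (the logarithm in the exponent can be removed with representative sets, but this is not needed), and comparing the optimum with $k$ — and tracing back a witness — settles the instance.

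It remains to argue that when $\mathrm{tw}(G)>g(k)$ the answer is always ``yes'' and a witness cycle can be produced in polynomial time. By the polynomial Excluded Grid Theorem, choosing $g$ large enough guarantees a grid minor of side length at least $k$; fix its branch-set model, a connected subgraph $W\subseteq G$ on $\Omega(k^2)$ vertices. Because $G$ is $2$-connected, Menger's theorem yields two vertex-disjoint paths, one from $s$ and one from $t$, reaching $W$ and meeting $W$ only at their endpoints (with the degenerate sub-cases — $s$ or $t$ already inside $W$, or a $2$-cut $\{u,v\}$ isolating a terminal from $W$ with $u$ or $v\in W$ — handled by routing inside the small side of the cut, which is where the distance/separator bookkeeping above is used). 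Splicing the two connector paths onto the grid model and joining their grid-endpoints by a single path that snakes through at least $k$ rows of the grid produces two internally vertex-disjoint $(s,t)$-paths of total length at least $k$, i.e.\ the desired long cycle through $s$ and $t$.

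The main obstacle is this last case: turning ``$G$ has a big grid minor and is $2$-connected'' into an \emph{explicit} long cycle through two \emph{prescribed} vertices. Since a grid minor is not a subgraph, one must reason about its branch sets, connect the terminals to the grid by vertex-disjoint paths without consuming the part of the grid needed for the long detour, and deal carefully with the corner cases where a small separator isolates a terminal from the grid, so that a ``connector path'' degenerates and the detour must be taken on one side of the separator. A conceivable alternative avoiding grids entirely would be a direct gadget reduction to \probKCycle (for which an \classFPT algorithm is known): \probTLDP on $(G,s,t,k)$ is equivalent to finding two internally disjoint $(s,t)$-paths of total length at least $k$, and one would try to encode both endpoint constraints inside an ordinary long-cycle instance — but designing the gadget so that the long cycle is genuinely forced through both terminals is itself the delicate step.
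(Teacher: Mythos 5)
You should first note that the paper never proves this statement at all: it is imported verbatim as Theorem~4 of \cite{FominGSS22} and used as a black box (only \Cref{cor:two_long_st_paths} is derived from it here), so you are in effect reproving an external result. That would be fine, except that your dichotomy is unsound at its central step. It is simply not true that a $2$-connected graph of treewidth larger than $g(k)$ (equivalently, containing a large grid minor) must contain a cycle of length at least $k$ through two prescribed vertices. Concretely, let $H$ be the $4$-cycle $p,s,q,t$ (in this cyclic order) and attach an arbitrarily large grid by joining $p$ and $q$ to two distinct grid vertices. The resulting graph $G$ is $2$-connected and has treewidth as large as the grid's, yet $N_G(s)=N_G(t)=\{p,q\}$, so any cycle through $s$ must use both edges $sp,sq$ and any cycle through $t$ must use both $tp,tq$; hence the only cycle through both terminals is the $4$-cycle itself, of length $4$. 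For every $k\ge 5$ this is a no-instance that survives all your preliminary reductions ($G$ is a single block and $\dist_G(s,t)=2<k/2$), reaches your high-treewidth branch, and would be answered ``yes''. The underlying reason is already visible in your own construction: splicing two connectors onto a snake in the grid yields \emph{one} long $(s,t)$-path, but a cycle through $s$ and $t$ requires a second $(s,t)$-path internally disjoint from it, and a large grid minor certifies nothing about the existence of such a disjoint return path --- unlike for \probKCycle without prescribed terminals, where a big grid minor does certify a yes-instance. So the treewidth win/win cannot work in this form; the entire difficulty of \probTLDP is concentrated exactly at the point where your argument asserts the conclusion rather than proves it, and the step you flagged as ``the main obstacle'' is not a technicality to be patched but a false claim.

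A smaller remark: as written, your dynamic program stores ``the lengths of those segments'', which gives $n^{\Oh(g(k))}$ states and is only XP, not FPT; the standard fix is to keep, for each matching pattern on the bag (plus the two bits for $s$ and $t$), only the maximum total number of edges. That is easily repaired, but it does not change the verdict: the large-treewidth branch is wrong, not merely incomplete, so the proposal does not establish the theorem.
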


For completeness, we show  the corollary next.
\begin{corollary}
    There is an FPT algorithm that, given a graph $G$ with two pairs of vertices $\{s, t\}$ and $\{s', t'\}$, and a parameter $k$, finds two disjoint paths between $\{s, t\}$ and $\{s', t'\}$ in $G$ of total length at least $k$, or correctly determines that such paths do not exist.
    \label{cor:two_long_st_paths}
\end{corollary}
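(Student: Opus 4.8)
The plan is to reduce the problem to \textsc{Long $(s,t)$-Cycle} and call the algorithm of \Cref{thm:long_st_cycle} as a black box. The guiding observation is that a pair of vertex-disjoint paths linking $\{s,t\}$ with $\{s',t'\}$ is essentially "half of a cycle": if we attach a tiny gadget that forces such a pair of paths to be joined into a single cycle through two designated vertices, then maximizing the combined length of the two paths turns into maximizing the length of a cycle through those two vertices, which is exactly what \Cref{thm:long_st_cycle} solves.

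Concretely, I would build an auxiliary graph $G'$ from $G$ by adding two fresh vertices $x$ and $y$, making $x$ adjacent exactly to $s$ and $t$, and $y$ adjacent exactly to $s'$ and $t'$; note $G'$ is simple and has $n+2$ vertices. Since $x$ has degree two in $G'$, every cycle through $x$ traverses the subpath $s\,x\,t$, and every cycle through $y$ traverses $s'\,y\,t'$. Hence deleting $x$ and $y$ from a cycle of $G'$ that passes through both leaves two vertex-disjoint paths of $G$: after removing $x$ the cycle becomes an $(s,t)$-path containing $y$ as an internal vertex, and removing $y$ splits it into an $s$-to-$\{s',t'\}$ path and a $\{s',t'\}$-to-$t$ path, i.e. a valid linkage of $\{s,t\}$ with $\{s',t'\}$. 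Conversely, given two vertex-disjoint linking paths $P$ (from $s$ to some $a\in\{s',t'\}$) and $Q$ (from $t$ to the other terminal $b\in\{s',t'\}$), the closed walk $x,\,s,\,P,\,a,\,y,\,b,\,Q^{-1},\,t,\,x$ is a simple cycle of $G'$ through $x$ and $y$ (simplicity uses that $P,Q$ are disjoint and $x,y$ are new). In both directions the cycle length equals the combined length of the two paths plus $4$, the four added edges. Therefore $G$ has two disjoint paths linking $\{s,t\}$ and $\{s',t'\}$ of total length at least $k$ if and only if $G'$ has a cycle through $x$ and $y$ of length at least $k+4$. The algorithm then runs the procedure of \Cref{thm:long_st_cycle} on $(G',x,y)$ with parameter $k+4$; if it returns such a cycle, we delete $x$ and $y$ to recover the two paths, and otherwise we report that no such pair exists. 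Building $G'$ and extracting the paths are polynomial, and the black box runs in time $g(k+4)\cdot(n+2)^{O(1)}$, which is FPT in $k$.

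The steps in order are: (i) define the gadget $G'$; (ii) prove the "$\Rightarrow$" direction by exhibiting the explicit cycle and checking it is simple and of length at least $k+4$; (iii) prove "$\Leftarrow$" by peeling off the degree-two vertices $x,y$ and verifying the two remaining paths are vertex-disjoint, correctly linked, and of total length exactly (cycle length) $-\,4$; (iv) assemble and time-bound the algorithm via \Cref{thm:long_st_cycle}. The one point needing care — and the place I would be most cautious — is the bookkeeping when the four terminals are not pairwise distinct (say $s=s'$), in which case one of the extracted paths may be a single vertex of length $0$; one should check that the gadget and both directions of the equivalence still go through verbatim (they do, since a single vertex is a legitimate path of length $0$), or simply treat these cases separately as they degenerate to finding a single long path or are trivial. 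Beyond this, the argument is routine gadget reasoning with no delicate estimates.
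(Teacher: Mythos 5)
Your proposal is correct and matches the paper's proof essentially verbatim: the paper also adds two new vertices (there called $u$ and $v$) adjacent respectively to $\{s,t\}$ and $\{s',t'\}$, runs the \textsc{Long $(s,t)$-Cycle} algorithm with parameter $k+4$, and recovers the two disjoint paths by deleting the new vertices. Your extra remarks about degenerate terminal coincidences are a harmless refinement of the same argument.
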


\begin{proof}
    Construct a new graph $H$ that consists of the graph $G$ together with two additional vertices $u$ and $v$. The vertex $u$ has exactly two neighbors in $H$, $s$ and $t$, and the neighbors of $v$ are $s'$ and $t'$. Now run the algorithm for \textsc{Long ($s$, $t$)-Cycle}with the parameter $k + 4$ to find a cycle in $H$ going through the vertices $u$ and $v$. If such a cycle is found, then removing the vertices $u$ and $v$ from it yields a pair of disjoint paths between $\{s, t\}$ and $\{s', t'\}$ in $G$ of total length at least $k$. In the other direction, if there is a pair of desired disjoint paths in $G$, then together with the vertives $u$ and $v$ they constitute a cycle of length at least $k + 4$ in $H$.
\end{proof}

Finally, it is convenient to use the following corollary, which generalizes the theorem of Erd{\H{o}}s and Gallai \cite[Theorem 1.16]{ErdosG59}.

\begin{corollary}[Corollary 3 in~\cite{FominGSS22}]\label{thm:relaxed_st_path}
		Let $G$ be a $2$-connected graph and let $s, t$ be a pair of distinct vertices in $G$.
		For any $B \subseteq V(G)$ there exists a path of length at least $\delta(G- B)$ between $s$ and $t$ in $G$. Moreover, there is a polynomial time algorithm constructing a path of such length. 
	\end{corollary}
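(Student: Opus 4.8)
The plan is to establish the stronger \emph{combinatorial} statement that every longest $(s,t)$-path in $G$ has length at least $d:=\delta(G-B)$, and then read off the algorithm from its proof: starting from any $(s,t)$-path (one exists, and can be found in polynomial time, since $G$ is $2$-connected), we repeatedly replace the current path by the strictly longer $(s,t)$-path exhibited in the argument below; since each replacement adds at least one edge and a path has at most $n$ edges, this terminates after a polynomial number of steps with an $(s,t)$-path of length at least $d$. A few degenerate cases are immediate: if $d\le 2$ (in particular if $G-B$ has at most one vertex, reading $\delta(\emptyset)=0$) then a path of length $\ge 2\ge d$ between $s$ and $t$ already exists by $2$-connectivity — any $(s,t)$-path if $st\notin E(G)$, and a second internally disjoint $(s,t)$-path via Menger's theorem if $st\in E(G)$. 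So assume $d\ge 3$.

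Write $H:=G-B$, so that $\delta(H)=d$, and suppose towards a contradiction that $P=v_0v_1\cdots v_\ell$ with $v_0=s$, $v_\ell=t$ is a longest $(s,t)$-path and $\ell\le d-1$. Since $|V(P)|=\ell+1\le d\le |V(H)|-1$, there is a vertex $w\in V(H)\setminus V(P)$; all of $w$'s at least $d$ neighbours in $H$ are vertices of $G$ lying outside $B$. Let $J$ be the bridge of $P$ containing $w$, i.e.\ the connected component of $G-V(P)$ containing $w$ together with its edges to $V(P)$. Because $G$ is $2$-connected, $J$ attaches to at least two vertices of $P$ — otherwise the unique attachment vertex would be a cut-vertex of $G$, as removing it would separate the nonempty set $V(J)\setminus V(P)$ from the rest of $G$ — so we may list the attachments of $J$ along $P$ as $v_{i_1},\dots,v_{i_r}$ with $r\ge 2$, and set $a:=i_1$, $b:=i_r$.

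The maximality of $P$ now imposes the key restriction: for any two attachments $v_{i_j},v_{i_{j'}}$ with $i_j<i_{j'}$ and any $v_{i_j}$--$v_{i_{j'}}$ path $R$ all of whose internal vertices lie in $V(J)$, we must have $|R|\le i_{j'}-i_j$, for otherwise $v_0\cdots v_{i_j}\,R\,v_{i_{j'}}\cdots v_\ell$ would be a longer $(s,t)$-path. Applying this to consecutive attachments (which are joined by a path of length at least $2$ through the connected graph $J$) gives $i_{j+1}-i_j\ge 2$, so $J$ has $r\le \tfrac{b-a}{2}+1\le \tfrac{\ell}{2}+1\le \tfrac{d+1}{2}$ attachments. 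Every neighbour of $w$ lies either on $P$ — and is then one of these attachments — or in $V(J)\setminus\{w\}$, so $d\le \deg_G(w)\le r+|V(J)|-1$, whence $|V(J)|\ge \tfrac{d+1}{2}$; the same count applied to an arbitrary $u\in V(J)\cap V(H)$ shows that $u$ has at least $d-r\ge \tfrac{d-1}{2}$ neighbours inside $J$. It remains to combine the facts that $J$ is large, that its vertices outside $B$ have large internal degree, and that all of its attachments are squeezed into the window $[a,b]\subseteq[0,\ell]$ with $\ell\le d-1$, with the restriction that every $J$-detour between two attachments is no longer than the corresponding arc of $P$, in order to exhibit a $v_a$--$v_b$ detour through $J$ of length strictly more than $b-a$ — the desired contradiction. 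This final step is exactly the longest-path analysis of Erd{\H{o}}s and Gallai~\cite{ErdosG59}; the one new observation is that it invokes degree lower bounds \emph{only} for vertices of $V(G)\setminus B$, which is precisely what upgrades the conclusion from $\ell\ge \delta(G)$ to $\ell\ge \delta(G-B)$.

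I expect that last combinatorial step to be the main obstacle: a bridge $J$ can have an intricate internal structure and may be padded with low-degree vertices of $B$, so closing the contradiction requires the careful bookkeeping of the classical Erd{\H{o}}s--Gallai argument rather than a single estimate. An alternative route — which trades this obstacle for another of the same flavour — is to go through \Cref{lemma:cycle_to_path}: if one first proves the Dirac-type bound that $G$ contains a cycle of length at least $\min\{2d,n\}$, then either this cycle has length at least $2d$ and the lemma immediately yields the required $(s,t)$-path in polynomial time, or $n\le 2d$, in which case $H$ has minimum degree at least $|V(H)|/2$ and (under the mild density margin that actually holds here) is Hamilton-connected, so a Hamilton $(s,t)$-path of $H$ has length $|V(H)|-1\ge d$ when $s,t\notin B$, with the case $s\in B$ or $t\in B$ finished by a short routing from $\{s,t\}$ into $V(H)$ using $2$-connectivity; however, establishing that Dirac-type cycle bound in full generality requires essentially the same bridge analysis.
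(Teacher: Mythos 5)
There is a genuine gap: the step you yourself flag as ``the main obstacle'' is precisely the content of the statement, and it is never carried out. Your reduction to the bridge $J$ (attachments squeezed into a window of length $\ell\le d-1$, every $J$-detour between attachments no longer than the corresponding arc, many internal neighbours for the $H$-vertices of $J$) sets the stage, but the contradiction ``exhibit a $v_a$--$v_b$ detour through $J$ of length strictly more than $b-a$'' is exactly the nontrivial part, and it cannot simply be borrowed from the classical Erd\H{o}s--Gallai longest-path analysis. The classical argument bounds the length of a longest path or cycle with no prescribed endpoints, using that an endpoint of a maximal path has all its neighbours on the path; with $s$ and $t$ fixed, that mechanism is unavailable, and the $(s,t)$-path version of the bound (even with $B=\emptyset$) is a separate, genuinely harder theorem requiring $2$-connectivity. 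On top of that, your relaxation to $\delta(G-B)$ means $J$ may be padded with arbitrarily low-degree $B$-vertices, so the ``large internal degree'' estimate holds only for $V(J)\cap V(H)$, which could be a single vertex $w$; an argument that closes the contradiction must handle this, and nothing in your sketch does. The alternative route you sketch (a Dirac-type cycle bound plus \Cref{lemma:cycle_to_path}, or Hamilton-connectivity when $n\le 2d$) has the same missing core, as you note, and the Hamilton-connectivity claim under ``minimum degree at least $|V(H)|/2$'' is itself not justified (Hamilton-connectivity needs a strictly stronger degree condition, and only for $H$, not $G$).

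For comparison: the paper does not prove this statement at all --- it is imported verbatim as Corollary~3 of~\cite{FominGSS22}, where it is deduced from a strengthened Erd\H{o}s--Gallai-type theorem on $(s,t)$-paths in $2$-connected graphs with degree conditions imposed only outside $B$; the proof there is a substantial structural induction, not a one-paragraph bridge argument. Your outer loop (iteratively lengthening a path, polynomially many rounds) and the degenerate cases are fine, but as written the proposal reduces the corollary to an unproved claim of essentially the same strength, so it does not constitute a proof.
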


\section{Approximating $(s,t)$-path}\label{sec:eg}

In this section, we  provide the formal proof of Theorem~\ref{thm:eg_approx}, stating that any guarantee for approximating the longest cycle in a 2-connected graph can be transferred to approximating the longest $(s, t)$-path above minimum degree. For the convenience of the reader, we recall the precise statement next.

\erdgallappxtheorem*

In order to obtain this result, we first recall the concept of \bananadec introduced in~\cite{FominGSS22} together with a few of its helpful properties established there. Then we introduce the recursive generalization of this concept, called \nestedbananadec, and show how to obtain with its help the compression of the graph such that a long $(s, t)$-path in the compressed graph can be lifted to an $(s, t)$-path in the original graph with a large offset.

\subsection{\bananadec}

This subsection encompasses the properties of an \bananadec, defined next. The definition itself and most of the technical results presented here are due to~\cite{FominGSS22}. Some of the results from~\cite{FominGSS22} need to be modified in order to be used for our purposes, we supply such results with full proofs. Note that the statements in~\cite{FominGSS22} hold in the more general case where there is also a low-degree vertex subset in the graph, here while recalling the results we automatically simplify the statements. Next, we recall the definition of an \bananadec.

\begin{restatable}[\Bananadec and \banana, Definition 2 in~\cite{FominGSS22}]{definition}{defbananadec}\label{def:banana}
	Let $P$ be a path in a $2$-connected graph $G$.
	We say that  two disjoint paths $P_1$ and $P_2$ in $G$ induce \emph{an \bananadec   for   $P$}  in $G$ if
	\begin{itemize}
		\item 
		Path $P$ is of the form $P=P_1 {P'}P_2$, where the inner path ${P'}$ has at least $\delta(G-\{s,t\})$ edges.
		\item 
		There are at least two connected components in $G-V(P_1  \cup P_2)$, and for every connected component $H$, it holds that  $|V(H)|\ge 3$ and one of the following.
		\begin{enumerate}[label=(R\arabic*)]
			\item\label{enum:tunnel_path_bic} $H$ is $2$-connected and the maximum size of a matching in  $G$ between $V(H)$ and $V(P_1)$  is one,  and between $V(H)$ and $V(P_2)$ is also  one;
			\item\label{enum:tunnel_path_cut_left} $H$ is not 2-connected,   
			exactly one vertex of $P_1$ has neighbors in $H$, that is 			
			$|N_{G}(V(H))\cap V(P_1)|=1$, and no inner vertex from a  leaf-block of $H$ has a neighbor in $P_2$;
			\item\label{enum:tunnel_path_cut_right} The same as  \ref{enum:tunnel_path_cut_left}, but with $P_1$ and $P_2$ interchanged. That is, 
			$H$ is not 2-connected,  			
			$|N_{G}(V(H))\cap V(P_2)|=1$, and no inner vertex from  a leaf-block of $H$ has a neighbor in $P_1$.			
		\end{enumerate}
	\end{itemize}
	The set of \emph{\banana}s for an \bananadec 
	is defined as follows.
	First,  for each component $H$ of type \ref{enum:tunnel_path_bic}, $H$ is an \banana of the \bananadec.
	Second, for each  $H$ of type \ref{enum:tunnel_path_cut_left}, or of type \ref{enum:tunnel_path_cut_right}, all its leaf-blocks are also \banana{s} of the \bananadec.
\end{restatable}

As long as an \bananadec is available, \banana{}s allow us to bound the structure of optimal solutions in a number of ways. First, Fomin et al.~\cite{FominGSS22} observe that the longest $(s, t)$-path necessarily visits an \banana.

\begin{lemma}[Lemma 7 in~\cite{FominGSS22}]\label{lemma:st_path_edge_of_banana}
	Let $G$ be a graph and $P_1, P_2$ induce an \bananadec  for an $(s,t)$-path $P$ in $G$.
	Then there is a longest $(s,t)$-path in $G$ that enters an \banana.
\end{lemma}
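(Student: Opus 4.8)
The plan is to prove the statement by a rerouting (exchange) argument anchored at an arbitrary longest $(s,t)$-path. Let $Q$ be a longest $(s,t)$-path in $G$; if $Q$ already meets some \banana we are done, so suppose it does not. Write $S:=V(P_1)\cup V(P_2)$ for the separating set. The first step is a size estimate forcing $Q$ deep into the components of $G-S$: since $P=P_1P'P_2$ with $|E(P')|\ge\delta(G-\{s,t\})$, we get $|V(P)|=|S|+|E(P')|-1\ge|S|+\delta(G-\{s,t\})-1$, and since $Q$ is a longest $(s,t)$-path while $P$ is an $(s,t)$-path, $|V(Q)|\ge|V(P)|$. Hence $Q$ uses at least $\delta(G-\{s,t\})-1$ vertices outside $S$, so it enters the interior of at least one component of $G-S$.

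The second step disposes of the easy outcomes. If $Q$ meets a component $H$ of type (R1), then $H$ is itself an \banana and we are done; if $Q$ meets an inner vertex of a leaf-block of a component of type (R2) or (R3), that leaf-block is an \banana and we are done. So from now on every vertex of $Q$ lies in $S$ or in the \emph{core} of a component of type (R2)/(R3) --- the core of such a component $H$ being $V(H)$ with the inner vertices of its leaf-blocks deleted (the cut-vertices, hence the attachment vertex of every leaf-block, are retained). Along the way I would also record that each leaf-block $B$ of such an $H$ is large: the inner vertices of $B$ have their neighbourhoods confined to $V(B)$ together with the single portal of $H$ towards $P_i$, so the minimum-degree condition forces $|V(B)|\ge\delta(G-\{s,t\})$.

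The technical core is then to contradict the maximality of $Q$ by diverting it through one of the large leaf-blocks it bypasses. I would work with the component $H^{\ast}$ whose interior contains the inner vertices of $P'$ --- so that $P$ exhibits a long route from a separator vertex, across the interior of $H^{\ast}$, to another separator vertex --- and with a maximal subpath $W$ of $Q$ lying inside $H^{\ast}$, flanked on $Q$ by two vertices of $S$. Using the structural isolation guaranteed by (R2)/(R3) --- the inner vertices of a leaf-block $B$ of $H^{\ast}$ are separated from $V(G)\setminus(V(B)\cup\{w\})$ by the pair $\{c_B,w\}$, where $c_B$ is the cut-vertex of $B$ and $w$ is the portal of $H^{\ast}$ --- together with the $2$-connectivity and largeness of $B$ and the long paths out of $c_B$ inside $B$ supplied by \Cref{thm:relaxed_st_path}, I would replace the portion of $Q$ that crosses the core of $H^{\ast}$ by a route that enters $B$ through $c_B$ and leaves through $w$. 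The resulting $(s,t)$-path is at least as long as $Q$ and uses inner vertices of $B$, so, $Q$ being longest, it is a longest $(s,t)$-path meeting an \banana, as claimed.

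The step I expect to be the main obstacle is exactly this last exchange carried out in full generality. One must guarantee that $Q$ actually reaches a vertex at which the diversion can be grafted in --- this is precisely why the estimate $|V(Q)|\ge|S|+\delta(G-\{s,t\})-1$ is invoked, as it pushes $Q$ well past $S$ --- and that the detour through $B$ does not surrender more length than the core segment it replaces, which is where the largeness of $B$ and the availability of an alternative route around the core through $B$ are used. A natural sub-dichotomy here is whether $P'$ itself already runs through a leaf-block of $H^{\ast}$, which directly supplies a leaf-block with a usable second exit, or $P'$ stays in the core, which is the harder case; and the degenerate configurations (a leaf-block isomorphic to $K_2$, a small value of $\delta(G-\{s,t\})$, a portal not adjacent to the chosen leaf-block, or $Q$ making several excursions into $H^{\ast}$) account for most of the casework. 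The type-(R1) alternative, by contrast, is immediate, since there the entire component is an \banana.
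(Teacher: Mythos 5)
First, note that the paper does not prove \Cref{lemma:st_path_edge_of_banana} at all: it is imported verbatim from~\cite{FominGSS22} (Lemma~7 there), so there is no in-paper proof to compare against, and your attempt has to stand on its own. As it stands, it does not: the decisive step is only announced, never carried out. Since $Q$ is chosen to be a \emph{longest} $(s,t)$-path, your rerouting must produce a banana-entering $(s,t)$-path that is \emph{at least as long as} $Q$ and still simple; showing that the detour through a leaf-block $B$ (entering at $c_B$, leaving at the portal $w$, with $w$ possibly already used by $Q$ elsewhere) does not lose length compared with the discarded segment of $Q$ across the core of $H^{\ast}$ is precisely the entire content of the lemma, and you explicitly defer it (``the main obstacle''). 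That this is not routine is visible in the related arguments that \emph{are} proved in this paper: \Cref{lemma:st_path_to_banana_st_path} and \Cref{lemma:dirac_cycle_edge_of_banana} perform exactly this kind of leaf-block detour and only obtain \emph{lossy} conclusions (a path of length roughly $\frac{3}{2}\delta-\frac{5}{2}k$, a cycle of length $2\delta+k/2-1$), whereas the statement you must prove is lossless. A sketch that ends where the loss would have to be avoided is a genuine gap, not a proof.

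Two further concrete problems in the set-up. (i) In this line of work ``enters'' means ``contains an edge of'' the component (cf.\ the label \emph{st\_path\_edge\_of\_banana}, and the way non-entering is used in the proof of \Cref{lemma:st_path_to_banana_st_path}). Hence your dismissal of type~\ref{enum:tunnel_path_bic} components is unjustified: by K\H{o}nig's theorem the $H$--$P_1$ and $H$--$P_2$ edges are each covered by a single vertex, and if those covering vertices lie on $P_1$ and $P_2$ respectively, a path may pass through a vertex of $H$ with both path-neighbours in $V(P_1)\cup V(P_2)$, meeting $H$ without entering it; your subsequent claim that every vertex of $Q$ lies in $S$ or in a core of a type~\ref{enum:tunnel_path_cut_left}/\ref{enum:tunnel_path_cut_right} component then fails. (Your treatment of inner vertices of leaf-blocks is fine, since their only neighbour outside the block is the single portal.) (ii) Your exchange is anchored at the component $H^{\ast}$ containing the interior of $P'$, but the counting estimate $|V(Q)\setminus S|\ge \delta(G-\{s,t\})-1$ only guarantees that $Q$ visits \emph{some} component of $G-S$, not $H^{\ast}$; if $Q$ avoids $H^{\ast}$, the ``maximal subpath $W$ of $Q$ inside $H^{\ast}$'' you want to replace does not exist, and the long route across $H^{\ast}$ supplied by $P'$ is of no direct use for the component $Q$ actually visits. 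Both issues would have to be addressed, and the length-preserving exchange actually proved, before this could be accepted.
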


Next, since an \banana has a very restrictive connection to the rest of the graph, it follows that any $(s, t)$-path has only one chance of entering the component.

\begin{lemma}[Lemma 5 in~\cite{FominGSS22}]\label{lemma:st_path_banana_consecutive}
	Let $G$ be a $2$-connected graph and $P$ be an $(s,t)$-path in $G$. Let paths  $P_1, P_2$ induce an \bananadec  for $P$ in $G$. Let  $M$ be an \banana. Then for every $(s,t)$-path $P'$ in $G$, if $P'$ enters $M$, then all vertices of $V(M)\cap V(P')$   appear consecutively in $P'$.
\end{lemma}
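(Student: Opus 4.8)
The plan is to show that an \banana $M$ is attached to the rest of $G$ through an ``interface'' of only two vertices, and then to argue that a simple path cannot cross such an interface more than once ``in'' and once ``out''. Write $\partial M$ for the set of edges of $G$ having exactly one endpoint in $V(M)$. The key claim is:

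\emph{Claim: $\partial M$ has a vertex cover of size at most $2$.}

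Granting the claim, the lemma follows by a short counting argument on $P'$, which I sketch in the third paragraph.

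To prove the Claim I treat the three types of \banana. If $M=H$ is a component of type~\ref{enum:tunnel_path_bic}, then every edge of $\partial M$ goes to $V(P_1)\cup V(P_2)$: distinct components of $G-V(P_1\cup P_2)$ are pairwise non-adjacent, and the only neighbours of $V(H)$ outside $V(H)$ lie on $P_1\cup P_2$. By the definition of type~\ref{enum:tunnel_path_bic}, for $i\in\{1,2\}$ the bipartite graph between $V(M)$ and $V(P_i)$ has a maximum matching of size at most $1$, so by König's theorem a single vertex $z_i$ covers all of its edges; then $\{z_1,z_2\}$ covers $\partial M$. If $M$ is a leaf-block of type~\ref{enum:tunnel_path_cut_left} (type~\ref{enum:tunnel_path_cut_right} is symmetric), let $x$ be the unique cut-vertex of $H$ lying in $M$ and let $p_1$ be the unique vertex of $V(P_1)$ with a neighbour in $V(H)$. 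Every edge of $\partial M$ with an endpoint on $P_1$ is incident to $p_1$; every other edge of $\partial M$ is incident to $x$, namely: the edges reaching $V(P_2)$ because no inner vertex of the leaf-block $M$ has a neighbour on $P_2$; the edges reaching another component of $G-V(P_1\cup P_2)$ because there are none; and the edges reaching $V(H)\setminus V(M)$ because $x$ is the only cut-vertex of $H$ inside the block $M$, hence $x$ separates $V(M)\setminus\{x\}$ from $V(H)\setminus V(M)$. Thus $\{x,p_1\}$ covers $\partial M$, proving the Claim.

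Now fix an $(s,t)$-path $P'$ that enters $M$, i.e.\ traverses an edge with both endpoints in $V(M)$, and suppose toward a contradiction that $V(M)\cap V(P')$ is not consecutive along $P'$; then it splits into $r\ge 2$ maximal segments $Q_1,\dots,Q_r$. Since $s,t\notin V(M)$, each segment is flanked on $P'$ by a vertex outside $V(M)$, so $P'$ contains $2r$ pairwise distinct boundary edges, all in $\partial M$. Let $\{z_1,z_2\}$ be the size-$\le 2$ cover from the Claim. If the cover has size $\le 1$ then all $2r$ boundary edges are incident to a single vertex, which has degree at most $2$ in $P'$, so $2r\le 2$, contradicting $r\ge 2$. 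Hence $z_1\ne z_2$, and $2r\le 4$ forces $r=2$; moreover all four boundary edges are covered by $\{z_1,z_2\}$, so each $z_i$ lies on $P'$ and has exactly two $P'$-edges, both in $\partial M$. If $z_i\in V(M)$, both of these edges leave $V(M)$, so $\{z_i\}$ is one of the two segments; if $z_i\notin V(M)$, both enter $V(M)$, and since $P'$ is simple its two $V(M)$-endpoints lie in \emph{different} segments, so $z_i$ is the unique vertex of $P'$ strictly between $Q_1$ and $Q_2$. The three resulting cases all fail: (a) $z_1,z_2\in V(M)$ makes $Q_1=\{z_1\}$ and $Q_2=\{z_2\}$, so $P'$ uses no edge of $M$; (b) $z_1,z_2\notin V(M)$ makes each of them the unique vertex between $Q_1$ and $Q_2$, forcing $z_1=z_2$; (c) if, say, $z_1\in V(M)$ and $z_2\notin V(M)$, then $Q_1=\{z_1\}$, the segment $Q_2$ has two distinct endpoints (as $P'$ uses an edge of $M$ while $Q_1$ is a single vertex), and the boundary edge of $P'$ at the endpoint of $Q_2$ farther from $z_2$ is incident to neither $z_1$ nor $z_2$, contradicting that $\{z_1,z_2\}$ covers $\partial M$. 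Therefore $r=1$, i.e.\ $V(M)\cap V(P')$ is consecutive in $P'$.

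The main obstacle is the Claim in the two cut-vertex cases: one has to verify that the cut-vertex $x$ of $H$ inside the leaf-block $M$ absorbs \emph{every} boundary edge except those incident to $p_1$, which uses both the ``no inner vertex of a leaf-block sees $P_2$'' clause of the decomposition and the block structure of $H$. The counting in the last paragraph is then routine, relying only on the simplicity of $P'$ and on the vertex-disjointness of maximal segments.
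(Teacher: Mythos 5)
Your proof is correct. Note that the paper does not prove this statement itself---it imports it verbatim as Lemma~5 of \cite{FominGSS22}---but your argument (a size-two vertex cover of the boundary edges of an \banana, via K\H{o}nig's theorem in the $2$-connected case~\ref{enum:tunnel_path_bic} and the unique cut-vertex together with the unique $P_1$-attachment vertex in the separable cases~\ref{enum:tunnel_path_cut_left} and~\ref{enum:tunnel_path_cut_right}, followed by the counting of entry/exit edges of the simple path $P'$) is precisely the structural reason the lemma holds in the cited work, so this is essentially the same approach rather than a genuinely different route.
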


For the purposes of recursion, it is convenient to enclose an \banana together with some of its immediate connections,
so that this slightly larger subgraph behaves exactly like an $(s, t)$-path instance.
The subgraph $K$ in the next lemma plays this role.

\begin{lemma}[Lemma 8 in~\cite{FominGSS22}]\label{lemma:st_path_banana_to_2_connected}
	Let  paths $P_1,P_2$ induce an \bananadec  for an $(s,t)$-path $P$ in graph $G$.
	Let $M$ be an \banana in $G$. Then there is a polynomial time algorithm that outputs a  $2$-connected subgraph $K$ of $G$ and two vertices $s', t' \in V(K)$, such for that every 
	$(s,t)$-path $P'$ in $G$ that enters $M$, the following hold:
	\begin{enumerate}
		\item $V(K)=(V(M)\cup\{s',t'\})$;
		\item $P'[V(K)]$ is an $(s',t')$-subpath of $P'$ and an $(s',t')$-path in $K$;
		\item $\delta(K-\{s',t'\}) \ge \delta(G-\{s,t,s',t'\})$.
	\end{enumerate}
\end{lemma}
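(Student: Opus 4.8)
The plan is to handle the \banana $M$ according to which rule of Definition~\ref{def:banana} produced it, and in each case to single out two ``port'' vertices $s',t'$ enjoying one decisive property: \emph{every edge of $G$ with exactly one endpoint in $V(M)$ is incident to $s'$ or to $t'$}. Given such ports I take $K:=G[V(M)\cup\{s',t'\}]$ with the edge $s't'$ added (this is harmless if it is already present); then conclusion~(1) is immediate, and conclusions~(2),~(3) and $2$-connectivity of $K$ each reduce to a short argument, so essentially all the work goes into locating $s'$ and $t'$.

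For a component $H$ of type~\ref{enum:tunnel_path_bic} we have $M=H$, which is $2$-connected and satisfies $N_G(V(H))\subseteq V(P_1)\cup V(P_2)$. Applying K\"onig's theorem to the bipartite graph of $G$-edges between $V(H)$ and $V(P_i)$ (which, by the size-one matching condition, has no matching of size two) yields a single vertex $a_i$ --- lying in $V(H)$ or in $V(P_i)$ --- incident to all $V(H)$--$V(P_i)$ edges, for $i\in\{1,2\}$; since $G$ is $2$-connected and every component has at least three vertices, $a_1$ and $a_2$ cannot be the same vertex of $H$, and then $s':=a_1$, $t':=a_2$ have the decisive property. For a component $H$ of type~\ref{enum:tunnel_path_cut_left} (type~\ref{enum:tunnel_path_cut_right} being symmetric), $M$ is a leaf-block of $H$; set $s':=c$, the unique cut-vertex of $H$ inside $M$, and $t':=p_1$, the unique vertex of $P_1$ with a neighbour in $H$ (which exists by~\ref{enum:tunnel_path_cut_left}). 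Every edge from $V(M)$ to $V(H)\setminus V(M)$ is then incident to $c$ by the block structure, every edge from $V(M)$ to $V(P_1)$ is incident to $p_1$, and every edge from $V(M)$ to $V(P_2)$ is again incident to $c$, because a vertex of $V(M)\setminus\{c\}$ is never a cut-vertex of $G$ (this uses $2$-connectivity of $G$ together with the three-vertices-per-component condition), hence is an inner vertex of the leaf-block and so, by~\ref{enum:tunnel_path_cut_left}, has no neighbour in $P_2$. In both cases $2$-connectivity of $G$ further forces each port to have a neighbour in $V(M)$ and prevents all boundary edges of $M$ from funnelling through a single vertex.

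From the decisive property the three conclusions follow. Conclusion~(1) holds by construction. For~(2), let $P'$ be an $(s,t)$-path entering $M$; by Lemma~\ref{lemma:st_path_banana_consecutive} the set $V(P')\cap V(M)$ is a subpath $Q$ of $P'$, and since $s,t\notin V(M)$ the endpoints of $Q$ are either ports themselves or are joined on $P'$ to vertices outside $V(M)$ that the decisive property forces to be ports; a short case distinction (using that $P'$ is simple, so a port occurs on it at most once) then shows that $P'[V(K)]$ is an $(s',t')$-subpath of $P'$ and, all its edges lying inside $K$, an $(s',t')$-path in $K$. For~(3), the decisive property gives $N_G(v)\subseteq V(M)\cup\{s',t'\}$ for every $v\in V(M)\setminus\{s',t'\}$, whence $\deg_{K-\{s',t'\}}(v)=\deg_G(v)-|N_G(v)\cap\{s',t'\}|\ge\deg_G(v)-|N_G(v)\cap\{s,t,s',t'\}|=\deg_{G-\{s,t,s',t'\}}(v)\ge\delta(G-\{s,t,s',t'\})$.

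I expect the main obstacle to be the verification that $K$ is $2$-connected. A port may lie inside $M$ or outside it and may have only a single neighbour in $M$, so $G[V(M)\cup\{s',t'\}]$ need not be $2$-connected by itself; the added edge $s't'$ repairs this, but one must check in each case that deleting any single vertex of $K$ leaves it connected, and it is precisely here that one has to rule out --- using $2$-connectivity of $G$ and the \banana structure --- the degenerate configurations in which every boundary edge of $M$ passes through one vertex. Keeping the port selection uniform across the three rules, in particular the bookkeeping of when a port is internal to $M$, is the other delicate point; the polynomial running time, in contrast, is routine, as bipartite matching, block-tree extraction, and the neighbourhood queries involved are all polynomial.
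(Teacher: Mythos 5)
A preliminary remark: this paper never proves \Cref{lemma:st_path_banana_to_2_connected} — it is imported wholesale as Lemma~8 of~\cite{FominGSS22} — so there is no in-paper proof to compare you against; I can only judge your construction on its own terms. On those terms it is essentially correct and is the natural argument: choose two ports covering every boundary edge of $M$ (the two K\H{o}nig cover vertices for a type~\ref{enum:tunnel_path_bic} component; the cut vertex $c$ of the leaf-block and the unique attachment vertex $p_1$ of $P_1$ for types~\ref{enum:tunnel_path_cut_left}/\ref{enum:tunnel_path_cut_right}), set $K=G[V(M)\cup\{s',t'\}]$ plus the edge $s't'$, and derive the three conclusions from your ``decisive property'' together with \Cref{lemma:st_path_banana_consecutive} and simplicity of $P'$. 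Your degree computation for~(3) is complete, and the case distinction you sketch for~(2) does close, including the sub-cases where a port lies inside $M$ (note that ``enters $M$'' means ``contains an edge of $M$'', which is what excludes the degenerate one-vertex visits there). The $2$-connectivity of $K$, which you flag but do not verify, is part of the unconditional conclusion, but the non-degeneracy facts you allude to do hold and are short: for a leaf-block, $N_G(V(M)\setminus\{c\})\subseteq\{c,p_1\}$ (inner vertices have no neighbours in $P_2$ by the decomposition, none in $V(H)\setminus V(M)$ by the block structure, and none in $V(P_1)\setminus\{p_1\}$), so $2$-connectivity of $G$ forces $p_1$ to have a neighbour in $V(M)\setminus\{c\}$; for type~\ref{enum:tunnel_path_bic}, not all boundary edges of $H$ can share one $H$-endpoint, again by $2$-connectivity. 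With these, deleting any single vertex of $K$ leaves it connected, so your plan goes through.

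Two smaller points deserve correction. First, in the type~\ref{enum:tunnel_path_cut_left} case your justification that vertices of $V(M)\setminus\{c\}$ are inner invokes ``not a cut-vertex of $G$'', which is the wrong notion: inner means not a cut vertex of $H$, and this holds simply because $c$ is the only cut vertex of $H$ contained in a leaf-block — no appeal to $2$-connectivity of $G$ is needed or relevant there. Second, adding $s't'$ when it is absent means $K$ is not literally a subgraph of $G$, whereas the statement says ``subgraph''; the added edge is genuinely needed in general (e.g.\ when $p_1$'s only neighbour in $V(M)$ is a single inner vertex and $cp_1\notin E(G)$, the induced subgraph alone is not $2$-connected), and this is consistent with \Cref{sec:overview} describing the outputs of this lemma as $2$-connected \emph{minors} of $G$; you should note this discrepancy explicitly rather than wave it off as harmless.
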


Most importantly, \bananadec{}s capture extremal situations, where the current $(s, t)$-path cannot be made longer in a ``simple'' way. The next lemma formalizes that intuition, stating that in polynomial time we can find either a long $(s, t)$-path or an \bananadec. The lemma is largely an analog of Lemma 4 in~\cite{FominGSS22}, however our statement here is slightly modified. Next, we recall the statement from Section~\ref{sec:overview}  and provide a proof.

\begin{restatable}{lemma}{lemmapathortunnel}\label{lemma:st_path_or_tunnel}
	Let $G$ be a   $2$-connected graph such that $\delta(G-\{s,t\})\ge 16$. 
	There is a polynomial time algorithm that 
	\begin{itemize}
		\item either outputs an  $(s,t)$-path $P$  of length at least $\min\{\frac{5}{4}\delta(G-\{s,t\})-3,|V(G)|-1\}$, 
		\item or outputs an  $(s,t)$-path $P$ with paths  $P_1, P_2$ that induce an \bananadec for $P$  in $G$.
		Additionally, there is no $(s,t)$-path in $G$ that enters at least two \bananas of this \bananadec.
	\end{itemize}
\end{restatable}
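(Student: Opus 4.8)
The plan is to follow the strategy of Lemma 4 in \cite{FominGSS22}, adapting the bookkeeping to the slightly different thresholds we need here. First I would fix an arbitrary $(s,t)$-path $P$ in $G$ and repeatedly try to enlarge it; since $G$ is $2$-connected such a path exists, and each successful enlargement strictly increases $|V(P)|\le |V(G)|$, so the process terminates in polynomially many rounds. The crux is the enlargement subroutine: given a current $(s,t)$-path $P=v_0,\dots,v_\ell$ with $\ell < \min\{\frac{5}{4}\delta(G-\{s,t\})-3,\,|V(G)|-1\}$, we want either to produce a longer $(s,t)$-path, or to certify that $P$ (together with a canonical choice of prefix $P_1$ and suffix $P_2$) induces an \bananadec. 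The standard tools are: (i) the classical ``rotation'' argument (à la Pósa) which, using a vertex outside $P$ adjacent to an internal vertex of $P$, or a chord, reroutes $P$ into a longer $(s,t)$-path; (ii) when no rotation applies, a counting argument on the neighborhoods of the endpoints and of vertices near the ends of $P$ forces the structural picture. Concretely, one looks at the two ``ends'' of $P$ and defines $P_1$ as the maximal prefix and $P_2$ as the maximal suffix such that the inner part $P'$ still has at least $\delta(G-\{s,t\})$ edges; the failure of enlargement then has to be translated into the matching conditions \ref{enum:tunnel_path_bic}, the cut-vertex conditions \ref{enum:tunnel_path_cut_left}--\ref{enum:tunnel_path_cut_right}, the size bound $|V(H)|\ge 3$ for each component $H$ of $G-V(P_1\cup P_2)$, and the existence of at least two such components.

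Next I would handle the two claimed ``sizes'' of the first outcome. If at some point $V(P)=V(G)$, then $P$ is a Hamiltonian $(s,t)$-path of length $|V(G)|-1$ and we output it. Otherwise, the enlargement process stops with an $(s,t)$-path that is not Hamiltonian; in that case the structural analysis of the previous paragraph yields an \bananadec, provided $\ell < \frac{5}{4}\delta(G-\{s,t\})-3$ — and if $\ell$ has already reached $\frac{5}{4}\delta(G-\{s,t\})-3$ we are in the first outcome as well. The hypothesis $\delta(G-\{s,t\})\ge 16$ is exactly what makes the slack in the counting arguments positive (e.g.\ it guarantees the inner path $P'$ is long enough and that the components have at least $3$ vertices), so all inequalities that appear in the reductions to \ref{enum:tunnel_path_bic}--\ref{enum:tunnel_path_cut_right} go through; I would carry the constants symbolically and check $\delta\ge 16$ suffices at the end.

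The one genuinely new ingredient, beyond re-deriving Lemma 4 of \cite{FominGSS22} with our thresholds, is the extra clause: \emph{there is no $(s,t)$-path in $G$ that enters at least two \bananas of this \bananadec}. For this I would argue as follows. By Lemma~\ref{lemma:st_path_banana_consecutive}, any $(s,t)$-path $P'$ that enters an \banana $M$ visits $V(M)\cap V(P')$ consecutively, so $P'$ enters $M$ through exactly one edge leaving $M$ and exits through exactly one, i.e.\ through the ``gates'' of $M$ (which, by the definition of a \banana and by Lemma~\ref{lemma:st_path_banana_to_2_connected}, are vertices of $P_1\cup P_2$). The key point is that the gates of \emph{all} \bananas lie on the two short paths $P_1$ and $P_2$, and an $(s,t)$-path, traversing $P_1$ and $P_2$ each essentially once, simply does not have enough ``room'' to detour into two different \bananas and still return to reach $t$; making this precise requires a careful case analysis depending on whether the two \bananas hang off the same $P_i$ or off $P_1$ and $P_2$ respectively, and on the types \ref{enum:tunnel_path_bic}--\ref{enum:tunnel_path_cut_right} of the parent components. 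I expect this clause to be the main obstacle: it is not in the cited lemma, and getting the ``room'' argument exactly right — in particular ruling out pathological routings that re-use a gate vertex — is where the work lies. If a direct argument proves delicate, an alternative is to choose $P_1,P_2$ minimally (shortest prefix/suffix still yielding an \bananadec), which tends to force the gates to be distinct and the detours to be incompatible, and then the two-\banana exclusion becomes a short parity/ordering argument along $P$.
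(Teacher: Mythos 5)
There is a genuine gap, and it is exactly where you predicted the difficulty: the clause that no $(s,t)$-path enters two \bananas. You try to establish this as a \emph{structural} property of the decomposition (a ``no room to detour twice'' argument, possibly after choosing $P_1,P_2$ minimal), but no such property holds: an $(s,t)$-path is not forced to traverse $P_1$ and $P_2$ monotonically, and it can perfectly well weave through their vertices and visit two different components; \Cref{lemma:st_path_banana_consecutive} only forces each visit to be consecutive, nothing more. The paper handles this clause algorithmically rather than structurally. After obtaining the decomposition, it iterates over all ordered pairs of \bananas, converts each \banana $M$ into a triple $(K_i,s_i,t_i)$ via \Cref{lemma:st_path_banana_to_2_connected}, and tests (for each of the constantly many permutations) whether three vertex-disjoint paths linking $s$ to one gate, the other gate of the first component to a gate of the second, and the remaining gate to $t$ exist, using the Robertson--Seymour disjoint-paths algorithm. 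The point is a win--win: if such a linkage exists, then stitching it together with $(s_i,t_i)$-paths of length at least $\delta(K_i-\{s_i,t_i\})\ge\delta(G-\{s,t\})-2$ inside each of the two components yields an $(s,t)$-path of length at least $2\delta(G-\{s,t\})-4>\frac{5}{4}\delta(G-\{s,t\})-3$, so the algorithm can simply output that path under the first bullet; only if no pair admits such a linkage does it output the decomposition, and then the extra clause holds by the very outcome of the check. Your proposal has no mechanism for producing a long path in the case that a two-\banana path does exist, so the ``ordering/parity'' argument you hope for cannot be completed.

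A secondary, less serious point: for the first part you plan to re-derive the enlargement dichotomy from scratch via P\'osa rotations. The paper instead invokes Lemma~4 of~\cite{FominGSS22} as a black box with $B=\{s,t\}$ and $k=\lfloor\delta(G-\{s,t\})/4\rfloor-2$ (the hypothesis $\delta(G-\{s,t\})\ge 16$ is what makes $4k+8\le\delta(G-\{s,t\})$ hold), and then observes that each of its three outcomes --- a path of length $\delta(G-\{s,t\})+k$, a path covering $V(G)$, or the decomposition --- matches one of the bullets. Re-proving the rotation machinery is not wrong in principle, but it is unnecessary work and carries its own risk of constant-chasing errors; the essential missing ingredient in your write-up remains the disjoint-paths check described above.
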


\begin{proof}
	Invoke Lemma 4 of~\cite{FominGSS22} on $G,s,t$ with $B:=\{s,t\}$ and $k:=\lfloor \delta(G-\{s,t\})/4\rfloor - 2$. Note that the condition $4k+8\le \delta(G-\{s,t\})$ required by that lemma is satisfied. Now, we either get an  $(s, t)$-path of length $\delta(G - \{s, t\}) + k$, or an $(s,t)$-path $P$ with $V(P)\cup \{s,t\}=V(G)$, or the required \bananadec with the paths $P$, $P_1$, $P_2$.
	Clearly $\delta(G-\{s,t\})+k> \frac{5}{4}\delta(G-\{s,t\})-3$, so if a path of length $\delta(G - \{s, t\}) + k$ is found, we are done.
	If an $(s,t)$-path $P$ has $V(P)\cup \{s,t\}=V(G)$, then it is a hamiltonian path in $G$, so we are done in the second case as well.
	
	If we obtain an \bananadec, then we additionally need to check whether there exists an $(s,t)$-path that goes through at least two \bananas of the \bananadec induced by $P_1$ and $P_2$ in $G$.
	To this end, iterate over all ordered pairs of \bananas in the \bananadec.
	For each pair, apply \Cref{lemma:st_path_banana_to_2_connected} to each of the two \bananas and obtain two triples $(K_1,s_1,t_1)$ and $(K_2,s_2, t_2)$.
	There is an $(s,t)$-path entering both \bananas in the order given by the pair if and only if there exist three disjoint paths between the pairs $(s,a_1),(b_1,a_2),(b_2,t)$, where $(a_i,b_i)$ is a permutation of $(s_i,t_i)$ for each $i \in [2]$.
	
	When the permutations are fixed, such paths, if they exist, can be found in polynomial time using the famous algorithm of Robertson and Seymour for \textsc{$k$-Disjoint Paths}~\cite{RobertsonS95b}.
	Since $\delta(K_i-\{s_i,t_i\})\ge \delta(G-\{s,t\})-2$ for each $i \in [2]$, these three paths together with two $(s_i,t_i)$-paths inside $K_i$ combine into an $(s,t)$-path of length at least $2\delta(G-\{s,t\})-4> \frac{5}{4}\delta(G-\{s,t\})-3$, so the algorithm outputs this path and stops.
	If the disjoint path triple was not found on any of the steps, then there is indeed no $(s,t)$-path entering at least two \bananas.
\end{proof}

Finally, to deal with $(s, t)$-paths that do not enter any \banana, one can observe the following. Intuitively, such a path should be far from optimal, as going through an \banana would immediately give at least $\delta(G-\{s,t\}) - \Oh(1)$ additional edges of the path.
The final lemma of this subsection establishes how precisely the length of a path avoiding \bananas can be ``boosted'' in this fashion.
To obtain this result, we first need a technical lemma from~\cite{FominGSS22} that yields long paths inside separable components.

\begin{lemma}[Lemma 6 in~\cite{FominGSS22}]\label{lemma:separator_in_non_2c}
	Let $H$ be a connected graph with at least one cut-vertex.
	Let $I$ be the set of inner vertices of all leaf-blocks of $H$.
	Let $S \subseteq V(H)\setminus I$ separate at least one vertex in $V(H)\setminus I$ from $I$ in $H$.
	For any vertex $v$ that is not an inner vertex of a leaf-block of $H$, there is a cut-vertex $c$ of a leaf-block of $H$ and a $(c,v)$-path of length at least $\frac{1}{2}\left(\delta(H)-|S|\right)$ in $H$. This path can be constructed in polynomial time.
\end{lemma}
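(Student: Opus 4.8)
The plan is to concentrate the ``length'' into one high-minimum-degree piece of $H$, traverse it, and graft the result onto a leaf-block cut-vertex at one end and onto $v$ at the other, using the block--cut structure. Set $d:=\delta(H)$ and $\delta':=d-|S|$. If $\delta'\le 0$ the statement is vacuous: a connected graph with a cut-vertex has at least two leaf-blocks, so a leaf-block cut-vertex $c$ exists, and by connectivity there is a $(c,v)$-path, of length $\ge 0\ge \delta'/2$. I would also record that, since $S$ is an $(x,I)$-separator, $x\notin S$ and $S\cap I=\emptyset$; consequently the fully degenerate configurations, where $V(H)\setminus I$ is a single vertex (e.g.\ a ``star of leaf-blocks'' around $v$), cannot satisfy the hypothesis, so below I may always choose a leaf-block whose cut-vertex differs from $v$. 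Assume henceforth $\delta'\ge 1$.

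Next I would study the component $C$ of $H-S$ containing $x$. Then $V(C)\cap I=\emptyset$; every vertex of $C$ has all its $H$-neighbours in $V(C)\cup S$, so $\delta(C)\ge d-|S|=\delta'$ and $|V(C)|\ge \delta'+1$; and no cut-vertex of a leaf-block lies in $V(C)$, since such a vertex is adjacent to an inner vertex of its leaf-block (hence its $H-S$-component meets $I$) or else lies in $S$. From this I extract two tools: (i) by the standard argument that a longest path \emph{starting} at a prescribed vertex must end at a vertex all of whose neighbours lie on it, from every vertex of $C$ there is a path inside $C$ of length $\ge\delta'$; and (ii) adding to $C$ a new vertex $z$ adjacent to all of $V(C)$ gives a $2$-connected graph $C^{+}$ with $\delta(C^{+}-z)=\delta(C)\ge\delta'$, so Corollary~\ref{thm:relaxed_st_path} produces, between any two prescribed distinct vertices of $C$, a path of length $\ge\delta'$ in $C^{+}$; if such a path uses $z$, deleting $z$ breaks it into two sub-paths of $C$ of total length $\ge\delta'-1$, so one of them has length at least $(\delta'-1)/2$.

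To assemble the path I would first fix its leaf-block end: by connectivity of $H$, take a shortest path in $H$ from the union of all leaf-block interiors to $V(C)$; it has the form $w,c_B,\dots,a$ with $w$ inner to a leaf-block $B$, $c_B$ its cut-vertex, and $a$ the unique vertex of the path in $V(C)$, and I set $c:=c_B$, choosing $B$ with $c\ne v$ (possible by the remark above). This yields a route $R$ from $c$ to $a$ meeting $V(C)$ only in $a$. It remains to go from $a$ through $C$ to $v$: if $v\in V(C)$, run a long path of $C$ out of $a$ using tool (i); if $v$ lies on it, split at $v$ and keep the longer of the two pieces incident to $a$ — this is where the factor $\tfrac12$ comes from — and otherwise use tools (i)--(ii) to get an $(a,v)$-path inside $C$ of length about $\delta'/2$. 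If $v\notin V(C)$ (so $v$ lies in another component of $H-S$ or in $S$), run a long path of $C$ out of $a$, leave $C$ through a vertex of $S$, and continue to $v$, again with at most a factor-$\tfrac12$ loss. Concatenating $R$ with this portion gives the desired $(c,v)$-path, and every step — computing $C$, the greedy longest paths, the algorithm behind Corollary~\ref{thm:relaxed_st_path}, and BFS for the routes — runs in polynomial time.

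The hard part will be the purely combinatorial bookkeeping: arranging that $R$, the long traversal of $C$, and the route to $v$ glue into a single \emph{simple} path; guaranteeing that a leaf-block whose cut-vertex is reachable disjointly from the rest always exists; and pinning the constant exactly to $\tfrac12$ in the case $v\notin V(C)$ (and in the boundary sub-case $a=v$). I expect this case analysis — on the block--cut-tree of $H$ and on the position of $v$ relative to $C$ and $S$ — rather than any single inequality, to be the bulk of the proof.
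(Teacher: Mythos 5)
Your setup is sound: the component $C$ of $H-S$ containing the separated vertex indeed has minimum degree at least $\delta':=\delta(H)-|S|$, contains no vertex of $I$ and no cut-vertex of a leaf-block, the shortest connector from the interior of a leaf-block to $V(C)$ does pass through that leaf-block's cut-vertex as its second vertex, and the greedy path argument (tool (i)) is correct and polynomial. (Note the paper only cites this lemma from~\cite{FominGSS22}, so I am judging your argument on its own merits.) The genuine gap is in the assembly, which is exactly where the factor $\tfrac12$ has to be earned, and it is more than bookkeeping. In the case $v\in V(C)$ with the connector $R$ attached at $a$: if $v$ lies on the greedy path but in the half near $a$, ``the longer of the two pieces'' does not contain $a$ and cannot be glued to $R$, so you get nothing; and if $v$ is not on the greedy path, you need an $(a,v)$-path of length about $\delta'/2$ inside $C$, which a connected graph of minimum degree $\delta'$ need not have between two \emph{prescribed} vertices. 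Concretely, let $C$ consist of two cliques of size $\delta(H)+1$ joined by a single bridge $av$; then the unique $(a,v)$-path in $C$ is that one edge. This $C$ can be realized inside an $H$ satisfying all hypotheses (attach the two cliques to $S$ through $a$ and through some $v''\ne v$ in the second clique so that no leaf-block of $H$ lies in $C$, and hang the leaf-blocks beyond $S$), and then an adversarial shortest connector attaches at $a$ and your recipe only ever produces short $(c,v)$-paths, although long ones exist --- they enter $C$ at a different vertex or reach $v$ from the $v''$ side. In other words, the entry point into $C$ must be chosen with $v$ in mind (or one must re-route using the many neighbours of the greedy endpoint), which your construction never does. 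Tool (ii) does not repair this: when the path from \Cref{thm:relaxed_st_path} uses the added universal vertex, deleting it leaves two pieces that do not connect $a$ to $v$ at all (and they total $\delta'-2$, not $\delta'-1$).

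The case $v\notin V(C)$ is essentially a restatement of the goal rather than a proof: the greedy endpoint $z$ may have no neighbour in $S$ whatsoever (all $\delta(H)$ of its neighbours may lie in $C$), so ``leave $C$ through a vertex of $S$'' may force an exit from an early vertex of the traversal, sacrificing an uncontrolled amount; the continuation from $S$ to $v$ must avoid the traversal and the connector $R$ (and $v$ itself may sit internally on $R$, which you never exclude), and no argument is given that such a continuation exists, nor where the claimed factor-$\tfrac12$ loss comes from. So the central idea of the lemma --- how to attach \emph{both} prescribed ends, $v$ and a leaf-block cut-vertex, to the long stretch guaranteed inside $C$, simultaneously and disjointly --- is missing; that is the heart of the proof, not a case-analysis afterthought. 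Two smaller points: if you insist on $c\ne v$ you must take the shortest connector from the inner vertices of the \emph{chosen} leaf-block rather than from all of $I$, and you must still handle $v$ appearing on that connector; and the degenerate-configuration remark, while correct, does not by itself protect any of the above steps.
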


Now we move to $(s, t)$-paths that avoid \bananas. The following \Cref{lemma:st_path_to_banana_st_path} has been already stated in Section~\ref{sec:overview}, here we recall the statement  and provide a proof.

\begin{restatable}{lemma}{lemtobananastpath}\label{lemma:st_path_to_banana_st_path}
	Let $P$ be an $(s,t)$-path of length at most $\delta(G-\{s,t\})+k$ and let two paths $P_1, P_2$ induce a \bananadec for $P$ in $G$.
	There is a polynomial time algorithm that, given an $(s,t)$-path of length at least $4k+5$ in $G$ that does not enter any \banana, outputs a path of length at least $\min\{\delta(G-\{s,t\})+k-1,\frac{3}{2}\delta(G-\{s,t\})-\frac{5}{2}k-1\}$ in $G$.
\end{restatable}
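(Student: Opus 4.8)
The hypothesis gives us an $(s,t)$-path $Q$ of length at least $4k+5$ in $G$ that avoids all \bananas{}. We want to convert it into a path of length at least $\min\{\delta(G-\{s,t\})+k-1,\frac{3}{2}\delta(G-\{s,t\})-\frac{5}{2}k-1\}$. The key resource is the \bananadec{} induced by $P_1,P_2$: by \Cref{lemma:st_path_edge_of_banana} there is a longest $(s,t)$-path entering an \banana{}, so the ``true'' optimum is not too far below $\delta(G-\{s,t\}) + (\text{length inside a banana})$, and bananas individually carry $\geq \delta(G-\{s,t\})-\Oh(1)$ edges by their $2$-connectivity (for type \ref{enum:tunnel_path_bic}) or by \Cref{lemma:separator_in_non_2c} (for the leaf-blocks arising from types \ref{enum:tunnel_path_cut_left}/\ref{enum:tunnel_path_cut_right}). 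So the strategy is: take $Q$, find a place where it passes ``close to'' a banana $M$ that it does not enter, detour into $M$ to pick up $\approx\frac12\delta(G-\{s,t\})$ extra edges, and splice the detour in while only sacrificing a bounded (in terms of $k$) portion of $Q$.

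First I would set up the combinatorial picture. Since $Q$ avoids every \banana{}, and since (by the definition of the decomposition) bananas are attached to $P_1\cup P_2$ in a very restricted way, I would argue that $Q$ must use $P_1\cup P_2$ heavily: informally, $Q$ can only ``get around'' the bananas by traveling along the separator $P_1\cup P_2$. More precisely, consider the components $H$ of $G - V(P_1\cup P_2)$. For type \ref{enum:tunnel_path_bic}, the matching between $V(H)$ and each of $V(P_1), V(P_2)$ has size one, so $Q$ can enter such an $H$ at most... well, $Q$ does enter $H$ possibly, but $H$'s banana is $H$ itself, which $Q$ avoids — so $Q$ does not enter $H$ at all in the type-\ref{enum:tunnel_path_bic} case. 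For types \ref{enum:tunnel_path_cut_left}/\ref{enum:tunnel_path_cut_right}, $Q$ may enter $H$ but must avoid all leaf-blocks' inner vertices; since $H$ attaches to one of $P_1,P_2$ through a single vertex, $Q$'s visit to $H\setminus(\text{leaf block inners})$ is confined. The upshot I want: a large fraction of $E(Q)$ — at least $|E(Q)| - \Oh(k)$ — lies inside $V(P_1)\cup V(P_2)$ together with the non-leaf-block parts of cut-type components, so in particular $Q$ contains a long subpath $R$ that lives on $P_1\cup P_2$ (or abuts it) near which some banana $M$ hangs.

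Next, the detour. Having located a banana $M$ with attachment vertex/vertices on $P_1$ (or $P_2$) that is ``hit'' by the long subpath $R\subseteq Q$, I split $Q$ at the point(s) where it meets $M$'s attachment. For a type-\ref{enum:tunnel_path_bic} banana this is awkward since it attaches by a single-edge matching on each side; for the cut-types, $M$ is a leaf-block attached (via its cut-vertex $c$) with a single neighbor on $P_1$ (say), and \Cref{lemma:separator_in_non_2c} (applied with $S$ being the appropriate piece of the separator, of size $\Oh(1)$, giving $|S|\le$ a small constant) produces a $(c,v)$-path inside $H$ of length $\ge \frac12(\delta(H) - |S|) \ge \frac12\delta(G-\{s,t\}) - \Oh(1)$, where $v$ is the relevant contact vertex. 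I then cut $Q$ just before it would pass the contact point, insert this long path into $M$ and back out, and re-attach the remainder of $Q$; the resulting walk is a path because $M$'s interior is disjoint from $Q$. The length accounting: I keep a prefix of $Q$ of length $\ge |E(Q)| - \Oh(k) \ge 4k+5-\Oh(k)$ up to the contact, add $\ge \frac12\delta(G-\{s,t\}) - \Oh(1)$ for the detour, and possibly lose a suffix. Balancing the two worst cases (the detour is worth a lot when $\delta$ is large; the retained part of $Q$ is worth a lot when $k$ is not too small relative to $\delta$) should yield exactly the stated $\min$ of $\delta(G-\{s,t\})+k-1$ and $\frac32\delta(G-\{s,t\})-\frac52 k -1$ — I'd expect the first bound to come from ``retain all of $Q$ plus a chunk of banana'' reasoning combined with $|E(Q)|\ge 4k+5$, and the second from a cleaner ``$P$-length plus detour minus overlap'' computation using $|E(P)|\le\delta(G-\{s,t\})+k$.

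**Main obstacle.** The delicate part is the bookkeeping that bounds how much of $Q$ must be ``wasted'' — i.e., proving that $Q$ can only deviate from the $P_1\cup P_2$-corridor by $\Oh(k)$ edges, and that there is a banana positioned so that the detour can be spliced in while cutting off only an $\Oh(k)$-length (not $\Omega(\delta)$-length) piece of $Q$. This requires carefully exploiting the single-vertex/single-edge attachment conditions \ref{enum:tunnel_path_bic}–\ref{enum:tunnel_path_cut_right} and the fact that $P'$ (the middle of $P$) has $\ge\delta(G-\{s,t\})$ edges, and likely a case analysis on whether $Q$ traverses $P_1$, $P_2$, both, or neither, and where along them. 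I expect the rest — invoking \Cref{lemma:separator_in_non_2c}, the final $\min$ arithmetic — to be routine once the corridor/placement claim is nailed down.
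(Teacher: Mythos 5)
There is a genuine gap, on two levels. First, your length accounting cannot reach the stated bound. Your detour is budgeted at roughly $\tfrac12\delta(G-\{s,t\})-\Oh(1)$ edges (via \Cref{lemma:separator_in_non_2c}) plus ``all but $\Oh(k)$'' of $Q$, i.e.\ at most $\tfrac12\delta(G-\{s,t\})+\Oh(k)$ in total; when $\delta(G-\{s,t\})\gg k$ the required minimum is $\delta(G-\{s,t\})+k-1$, which is far larger. The main $\delta$-term has to come from traversing an \emph{entire leaf-block} $L$ of a separable component via \Cref{thm:relaxed_st_path}, which gives a path of length at least $\delta(L-c)\ge \delta(G-\{s,t\})-2$ between the cut vertex $c$ and a vertex with a neighbour on $P_1$; \Cref{lemma:separator_in_non_2c} only supplies a secondary $\tfrac12\delta(H)-2k$ boost in one branch of the argument (and this is what produces the $\tfrac32\delta(G-\{s,t\})-\tfrac52k-1$ term). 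Your proposal never uses the Erd\H{o}s--Gallai path through a leaf-block in the construction, so the arithmetic cannot close. Second, your invocation of \Cref{lemma:separator_in_non_2c} ``with $S$ an appropriate $\Oh(1)$-size piece of the separator'' is not legitimate: in that lemma $S$ must be a subset of $V(H)$ avoiding the leaf-block interiors that actually separates some such vertex from the leaf-block interiors \emph{inside $H$}, and whether any usable separator exists is precisely the dichotomy one must analyze, not an assumption one may make.

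Concretely, the missing ideas are: (i) locate an edge of $Q$ to work from, by cutting off the first $k$ and last $k$ vertices of $Q$ and noting that the middle portion has length at least $2k+5>2|V(P_1)\cup V(P_2)|$ (since $|V(P_1)\cup V(P_2)|\le k+2$), hence contains an edge with no endpoint on $P_1\cup P_2$, which therefore lies in a non-leaf-block of a separable component $H$ of type \ref{enum:tunnel_path_cut_left} or \ref{enum:tunnel_path_cut_right}; and (ii) the case distinction on whether the set of those $2k$ end-vertices of $Q$ separates this edge from the leaf-block cut vertices within $H$. If it does, \Cref{lemma:separator_in_non_2c} applies with separator size $2k$ and, combined with an Erd\H{o}s--Gallai path through a leaf-block and a prefix of $P_1$ plus a suffix of $P_2$, yields the $\tfrac32\delta(G-\{s,t\})-\tfrac52k-1$ bound; if it does not, one walks from the edge to a cut vertex avoiding those $2k$ vertices, traverses the leaf-block, returns to $P_1$, and keeps a $k$-edge piece of $Q$, yielding $\delta(G-\{s,t\})+k-1$. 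Note also that your ``corridor'' claim (that $Q$ deviates from $P_1\cup P_2$ by only $\Oh(k)$ edges, so all but $\Oh(k)$ of $Q$ can be retained) is both false --- since $|V(P_1)\cup V(P_2)|\le k+2$, almost all of $Q$ lies inside non-leaf-block parts of separable components --- and unnecessary: in the correct construction one retains either only about $k$ edges of $Q$ or none of it at all.
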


\begin{proof}
	For clarity, we denote $\delta:=\delta(G-\{s,t\})$.
	Let $Q$ be the given $(s,t)$-path in $G$.
	Denote by $S$ the set of the first $k$ vertices on $Q$ and by $T$ the set of the last $k$ vertices on $Q$.
	Let $s'$ be the first vertex on $Q$ that is not in $S$ and $t'$ be the last vertex on $Q$ that is not in $T$.
	Since $Q$ consists of more than $2k$ vertices, $s',t'\notin S\cup T$.
	The length of the $(s,s')$-subpath of $Q$ and the length of the $(t',t)$-subpath of $Q$ are equal to $k$.
	
	The total length of $P_1$ and $P_2$ is at most $k$, hence $|V(P_1)\cup V(P_2)|\le k+2$.
	The length of the $(s',t')$-subpath of $Q$ is at least $2k+5>2|V(P_1)\cup V(P_2)|$.
	Hence, this subpath contains at least one edge of $G$ that is not incident to vertices in $|V(P_1)\cup V(P_2)|$.
	Denote the endpoints of this edge by $u$ and $v$.
	Since $Q$ does not enter any \banana, this edge is an edge of a non-leaf-block of some separable connected component $H$ of $G-V(P_1\cup P_2)$.
	The component $H$ corresponds to either \ref{enum:tunnel_path_cut_left} or \ref{enum:tunnel_path_cut_right} in the definition of \bananadec.
	Without loss of generality, we assume that $H$ corresponds to \ref{enum:tunnel_path_cut_left}.
	
	We now consider two cases depending on the structure of $H-(S\cup T)$.
	If $S\cup T$ separates $u$ or $v$ from all cut vertices of the leaf-blocks in $H$,
	then we have a set of size $2k$ in $H$ that satisfies the condition of Lemma~\ref{lemma:separator_in_non_2c}.
	Take a vertex $w$ in $H$ that has a neighbour in $V(P_2)$ in $G$.
	By \Cref{lemma:separator_in_non_2c}, a $(w,c)$-path of length at least \[\frac{1}{2}\delta(H)-2k\ge \frac{1}{2}\delta(G-V(P_1\cup P_2))-2k\ge \frac{1}{2}\delta(G-\{s,t\})-\frac{5}{2}k-1\]
	exists in $H$ for some cut vertex $c$ of some leaf-block $L$ of $H$.
	In this leaf-block, we have a vertex $z$ with a neighbour in $V(P_1)$.
	By Corollary~\ref{thm:relaxed_st_path}, we have a $(c,z)$-path of length at least $\delta(L-c)\ge \delta(G-\{s,t\})-2$ inside $L$.
	Combine the two paths and obtain a $(z,w)$-path of length at least $\frac{3}{2}\delta(G-\{s,t\})-\frac{5}{2}k-3$ inside $H$.
	Finally, prepend to this path a prefix of $P_1$ connecting $s$ with the neighbour of $z$, and append to this path a suffix of $P_2$ connecting the neighbour of $w$ with $t$.
	The length increases by at least two as $s\neq z$ and $w\neq t$.
	The obtained path is an $(s,t)$-path of length at least 	$\frac{3}{2}\delta(G-\{s,t\})-\frac{5}{2}k-1.$
	
	The second case is when from $v$ we can reach a cut vertex $c$ of some leaf-block $L$ in $H$ while avoiding vertices in $S\cup T$.
	Note that $V(Q)\cap V(L-c)= \emptyset$ by the properties of \bananadec.
	Then choose $w$ as an arbitrary vertex in $L-c$ with a neighbour in $V(P_1)$.
	Now construct a $(v,s)$-path $Q'$ in the following way.
	First, follow an arbitrary $(v,c)$-path in $H-(S\cup T)$.
	Then continue with a $(c,w)$-path of length at least $\delta(L-c)\ge \delta(G-\{s,t\})-2$ inside $L-c$ that exists by Corollary~\ref{thm:relaxed_st_path}.
	Note that this path has no common vertices with $Q$.
	Finish $Q'$ by going from $w$ to the neighbour of $w$ in $P_1$ and follow $P_1$ backwards down to $s$.
	
	Let $x$ be the last vertex before $c$ on $Q'$ that belongs to $V(Q)$.
	Let $y$ be the first vertex after $w$ on $Q'$ that belongs to $V(Q)$.
	Both $x,y$ are defined correctly since $v,s\in V(Q)$.
	Consider the $(x,y)$-subpath of $Q'$.
	It strictly contains the $(c,w)$-path inside $L$, so its length is at least $\delta(G-\{s,t\})-1$.
	Also, the length of the $(s,x)$-subpath of $Q$ and the length of the $(x,t)$-subpath of $Q$ is at least $k$ as $x\notin S\cup T$.
	
	We now construct a long $(s,t)$-path in $G$.
	If $y$ is contained in the $(s,x)$-subpath of $Q$, then the $(s,t)$-path is constructed in the following way: follow $P_1$ from $s$ to $y$, then follow $Q'$ backwards from $y$ down to $x$, and finish by following $Q$ from $x$ to $t$.
	The length of this path is at least $\delta(G-\{s,t\})-1+k$.
	If $y$ belongs to the $(x,t)$-subpath of $Q$, start by taking the $(s,x)$-subpath of $Q$, then follow $Q'$ from $x$ to $y$ and finish by following $P_2$ from $y$ to $t$.
	This path also has length at least $k+\delta(G-\{s,t\})-1$.
	The proof is complete.
\end{proof}

\subsection{Proof of Theorem~\ref{thm:eg_approx}}

To deal with the recursive structure of the solution, we introduce the following \emph{nested} generalization of an \bananadec. Intuitively, it captures how the structural observations of the previous subsection allow us to recursively construct \bananadec{}s with the aim of finding a long $(s,t)$-path. For an illustration of a \nestedbananadec, see \Cref{fig:nested_banana_example}. We recall the formal definition from Section~\ref{sec:overview}.

\begin{restatable}[\Nestedbananadec]{definition}{defnestedbanana}\label{def:nested_banana}
	A sequence of triples $(G_1,s_1,t_1)$, $(G_2,s_2,t_2)$, \ldots, $(G_\ell,s_\ell,t_\ell)$ is called a \emph{\nestedbananadec} for $G$ and two vertices $s,t \in V(G)$ if
	
	\begin{itemize}
		\item $(G_1,s_1,t_1)=(G,s,t)$;
		\item for each $i \in [\ell]$, either
		\begin{itemize}
			\item $\delta(G_i-\{s_i,t_i\})<16$, or
			\item \Cref{lemma:st_path_or_tunnel} applied to $G_i,s_i,t_i$ gives a path $P_i$ of length at least $\min\{\frac{5}{4}\delta(G_i-\{s_i,t_i\})-3,|V(G_i)|-1\}$ in $G_i$, or
			\item \Cref{lemma:st_path_or_tunnel} applied to $G_i,s_i,t_i$ gives a path $P_i$ and two paths $P_{i,1},P_{i,2}$ that induce an \bananadec for $P_i$ in $G_i$, and
			for each \banana $M$ of this decomposition there is $j>i$ such that $(G_j,s_j,t_j)$ is the result of \Cref{lemma:st_path_banana_to_2_connected} applied to $M$ in $G_i$.
			In this case, we say that $G_i$ is \emph{decomposed}.
		\end{itemize}
		\item for each $i \in \{2,\ldots,\ell\}$, there is $e(i)<i$ such that $(G_i,s_i,t_i)$ is a result of \Cref{lemma:st_path_banana_to_2_connected} applied to some \banana of the \bananadec of $G_{e(i)}$ for $P_{e(i)}$.
	\end{itemize}
\end{restatable}

The proof of Theorem~\ref{thm:eg_approx} is performed in two steps: first, we show how to obtain a \nestedbananadec for a given graph $G$, and then we use the \nestedbananadec to recursively construct a good approximation to the longest $(s, t)$-path. The first part is achieved simply by applying Lemma~\ref{lemma:st_path_or_tunnel} recursively on each \banana until components are no longer decomposable. The main hurdle is the second part, on which we focus for the rest of the section. For completeness, first we show that a \nestedbananadec can always be constructed in polynomial time.

\begin{lemma}
    \label{lemma:nested_construction}
    There is a polynomial time algorithm that, given a 2-connected graph $G$ and its two vertices $s$ and $t$, outputs a \nestedbananadec for $G$, $s$, $t$.
\end{lemma}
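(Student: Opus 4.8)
The plan is to build the \nestedbananadec\ by a straightforward top-down recursion that repeatedly invokes \Cref{lemma:st_path_or_tunnel}. I maintain an ordered list $\mathcal{L}$ of triples, initialized with the single triple $(G,s,t)$, and process the triples in the order in which they were inserted (breadth-first). When processing the current triple $(G_i,s_i,t_i)$, I first test whether $\delta(G_i-\{s_i,t_i\})<16$; if so, $(G_i,s_i,t_i)$ is a leaf and nothing more is done for it. Otherwise I run the polynomial-time algorithm of \Cref{lemma:st_path_or_tunnel} on $(G_i,s_i,t_i)$. If it returns a long $(s_i,t_i)$-path (its first outcome), then again $(G_i,s_i,t_i)$ is a leaf. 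Otherwise it returns a path $P_i$ together with paths $P_{i,1},P_{i,2}$ inducing an \bananadec\ of $G_i$ for $P_i$ (with the additional no-two-\bananas\ property); I then compute the connected components of $G_i-V(P_{i,1}\cup P_{i,2})$, classify each of them as one of the types \ref{enum:tunnel_path_bic}--\ref{enum:tunnel_path_cut_right}, and enumerate the corresponding \bananas. For every \banana\ $M$ I run the algorithm of \Cref{lemma:st_path_banana_to_2_connected} to obtain a triple $(K,s',t')$, append it to $\mathcal{L}$, and record its parent index as $i$. The output is the final list $\mathcal{L}$.

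That the output satisfies \Cref{def:nested_banana} is immediate from the construction: the first triple is $(G,s,t)$ by initialization; for every triple the branching guarantees that exactly one of the three alternatives of the definition holds; whenever $(G_i,s_i,t_i)$ is decomposed, each \banana\ of its decomposition is turned into a triple that is appended \emph{after} position $i$, so the required index $j>i$ exists; and every appended triple carries a parent index $e(i)<i$. Moreover, each iteration performs only a constant number of calls to polynomial-time subroutines --- the algorithm of \Cref{lemma:st_path_or_tunnel}, computation of blocks and connected components, the maximum-matching computations needed to recognize type \ref{enum:tunnel_path_bic} components, and the algorithm of \Cref{lemma:st_path_banana_to_2_connected} --- so the whole procedure runs in polynomial time \emph{provided} the number of triples it produces is polynomially bounded in $n$.

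Establishing that last bound is the one point that requires real care, and I expect it to be the main obstacle. The quantitative fact that makes it plausible is the strict shrinking of the triples: if $(K,s',t')$ is a child of $(G_i,s_i,t_i)$ obtained from a \banana\ $M$, then $V(K)\subseteq V(M)\cup\{s',t'\}$, while $M$ lies inside one of the at least two connected components of $G_i-V(P_{i,1}\cup P_{i,2})$, each of which has at least three vertices, and $|V(P_{i,1}\cup P_{i,2})|\ge 2$; hence $|V(K)|\le |V(M)|+2\le |V(G_i)|-3$, so the size drops by at least $3$ from a triple to any of its children, and the recursion tree has depth at most $n/3$. Turning this into a polynomial bound on $|\mathcal{L}|$ requires exploiting the structural disjointness inside an \bananadec: the \bananas\ of a single decomposition are pairwise vertex-disjoint apart from cut vertices, a triple is ever decomposed further only when it has at least $17$ vertices, and these facts must be combined with the strict size decrease to rule out an exponential blow-up of the branching across the (polynomially many) levels. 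Once $|\mathcal{L}|=n^{\Oh(1)}$ is in hand, the rest is the routine bookkeeping described above on top of the lemmas already available.
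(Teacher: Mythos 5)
Your construction is exactly the paper's: process triples starting from $(G,s,t)$, stop at a triple when $\delta(G_i-\{s_i,t_i\})<16$ or when \Cref{lemma:st_path_or_tunnel} returns a long path, and otherwise spawn one child per \banana via \Cref{lemma:st_path_banana_to_2_connected}. Your verification that the output satisfies \Cref{def:nested_banana} is also fine. The problem is that the only non-routine content of this lemma --- a proof that the number of triples, and hence the running time, is polynomial --- is not actually given. You explicitly identify it as ``the main obstacle,'' offer a depth bound of $n/3$ via the size decrease, and then state that the pairwise near-disjointness of \bananas{} ``must be combined'' with that decrease to rule out exponential blow-up, without carrying out the combination. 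A depth bound alone is of course compatible with exponentially many triples, so as written the proposal proves the definitional bookkeeping but not the complexity claim, which is the heart of the statement.

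For comparison, the paper closes this gap by a direct induction on the recursion: if one call (without recursion) costs $\alpha n^c$, then the total cost is at most $\alpha n^{c+1}$, using that the sets $V(G_{j_i})\setminus\{s_{j_i},t_{j_i}\}$ of the children of a decomposed triple are pairwise disjoint subsets of $V(G_i)\setminus\{s_i,t_i\}$ (so their sizes sum to at most $n$), that there are always at least two children, and the numerical inequality $\sum_i n_i^{c+1}\le n^{c+1}-n^{c}$ (\Cref{claim:sum_powers}). Note that the ingredients you list do suffice for a clean counting argument if you push them through: the disjointness of the children's inner vertex sets propagates down the tree, so at every level of the recursion tree the triples occupy pairwise disjoint nonempty subsets of $V(G)\setminus\{s,t\}$, giving at most $n$ triples per level; combined with your depth bound this yields $\Oh(n^2)$ triples, and since each triple is processed in time polynomial in its own size, polynomial total time follows. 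Either route works, but some such argument must appear explicitly; its absence is a genuine gap in the proposal.
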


\begin{proof}
    The algorithm proceeds recursively, starting with the triple $(G_1, s_1, t_1) = (G, s, t)$. For the given triple $(G_i, s_i, t_i)$,
    if $\delta(G_i-\{s_i,t_i\})<16$, the algorithm stops. Otherwise, invoke the algorithm of \Cref{lemma:st_path_or_tunnel} on $(G_i, s_i, t_i)$. If this returns a path $P_i$ of length at least $\frac{5}{4}\delta(G_i-\{s_i,t_i\})-3$, the algorithm stops. On the other hand, if an \bananadec is returned, for each \banana $M$ run the algorithm of Lemma~\Cref{lemma:st_path_banana_to_2_connected} on $M$ to obtain a triple $(G_j, s_j, t_j)$, where $j$ is the lowest free index among the triples produced so far. Run the main algorithm recursively on each of the triples generated on this step.

    By definition, the algorithm above produces a \nestedbananadec. To show that the running time is polynomial,
    first observe that running the algorithm without the subsequent recursive calls is clearly polynomial. Assume this running time is bounded by $\alpha n^c$ for some constant $\alpha > 0$ and $c \ge 1$, where $n = |V(G) \setminus \{s, t\}|$ and ($G$, $s$, $t$) is the current instance. We show by induction on the depth of the resulting \nestedbananadec that the running time of the recursive algorithm is at most $\alpha n^{c + 1}$. If the instance does not spawn any recursive calls, this trivially holds. Otherwise, assume $\ell$ new instances $(G_{j_1}, s_{j_1}, t_{j_1})$, \ldots, $(G_{j_\ell}, s_{j_\ell}, t_{j_\ell})$ are produced, denote $n_i = |V(G_{j_i} \setminus \{s_{j_i}, t_{j_i}\}|$. Note that $\ell \ge 2$ since there are always at least 2 components in an \bananadec.
    By induction, the running time is bounded by $\alpha \cdot \left(n^c + \sum_{i = 1}^\ell n_i^{c + 1}\right)$. We now bound $\sum_{i = 1}^\ell n_i^{c + 1}$, observe first that $\sum_{i = 1}^\ell n_i \le n$, as all the sets $V(G_{j_i} \setminus \{s_{j_i}, t_{j_i}\}$ are disjoint and do not contain $s$ or $t$. We use the following numerical observation proven in~\cite{FominGSS22}.
    \begin{claim}[Proposition 3 in~\cite{FominGSS22}]\label{claim:sum_powers}
		Let $a_1, a_2, \ldots, a_q$ be a sequence of $q\ge 2$ positive integers with $\sum_{i=1}^q a_i=n$.
		Let $x>1$ be an integer.
		Then $\sum_{i=1}^q a_i^x \le (n-1)^x+1 \le n^x - n^{x-1}$.
    \end{claim}
    By Claim~\ref{claim:sum_powers}, we can bound the running time by
    \[\alpha \cdot \left(n^c + \sum_{i = 1}^\ell n_i^{c + 1}\right) \le \alpha \cdot \left(n^c + n^{c + 1} - n^c \right) = \alpha n^{c + 1},\]
    completing the proof.

\end{proof}

Clearly, it follows that the size of a \nestedbananadec returned by Lemma~\ref{lemma:nested_construction} is also polynomial.
Observe also that the construction algorithm invokes \Cref{lemma:st_path_or_tunnel} for all sufficiently large $G_i$, thus in what follows we assume that the corresponding paths $P_i$ are already computed.

Now we focus on using a constructed \nestedbananadec for approximating the longest $(s, t)$-path. First of all, we present the algorithm \texttt{\LongNestedPathName} that, given a \nestedbananadec of $G$, computes a long $(s, t)$-path by going over the decomposition. The pseudocode of \texttt{\LongNestedPathName} is present in \Cref{alg:long_nested_path}.
Intuitively, first the algorithm computes a compression $H$ of the graph $G$ that respects the \nestedbananadec: components that are not decomposed are replaced by single edges, and edges that are ``unavoidable'' to visit a component are contracted. The computation of this compression is encapsulated in the \texttt{\NestedCompressionName} function presented in \Cref{alg:nested_compress}.
As a subroutine, this function uses the \texttt{two\_long\_disjoint\_paths} algorithm given by \Cref{cor:two_long_st_paths}, that finds two disjoint paths of at least the given length between the given pairs of vertices.

Next, the blackbox approximation algorithm \texttt{long\_st\_path\_approx} is used to compute an $(s, t)$-path $Q$ in $H$. The function \texttt{\NestedDecompressionName} reconstructs then this path in the original graph $G$, see \Cref{alg:nested_decompress} for the pseudocode. Later we argue (\Cref{lemma:path_in_H}) that any $(s,t)$-path in $H$ of length $r$ yields in this way an $(s, t)$-path in $G$ of length at least $\delta(G-\{s,t\})+r/8-3$. Finally, either the length of $Q$ in $H$ was large enough and the reconstructed path provides the desired approximation or a long path can be found inside one of the components in a ``simple'' way, and then connected arbitrarily to $\{s, t\}$. Specifically, in this component, it suffices to either take an approximation of the longest path computed by \texttt{long\_st\_path\_approx}, or a long Erd{\H{o}}s--Gallai path returned by the algorithm from \Cref{thm:relaxed_st_path}, \texttt{long\_eg\_st\_path}.
Thus, in the final few lines \texttt{\LongNestedPathName} checks whether any of these paths is longer than the reconstructed path $Q$.
The path from inside the component is extended to an $\{s, t\}$-path in $G$ by using the algorithm \texttt{two\_long\_disjoint\_paths}, given by \Cref{cor:two_long_st_paths}, with the parameter $0$.

\IncMargin{1em} 
\begin{algorithm}[h]
	\SetKwFunction{LongestPath}{longest\_path}%%
	\SetKwFunction{LongCycleApprox}{longest\_cycle\_approx}%%
	\SetKwFunction{LongSTPath}{long\_st\_path}%%
	\SetKwFunction{LongErdosSTPathApprox}{long\_eg\_st\_path\_approx}%%
	\SetKwFunction{LongSTPathApprox}{long\_st\_path\_approx}%%
	\SetKwFunction{LongErdosSTPath}{long\_eg\_st\_path}%%
	\SetKwFunction{LongSTCycle}{long\_st\_cycle}%% 
	\SetKwFunction{LongDiracCycle}{long\_dirac\_cycle}%%
	\SetKwFunction{HamPath}{hamiltonian\_path}
	\SetKwFunction{LongNestedSTPath}{\LongNestedPathName}%%
	\SetKwFunction{NestedCompression}{\NestedCompressionName}%%
	\SetKwFunction{TwoLongDisjointPaths}{two\_long\_disjoint\_paths}
	\let\oldnl\nl% Store \nl in \oldnl
	\newcommand{\nonl}{\renewcommand{\nl}{\let\nl\oldnl}}% Remove line number for one line
	\Indm \nonl \NestedCompression{$(G_1,s_1,t_1),(G_2,s_2,t_2),\ldots,(G_\ell,s_\ell,t_\ell)$}
	
	\Indp
	
	\KwIn{a \nestedbananadec for $G$, $s$ and $t$.}
	\KwOut{the compressed graph $H$.}

	$H \longleftarrow G$\;
	\ForEach{$i \in \{2,\ldots,\ell\}$}{
		$j \longleftarrow e(i)$\;
		$d_i \longleftarrow |\{s_j,t_j\}\setminus\{s_{i},t_{i}\}|$\;
		\If{\TwoLongDisjointPaths$(G_i, \{s_j,t_j\},\{s_i,t_i\},d_i+1)$ is \textsc{No}}{
			contract all edges of a maximum matching between $\{s_j,t_j\}$ and $\{s_i,t_i\}$ in $H$\label{line:edge_contraction}\;
		}
		\If{$G_i$ is not decomposed}{
			remove all vertices in $V(G_i)\setminus \{s_i,t_i\}$ from $H$\;
			add edge $s_it_i$ to $H$ and mark it with $G_i$\;
		}
	}

	\Return $H$\;
	\caption{The algorithm compressing a given graph $G$ with a given \nestedbananadec.}\label{alg:nested_compress}
\end{algorithm}

\begin{algorithm}[ht]
	\SetKwFunction{LongestPath}{longest\_path}%%
	\SetKwFunction{LongCycleApprox}{longest\_cycle\_approx}%%
	\SetKwFunction{LongSTPath}{long\_st\_path}%%
	\SetKwFunction{LongErdosSTPathApprox}{long\_eg\_st\_path\_approx}%%
	\SetKwFunction{LongSTPathApprox}{long\_st\_path\_approx}%%
	\SetKwFunction{LongErdosSTPath}{long\_eg\_st\_path}%%
	\SetKwFunction{LongSTCycle}{long\_st\_cycle}%% 
	\SetKwFunction{LongDiracCycle}{long\_dirac\_cycle}%%
	\SetKwFunction{HamPath}{hamiltonian\_path}
	\SetKwFunction{LongNestedSTPath}{\LongNestedPathName}%%
	\SetKwFunction{TwoLongDisjointPaths}{two\_long\_disjoint\_paths}
	\SetKwFunction{NestedDecompression}{\NestedDecompressionName}
	\let\oldnl\nl% Store \nl in \oldnl
	\newcommand{\nonl}{\renewcommand{\nl}{\let\nl\oldnl}}% Remove line number for one line
	\Indm \nonl \NestedDecompression{$(G_1,s_1,t_1),(G_2,s_2,t_2),\ldots,(G_\ell,s_\ell,t_\ell),H,Q$}
	
	\Indp
	
	\KwIn{a \nestedbananadec for $G, s$ and $t$, the compressed graph $H$ and an $(s,t)$-path $Q$ in $H$ of length $r$.}
	\KwOut{an $(s,t)$-path of length at least $\delta(G-\{s,t\})+r/8-3$ in $G$.}

	\ForEach{$i \in \{2,\ldots,\ell\}$ such that $d_i>0$ and $Q$ enters $G_i$}{\label{line:Q_transformation_start}
		$j \longleftarrow e(i)$\;
		\eIf{an edge between $\{s_j,t_j\}$ and $\{s_i,t_i\}$ was contracted in $H$}{
			replace $s_i$ and/or $t_i$ in $Q$ with the respective contracted edges;}
		{
			$S_1, S_2 \longleftarrow \TwoLongDisjointPaths(G, \{s_j,t_j\},\{s_i,t_i\},d_i+1)$\;
			replace the two subpaths of $Q$ going from $\{s_j,t_j\}$ to $\{s_i,t_{i}\}$ with $S_1$ and $S_2$ if the length of $Q$ increases\label{line:two_paths_replacement}\;
		}
	}\label{line:Q_replacement_end}

	$h \longleftarrow $ largest $h \in [\ell]$ such that $Q$ enters $G_h$\;
	\eIf{$G_h$ is not decomposed}{
		replace $s_ht_h$ in $Q$ with $P_h$\;\label{line:Q_replace_marked}
	}{
        $k' \longleftarrow \lfloor(|E(Q) \cap E(G_h)| - 5) / 8\rfloor$\;
        \eIf{$|E(P_h)| \ge \delta(G_h - \{s_h, t_h\}) + k'$}{
	        $R \longleftarrow P_h$\;
        } {
		    $R \longleftarrow $ result of \Cref{lemma:st_path_to_banana_st_path} applied to $G_h,P_h$ and the $(s_h,t_h)$-subpath of $Q$\;
	    }
		\If{$(s_h,t_h)$-subpath of $Q$ is shorter than $R$}{
			replace the $(s_h,t_h)$-subpath of $Q$ with $R$\;
		}
	}\label{line:Q_transformation_end}
	
	\Return $Q$\;
	\caption{The algorithm decompressing a path in $H$ into a long path in $G$.}\label{alg:nested_decompress}
\end{algorithm}

\begin{algorithm}[ht]
	\SetKwFunction{LongestPath}{longest\_path}%%
	\SetKwFunction{LongCycleApprox}{longest\_cycle\_approx}%%
	\SetKwFunction{LongSTPath}{long\_st\_path}%%
	\SetKwFunction{LongErdosSTPathApprox}{long\_eg\_st\_path\_approx}%%
	\SetKwFunction{LongSTPathApprox}{long\_st\_path\_approx}%%
	\SetKwFunction{LongErdosSTPath}{long\_eg\_st\_path}%%
	\SetKwFunction{LongSTCycle}{long\_st\_cycle}%% 
	\SetKwFunction{LongDiracCycle}{long\_dirac\_cycle}%%
	\SetKwFunction{HamPath}{hamiltonian\_path}
	\SetKwFunction{LongNestedSTPath}{\LongNestedPathName}%%
	\SetKwFunction{TwoLongDisjointPaths}{two\_long\_disjoint\_paths}
	\let\oldnl\nl% Store \nl in \oldnl
	\newcommand{\nonl}{\renewcommand{\nl}{\let\nl\oldnl}}% Remove line number for one line
	\Indm \nonl \LongNestedSTPath{$(G_1,s_1,t_1),(G_2,s_2,t_2),\ldots,(G_\ell,s_\ell,t_\ell)$}
	
	\Indp
	
	\KwIn{a \nestedbananadec for $G, s$ and $t$.}
	\KwOut{an $(s,t)$-path of length at least $\delta(G-\{s,t\})+f(k)/32-3$ in $G$ where $k=\pathlength-\delta(G-\{s,t\})$ for the longest $(s,t)$-path length \mathpathlength in $G$.}

	$H \longleftarrow \NestedCompression((G_1,s_1,t_1),(G_2,s_2,t_2),\ldots,(G_\ell,s_\ell,t_\ell))$\;
	
	$Q \longleftarrow \LongSTPathApprox(H,s,t)$\label{line:nested_H_finished}\label{line:Q_approx}\;	
	$Q\longleftarrow \NestedDecompression((G_1,s_1,t_1),(G_2,s_2,t_2),\ldots,(G_\ell,s_\ell,t_\ell),H,Q)$\;
	\ForEach{$i \in [\ell]$}{
		$P_i \longleftarrow $ the longest of $\{\LongSTPathApprox(G_i,s_i,t_i),\LongErdosSTPath(G_i,s_i,t_i)\}$\label{line:longest_P_h}\;
		$Q \longleftarrow $ the longest of $\{Q,\TwoLongDisjointPaths(G,\{s,t\},\{s_i,t_i\}, 0)\cup P_i\}$\label{line:outer_appendage}\;
	}
	
	\Return $Q$\;
	\caption{The algorithm finding a long $(s,t)$-path in a $2$-connected graph with a given \nestedbananadec.}\label{alg:long_nested_path}
\end{algorithm}

\DecMargin{1em}

Now, our goal is to show that the path that the \texttt{\LongNestedPathName} algorithm constructs serves indeed as the desired approximation of the longest $(s, t)$-path in $G$. For the rest of this section, let $G_1, \ldots, G_\ell$ be the given \nestedbananadec for $G, s,t$. An important piece of intuition about \nestedbananadec is that, as we go deeper into the nested \bananas, the minimum degree of the component $\delta(G_i \setminus \{s_i, t_i\})$ decreases, but we gain more and more edges that we collect while going from $\{s, t\}$ to $\{s_i, t_i\}$. We introduce values that help us measure this difference between the nested components: for each $i\in[\ell]$, denote $d_i=|\{s_{e(i)},t_{e(i)}\}\setminus \{s_i,t_i\}|$.
In particular, by \Cref{lemma:st_path_banana_to_2_connected} we know that for any $i \in [\ell]$, $\delta(G_{i})\ge \delta(G_{e(i)})-d_{i}$. On the other hand, any pair of disjoint paths that connects $\{s_{e(i)}, t_{e(i)}\}$ to $\{s_i, t_i\}$ contains at least $d_i$ edges.
This leads to the following simple observation about extending an $(s_j, t_j)$-path in a component $G_j$ to an $(s, t)$-path in $G$.

\begin{claim}\label{claim:at_least_sumx}
	For each $j\in [\ell]$, let $G_{j_1},\ldots, G_{j_c}$ be such that $j_c=j$ and $j_1=1$ and $e(j_{i+1})=j_i$ for each $i\in[c-1]$.
	Let $P$ be an $(s_j, t_j)$-path in $G_j$.
    Then $P$ combined with any pair of disjoint paths connecting $\{s, t\}$ to $\{s_j, t_j\}$ yields an $(s, t)$-path in $G$ of length at least $|E(P)|+\sum_{i\in [c-1]} d_{j_{i+1}}$.
\end{claim}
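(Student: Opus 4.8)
The plan is to prove the two halves separately: that $P$ together with any disjoint pair $D_1,D_2$ joining $\{s,t\}$ to $\{s_j,t_j\}$ concatenates into an honest simple $(s,t)$-path, and that $|E(D_1)|+|E(D_2)|\ge \sum_{i\in[c-1]}d_{j_{i+1}}$; the stated length bound then follows at once since the concatenation has length $|E(D_1)|+|E(P)|+|E(D_2)|$. The single structural ingredient I need is a separator property of the nested decomposition. Writing $X_i:=\{s_{j_i},t_{j_i}\}$, so that $X_1=\{s,t\}$, $X_c=\{s_j,t_j\}$, and $V(P)\subseteq V(G_j)$, I claim that $X_i$ separates the interior $V(G_{j_i})\setminus X_i$ from $V(G)\setminus V(G_{j_i})$ in $G$. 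This is because $(G_{j_i},s_{j_i},t_{j_i})$ is obtained via \Cref{lemma:st_path_banana_to_2_connected} from a banana $M_{j_i}$ of $G_{j_{i-1}}$ with $V(G_{j_i})\setminus X_i=V(M_{j_i})$, so $X_i$ separates $V(M_{j_i})$ from the rest of $G_{j_{i-1}}$ --- this is exactly what underlies \Cref{lemma:st_path_banana_consecutive} --- and, since the construction of each $G_{j_i}$ from $G_{j_{i-1}}$ takes an induced subgraph (a leaf-block or a connected component) and adjoins two boundary vertices, the interior of $G_{j_i}$ induces the same graph in $G_{j_i}$, in $G_{j_{i-1}}$, and, by induction along $e(\cdot)$, in $G$; hence the separation lifts from $G_{j_{i-1}}$ all the way to $G$. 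I also record the elementary fact that at every level $s$ is either outside $V(G_{j_i})$ or equal to $s_{j_i}$, and symmetrically for $t$, since $s$ always sits at the start of the separating path on the $P_1$-side.

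For the gluing, assume without loss of generality (relabelling $s\leftrightarrow t$ if necessary) that $D_1$ joins $s$ to $s_j$ and $D_2$ joins $t$ to $t_j$. Disjointness forces $t_j\notin V(D_1)$ and $s_j\notin V(D_2)$, so by the $X_c$-separator property neither $D_k$ can enter $V(G_j)\setminus X_c$: such an entry would have to cross $\{s_j,t_j\}$, but $D_1$ only ever touches $s_j$ (its terminus) and $D_2$ only $t_j$. (If $s$ or $t$ happens to lie in $X_c$ the corresponding $D_k$ is a one-vertex path and there is nothing to check.) Hence $V(D_1)\cap V(G_j)=\{s_j\}$ and $V(D_2)\cap V(G_j)=\{t_j\}$, and $D_1\cup P\cup D_2$ is precisely the walk following $D_1$ from $s$ to $s_j$, then $P$ from $s_j$ to $t_j$, then $D_2$ reversed from $t_j$ to $t$, with no repeated vertex: a simple $(s,t)$-path.

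For the length I argue directly. By the separator property each of $D_1,D_2$ meets every $X_i$, and by disjointness in exactly one vertex; call them $\sigma_i\in V(D_1)$ and $\tau_i\in V(D_2)$, so $X_i=\{\sigma_i,\tau_i\}$. Walking along $D_1$ from its endpoint in $\{s,t\}$, the vertices $\sigma_1,\dots,\sigma_c$ appear in this order (weakly): before reaching $\sigma_i$ the path $D_1$ has not entered $V(G_{j_i})$ at all, for it would have had to cross $X_i$ there, while $\sigma_{i+1}\in X_{i+1}\subseteq V(G_{j_{i+1}})\subseteq V(G_{j_i})$; and equal values among the $\sigma_i$ are necessarily consecutive since no vertex is visited twice. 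Consequently the distinct values of $\sigma_1,\dots,\sigma_c$ cut $[c]$ into consecutive runs, $D_1$ uses at least one edge between consecutive runs, and these edge-stretches are pairwise disjoint, so
\[
|E(D_1)|\ \ge\ \#\{i\in[c-1]:\sigma_i\ne\sigma_{i+1}\}\ \ge\ \#\{i\in[c-1]:\sigma_i\notin X_{i+1}\},
\]
and symmetrically $|E(D_2)|\ge\#\{i\in[c-1]:\tau_i\notin X_{i+1}\}$. Adding these, and using that $d_{j_{i+1}}=|X_i\setminus X_{i+1}|$ equals the number of elements of $\{\sigma_i,\tau_i\}$ lying outside $X_{i+1}$, yields $|E(D_1)|+|E(D_2)|\ge\sum_{i\in[c-1]}d_{j_{i+1}}$, as required.

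The part I expect to be most delicate is not this counting but pinning down the separator property and the degenerate cases. Concretely, one has to unwind \Cref{def:banana} and \Cref{lemma:st_path_banana_to_2_connected} to check that every $G_{j_i}$ genuinely inherits all of $G$'s edges among its interior vertices --- this is what makes the separation lift cleanly from a parent component to $G$ along the tree of the nested decomposition --- and one has to treat the edge cases where $s$ or $t$ coincides with an intermediate entry point $s_{j_i}$ or $t_{j_i}$, or where $\{s,t\}$ meets $\{s_j,t_j\}$. In all such cases some of the $\sigma_i$ or $\tau_i$ collapse and/or some $D_k$ degenerates to a point, but the two displayed inequalities are unaffected, since a collapse on, say, the $D_1$ side happens exactly when the corresponding elements of the $X_i$'s coincide, which is precisely when those elements contribute $0$ to $\sum_i|X_i\setminus X_{i+1}|$ on that side. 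Everything else is bookkeeping.
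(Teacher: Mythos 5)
Your proof is correct and takes essentially the route the paper intends: the paper states this claim without proof, as an immediate consequence of the preceding remark that any pair of disjoint paths between consecutive entry pairs $\{s_{e(i)},t_{e(i)}\}$ and $\{s_i,t_i\}$ must use at least $d_i$ edges, and your separator-plus-crossing count is the natural formalization of that remark summed along the branch $j_1,\dots,j_c$. The one point requiring the care you already flag — that $\{s_{j_i},t_{j_i}\}$ separates the interior of $G_{j_i}$ from the rest of $G$, lifted level by level — does indeed hold by unwinding \Cref{def:banana} and the construction behind \Cref{lemma:st_path_banana_to_2_connected}, so no genuine gap remains.
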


However, there might also exist longer paths connecting nested components $G_{e(i)}$ and $G_i$.
When we construct the compressed graph $H$ in \Cref{alg:nested_compress}, we distinguish between two cases. Either any pair of such paths have the  total length $d_i$, meaning that the only option is to use the edges of a matching between $\{s_{e(i)}, t_{e(i)}\}$ and $\{s_i, t_i\}$. In that case we simply contract these edges as we know that there is no choice on how to reach $G_i$ from $G_{e(i)}$. Or, there is a pair of disjoint paths of total length at least $d_i + 1$. This situation is beneficial to us in a different way: since we can find such a pair of paths in polynomial time, we can traverse at least $d_i + 1$ edges going from $G_{e(i)}$ to $G_i$, while we only lose at most $d_i$ in the minimum degree. This dichotomy on the structure of the ``slice'' between two nested components is the main leverage that allows us to lift the length of an $(s,t)$-path in $H$ to an offset above the minimum degree in $G$.
We formally show this crucial property of the compressed graph $H$ and the \texttt{\NestedDecompressionName} routine in the next lemma.

	\begin{lemma}\label{lemma:path_in_H}
		The \texttt{\NestedDecompressionName} routine transforms an $(s,t)$-path $Q$ in $H$ of length $r$ into an $(s,t)$-path in $G$ of length at least $\delta(G-\{s,t\})+r/8-3$.
	\end{lemma}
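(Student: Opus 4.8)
The plan is to track what happens to the length of $Q$ through the two phases of \texttt{\NestedDecompressionName}: first the ``slice expansion'' loop (lines~\ref{line:Q_transformation_start}--\ref{line:Q_replacement_end}) that replaces the matching edges between nested components by longer disjoint path pairs (when available) or by the contracted edges of $H$, and second the final block (lines~\ref{line:Q_replace_marked}--\ref{line:Q_transformation_end}) that handles the deepest component $G_h$ entered by $Q$. Throughout, fix the chain $G_{j_1}, \ldots, G_{j_c}$ with $j_1 = 1$, $j_c = h$, $e(j_{i+1}) = j_i$ that records how $G_h$ sits inside $G$, as in Claim~\ref{claim:at_least_sumx}.

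First I would argue that after the expansion loop the current path $Q'$ is a genuine $(s,t)$-walk in $G$ (in fact a path, since the replacement subpaths are internally disjoint from the rest by the separation properties of \bananadec{}s and Lemma~\ref{lemma:st_path_banana_consecutive}), and that its length is at least $r + \sum_{i : d_{j_{i+1}} > 0} d_{j_{i+1}}$ accounted against the ``missing'' edges in $H$: each contraction in \texttt{\NestedCompressionName} removed $d_i$ edges from $G$ along the slice into $G_i$, so reversing it on $Q$ restores exactly those $d_i$ edges; each non-contracted slice was already represented in $H$ by a pair of disjoint paths of total length at least $d_i + 1 \ge d_i$, but here we may \emph{replace} it by the freshly recomputed \texttt{two\_long\_disjoint\_paths} output of length $\ge d_i + 1$ only if $Q$ gets longer, so length never drops. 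The key bookkeeping claim is that the total degree deficit $\delta(G - \{s,t\}) - \delta(G_h - \{s_h,t_h\})$ is at most $\sum_{i \in [c-1]} d_{j_{i+1}}$ by iterating part~3 of Lemma~\ref{lemma:st_path_banana_to_2_connected}; so the $d$-sum we have collected compensates exactly the drop in minimum degree between $G$ and $G_h$.

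Next I would analyze the final block. Let $q = |E(Q') \cap E(G_h)|$ be the length of the $(s_h, t_h)$-subpath of $Q'$ entering the deepest component; since $Q$ entered $G_h$ in $H$ and no expansion touches the interior of $G_h$, this equals $|E(Q) \cap E(G_h)|$. If $G_h$ is not decomposed, we replace the marked edge $s_h t_h$ by the Erd{\H{o}}s--Gallai-type path $P_h$ of length $\ge \delta(G_h - \{s_h,t_h\})$, net gain nonnegative after accounting for the one edge removed. If $G_h$ is decomposed, set $k' = \lfloor (q - 5)/8 \rfloor$; the replacement path $R$ has length $\ge \delta(G_h - \{s_h,t_h\}) + k'$ in the first subcase by the test, and in the second subcase Lemma~\ref{lemma:st_path_to_banana_st_path} (applicable once $q \ge 4k' + 5$, which holds by the choice of $k'$) gives length $\ge \min\{\delta(G_h-\{s_h,t_h\}) + k' - 1, \tfrac32\delta(G_h-\{s_h,t_h\}) - \tfrac52 k' - 1\}$; since $k' \le q/8 \le \delta(G_h-\{s_h,t_h\})$ (as $q$ is at most the length of a longest $(s_h,t_h)$-path, which is $\delta(G_h-\{s_h,t_h\}) + O(\cdot)$ — I would make this precise), the second term dominates the first and in all cases $|E(R)| \ge \delta(G_h-\{s_h,t_h\}) + q/8 - O(1)$. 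The replacement happens only if it lengthens $Q'$, so we lose nothing; since the part of $Q'$ outside $G_h$ had length $r - q + \sum d_{j_{i+1}}$ and $\sum d_{j_{i+1}} \ge \delta(G-\{s,t\}) - \delta(G_h-\{s_h,t_h\})$, the total is at least $\delta(G - \{s,t\}) + r/8 - 3$ after collecting the constants, which is the claim.

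The main obstacle I expect is the careful constant-tracking in the decomposed-$G_h$ case: one must verify both that the hypothesis $q \ge 4k'+5$ of Lemma~\ref{lemma:st_path_to_banana_st_path} is met by the floor-based choice of $k'$ and that the ``$\frac32\delta - \frac52 k' - 1$'' branch is the binding one, which rests on $k'$ being genuinely small relative to $\delta(G_h - \{s_h,t_h\})$ — this needs the bound $q \le \delta(G_h-\{s_h,t_h\}) + (\text{longest }(s_h,t_h)\text{-path offset})$ together with the fact that the black-box path $Q$ restricted to $G_h$ cannot be longer than the true optimum there. A secondary subtlety is checking that the expansion replacements in lines~\ref{line:two_paths_replacement} really keep $Q$ a simple path and do not interfere with the subpath entering $G_h$; this follows from the nested structure (each slice is a genuine separator in $G$ between the ``outside'' and $V(G_i) \setminus \{s_i, t_i\}$) but deserves a short explicit argument.
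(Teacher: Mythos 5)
Your overall route mirrors the paper's (analyze the slice-expansion loop, then do a case analysis on the deepest component $G_h$ via \Cref{lemma:st_path_to_banana_st_path}), but there is a genuine gap in the decomposed-$G_h$ case. You set $k'=\lfloor (q-5)/8\rfloor$ and argue that in the branch where \Cref{lemma:st_path_to_banana_st_path} is invoked, the term $\tfrac32\delta(G_h-\{s_h,t_h\})-\tfrac52k'-1$ dominates $\delta(G_h-\{s_h,t_h\})+k'-1$, on the grounds that $k'\le q/8\le\delta(G_h-\{s_h,t_h\})$. This cannot be made precise: Erd\H{o}s--Gallai-type results give only a \emph{lower} bound on the longest $(s_h,t_h)$-path, so $q$ admits no upper bound in terms of $\delta(G_h-\{s_h,t_h\})$; and even if $k'\le\delta(G_h-\{s_h,t_h\})$ held, domination requires the much stronger $\delta(G_h-\{s_h,t_h\})\ge 7k'$. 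So the regime $7k'>\delta(G_h-\{s_h,t_h\})$ is left unhandled, and it can genuinely occur. The paper closes exactly this case by a different move: if $\tfrac12\delta(G_h-\{s_h,t_h\})-\tfrac52k'<k'$, then $q\ge 8k'+5$ forces $q$ itself to exceed $\delta(G_h-\{s_h,t_h\})+q/8+O(1)$, so the \emph{original} $(s_h,t_h)$-subpath of $Q$ already supplies the needed offset inside $G_h$, and since the algorithm only replaces the subpath when $R$ is longer, keeping it suffices. You mention the ``replace only if longer'' safeguard, but only as a remark; it is in fact the missing ingredient, whereas the attempt to rule out the bad branch of the minimum cannot work.

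A second, less damaging flaw is the accounting outside $G_h$. Your claim that after the expansion loop the path has length at least $r+\sum_{i\colon d_{j_{i+1}}>0}d_{j_{i+1}}$ (equivalently, outside length $r-q+\sum_{i\in[c-1]}d_{j_{i+1}}$) is false: for a slice where no contraction was made, the crossing edges of $Q$ are already counted in $r$, and the recomputed disjoint paths only guarantee total length $d_{j_{i+1}}+1$ in that slice (and are installed only if the path gets longer), so no extra $d_{j_{i+1}}$ is gained there. Hence the degree deficit $\delta(G-\{s,t\})-\delta(G_h-\{s_h,t_h\})\le\sum_{i\in[c-1]}d_{j_{i+1}}$ (from iterating \Cref{lemma:st_path_banana_to_2_connected}) is not compensated by newly added edges alone; for uncontracted slices it must be paid out of edges already in $r$. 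The paper's fix is the inequality $d+1\ge\tfrac32 d$ (valid since $d\le 2$), which shows that after paying the deficit at least a third of the outside length $p=r-q$ survives; since $p/3+q/8\ge r/8$, the stated bound still follows, so this error is repairable, but the bookkeeping as you wrote it asserts an incorrect intermediate inequality.
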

	
	\begin{proof}
        Observe that in the tree of the \nestedbananadec, the path $Q$ visits a rooted subpath of components $G_i$.
		That is, there are indices $j_1,j_2,\ldots, j_c \in [\ell]$ such that $j_1=1$ and $e(j_{i+1})=j_i$ for each $i\in[c-1]$.
		This holds since in a \bananadec on each level, $Q$ visits at most one \banana by \Cref{lemma:st_path_or_tunnel}.
		Here we say that $Q$ visits a component $G_i$ if $Q$ contains an edge of $G_i$ that was not contracted in $H$, and for non-decomposed components $G_i$ this means that $Q$ contains the edge $s_it_i$ in $H$.

		By \Cref{lemma:st_path_banana_to_2_connected}, $\delta(G_{j_{i+1}})\ge \delta(G_{j_i})-d_{j_{i+1}}$.
		Let $h \in [\ell]$ be the largest integer such that $Q$ enters $G_h$, $h = j_c$. Denote by $p$ be the number of edges in $E(Q)\setminus E(G_h)$ and by $y$ the length of the $(s_h,t_h)$-subpath of $Q$, then $p+q=r$.

		We now analyze the length of $Q$ after performing the replacement operations in Lines~\ref{line:Q_transformation_start}--\ref{line:Q_replacement_end}.
		Denote by $Y$ the set of all $i \in [c-1]$ such that no contraction was made in Line~\ref{line:edge_contraction} between $\{s_{j_i},t_{j_i}\}$ and $\{s_{j_{i+1}},t_{j_{i+1}}\}$.
		For each $i\in Y$ with $d_{j_{i+1}}>0$, performing the replacement operation in Line~\ref{line:two_paths_replacement} between $\{s_{j_i},t_{j_i}\}$ and $\{s_{j_{i+1}},t_{j_{i+1}}\}$ in $Q$ yields $$|E(Q)\cap E(G_{j_i})\setminus E(G_{j_{i+1}})|\ge d_{j_{i+1}}+1\ge 3d_{j_{i+1}}/2.$$
		Let $p'$ be the length of $Q$ outside of $G_h$ after all these replacements, from the above $p' \ge \frac{3}{2}\sum_{i\in Y}d_{j_{i+1}}$. Also, $p' \ge p$ since the replacement only takes place if it makes the path longer.

		Denote by $X:=[c-1]\setminus Y$ the set of all $i \in [c-1]$ such that a contraction was made in Line~\ref{line:edge_contraction} between $\{s_{j_i},t_{j_i}\}$ and $\{s_{j_{i+1}},t_{j_{i+1}}\}$.
		For each $i\in X$ the algorithm reverses the respective edge contractions done in Line~\ref{line:edge_contraction} in $Q$.
		This increases the length of $Q$ by $d_{j_{i+1}}$, so after Line~\ref{line:Q_replacement_end} it holds that
		$|E(Q) \setminus E(G_h)| \ge p' +\sum_{i\in X}d_{j_{i+1}}.$

        We now observe that any long $(s_h, t_h)$-subpath in $G_h$ can be combined with $Q$ to preserve at least a constant fraction of $p$ in the offset.
		\begin{claim}\label{claim:outer_offset}
		    After Line~\ref{line:Q_replacement_end}, replacing the $(s_h, t_h)$-subpath of $Q$ with a path $P$ in $G_h$ of length $\delta(G_h-\{s_h,t_h\}) + k'$, where $k'$ is a nonnegative integer, yields an $(s, t)$-path in $G$ of length at least \[\delta(G-\{s,t\}) + k' + p/3.\]
		\end{claim}
		\begin{claimproof}
		The length of the resulting path is at least

		\begin{multline*}
		|E(Q)\setminus E(G_{j})|+|E(P)| \ge p'+\sum_{i\in X}d_{j_{i+1}}+\delta(G_h-\{s_h,t_h\}) + k'\\ \ge p'+\sum_{i\in X}d_{j_{i+1}}+\delta(G-\{s,t\})-\sum_{i\in[c-1]}d_{j_{i+1}} + k'\ge
		\delta(G-\{s,t\})+k' + p'-\sum_{i\in Y}d_{j_{i+1}} \\ \ge\delta(G-\{s,t\}) + k' + p/3.
		\end{multline*}

		Note that the last inequality holds since $p'$ is at least $\frac{3}{2}\sum_{i\in Y}d_{j_{i+1}}$ and also at least $p$.
		The path obtained at this point is an $(s,t)$-path in $G$ with possibly some contracted edges, since not all edge contractions were reversed. Reverse all remaining edge contractions affecting $Q$ and obtain an $(s,t)$-path in $G$ of at least the same length.
		    
		\end{claimproof}

		For estimating the length of the $(s_h,t_h)$-subpath, consider two cases depending on the type of $G_h$.
		
		\textbf{$G_h$ is not decomposed.}
        In this case, $Q$ contains the edge $s_ht_h$ in $H$, and in Line~\ref{line:Q_replace_marked} this edge is replaced with the path $P_h$.
        By \Cref{claim:outer_offset}, this yields a path of length at least $\delta(G-\{s,t\}) + (r - 1)/3$, since the length 
        of $P_h$ is at least $\delta(G_h-\{s_h,t_h\})$, and $p = r - 1$.

		\textbf{$G_h$ is decomposed.}
		By the choice of $j$, the $(s_h,t_h)$-subpath of $Q$ does not enter any \banana in the \bananadec induced by $P_{j,1}$ and $P_{j,2}$ in $G_h$.
				
		By \Cref{claim:outer_offset}, an $(s_h,t_h)$-path of length $\delta(G_h-\{s_h,t_h\})+k'$ inside $G_h$ combined with the outer part of $Q$ obtains an $(s,t)$-path of length at least $\delta(G-\{s,t\})+p/3+k'$ inside $G$.
		We now focus on identifying a long enough $(s_h,t_h)$-path inside $G_h$.
		
		Let $k':=\lfloor (q-5)/8\rfloor$, so $q \ge 8k'+5\ge 4k'+5$.
		If $P_h$ is longer than $\delta(G_h-\{s_h,t_h\})+k'$, then plugging $P_h$ into \Cref{claim:outer_offset} gives an $(s,t)$-path of length at least $\delta(G-\{s,t\})+p/3+k'+1\ge\delta(G-\{s,t\})+p/3+(q-5)/8>\delta(G-\{s,t\})+r/8-1$.
		Otherwise, we apply \Cref{lemma:st_path_to_banana_st_path} to $G_h$, $P_h$ and the $(s_h,t_h)$-subpath of $Q$ to obtain an $(s_h, t_h)$-path $R$ in $G_h$.

		If the length of $R$ is at least $\delta(G_h-\{s_h,t_h\})+k'-1$, \Cref{claim:outer_offset} gives the desired bound of $\delta(G-\{s,t\})+r/8-3$.
		Otherwise, $\frac{1}{2}\delta(G_h-\{s_h,t_h\})-\frac{5}{2}k'<k'$, then $7k'>\delta(G_h-\{s_h,t_h\})$.
		It follows that $q > \delta(G_h-\{s_h,t_h\}) + q/8+5$.
		Hence, by applying \Cref{claim:outer_offset} to the initial $(s_h,t_h)$-subpath of $Q$ we get a path of length at least $\delta(G-\{s,t\})+p/3+q/8+5> \delta(G-\{s,t\})+r/8$.
		Since \Cref{alg:nested_decompress} takes the longest of $R$ and the original subpath of $Q$, both cases are covered.
	\end{proof}

    It will also be helpful to observe that in the ``slice'' between a decomposed component and the nested components, at most two edges of any path can be contracted. Note that this does not follow immediately, as a pair of edges to \emph{each} of the nested components is potentially contracted.

	\begin{claim}\label{claim:two_edges_affected}
		Let $Q$ be an $(s_j,t_j)$-path inside a decomposed graph $G_j$.
		Then all edges $E(Q)\cap E(G_j)\setminus \bigcup_{e(i)=j}E(G_i)$ are unchanged in $H$ except for, possibly, contraction of the first and the last edge of $Q$.
	\end{claim}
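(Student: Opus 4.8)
The plan is to walk through \Cref{alg:nested_compress} and track what can happen to an edge $e\in E(Q)\cap E(G_j)\setminus\bigcup_{e(i)=j}E(G_i)$; call such edges \emph{relevant}. The routine only deletes vertices and contracts edges (adding the marked edges $s_it_i$ leaves existing edges untouched), so it suffices to prove that no relevant edge is deleted and that a relevant edge can be contracted only if it is incident to $\{s_j,t_j\}$. This is enough because $Q$ is a simple path with endpoints $s_j$ and $t_j$: the vertex $s_j$ is incident to exactly one edge of $Q$ (its first) and $t_j$ to exactly one (its last), so ``incident to $\{s_j,t_j\}$'' already says ``first or last edge of $Q$''.

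First I would handle deletions. The vertices removed from $H$ are exactly the interior vertices $V(G_i)\setminus\{s_i,t_i\}$ of non-decomposed components. Suppose a relevant edge $e$ had an endpoint $u$ interior to such a $G_i$, and let $G_{i_0}$ be the child of $G_j$ whose subtree in the \nestedbananadec contains $G_i$; then $u\in V(G_{i_0})\setminus\{s_{i_0},t_{i_0}\}$, i.e.\ $u$ lies in the banana $M_{i_0}$ of the \bananadec of $G_j$. Since $Q$ is an $(s_j,t_j)$-path in $G_j$, \Cref{lemma:st_path_banana_consecutive} makes $V(M_{i_0})\cap V(Q)$ a subpath of $Q$, and property~(2) of \Cref{lemma:st_path_banana_to_2_connected} makes $Q[V(G_{i_0})]$ an $(s_{i_0},t_{i_0})$-path inside $G_{i_0}$; hence every edge of $Q$ touching $V(G_{i_0})\setminus\{s_{i_0},t_{i_0}\}$ is an edge of $G_{i_0}$, so $e\in E(G_{i_0})$, contradicting relevance. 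Thus relevant edges survive all deletions, and moreover no relevant edge has an endpoint interior to a child of $G_j$.

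Next comes the core step, the contractions. A pass of Line~\ref{line:edge_contraction} processes a child $G_{i^\ast}$ of a component $G_{j^\ast}$ and contracts a maximum matching between the interface pairs $\{s_{j^\ast},t_{j^\ast}\}$ and $\{s_{i^\ast},t_{i^\ast}\}$, so any contracted edge has one endpoint in each pair. Let $e=uv$ be a relevant edge of $G_j$ contracted this way; then $u,v\in V(G_j)$ and, by the previous paragraph, neither is interior to a child of $G_j$. I would now invoke the nesting of the \nestedbananadec together with $V(G_i)=V(M_i)\cup\{s_i,t_i\}$ to pin down which interface vertices can land inside $V(G_j)$: if $G_{j^\ast}$ or $G_{i^\ast}$ equals $G_j$, then $e$ has an endpoint in $\{s_j,t_j\}$ outright; if $G_{j^\ast}$ or $G_{i^\ast}$ is a proper descendant of $G_j$, the argument of the previous paragraph again forces $e$ inside a child of $G_j$, contradicting relevance; and if $G_{j^\ast}$ is $G_{e(j)}$, a sibling of $G_j$, or lies in an incomparable subtree, then the only vertices of its interface pair that can belong to $V(G_j)$ are those the construction has identified with $s_j$ or $t_j$. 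In all cases $e$ is incident to $\{s_j,t_j\}$, and combined with the deletion argument this proves the claim.

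The step I expect to be the real obstacle is the nesting bookkeeping just sketched, namely pinning down exactly which entry vertices $s_i,t_i$ of nested components can land inside $V(G_j)$, since a priori one of them might coincide with $s_j$, with $t_j$, or with an entry vertex of a sibling, so that a contraction performed while processing a child in an unrelated branch could in principle touch an edge of $G_j$ not incident to $\{s_j,t_j\}$. Resolving this cleanly should exploit that \Cref{lemma:st_path_banana_to_2_connected} draws $s_i$ and $t_i$ from the two opposite paths inducing the underlying \bananadec (so such coincidences can only propagate downward, starting from $\{s_j,t_j\}$ itself), together with the fact that components lying in incomparable subtrees of the \nestedbananadec have essentially disjoint vertex sets.
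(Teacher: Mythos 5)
There is a genuine gap, and it sits exactly where you flagged it, but the missing ingredient is not more careful ``nesting bookkeeping'': it is the \emph{condition under which Line~\ref{line:edge_contraction} fires at all}, which your argument never uses. The paper's proof works as follows: a matching edge between $\{s_j,t_j\}$ and $\{s_i,t_i\}$ (for a child $G_i$ of $G_j$) is contracted only when \texttt{two\_long\_disjoint\_paths} reports that \emph{every} pair of disjoint paths between $\{s_j,t_j\}$ and $\{s_i,t_i\}$ has total length at most $d_i$. Now suppose the contracted edge is, say, $s_js_i$ and $s_i$ lies on $Q$ but is not the second vertex of $Q$: then the $(s_j,s_i)$-subpath of $Q$ (length $\ge 2$) together with a trivial $(t_j,t_i)$-path already gives two disjoint paths of total length more than $d_i$, a contradiction; a further case analysis on where $t_i$ sits on $Q$ forces the only surviving configuration to touch exactly the first and last edges of $Q$. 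This is what pins the damage to the end edges. Your reduction ``every contracted relevant edge is incident to $\{s_j,t_j\}$, hence is the first or last edge of $Q$'' only constrains contracted edges that \emph{lie on} $Q$; it does not exclude the harmful scenario in which a contracted chord such as $s_js_i$ (not an edge of $Q$) identifies the endpoint $s_j$ with an interior vertex $s_i$ of $Q$. In that scenario no relevant edge of $Q$ is itself contracted, yet the image of $Q$ in $H$ is no longer a path minus at most its end edges, which is precisely the property the claim must deliver for \Cref{lemma:path_in_H} and the main lemma (where the image $T'$ of the optimal path is treated as an $(s,t)$-path in $H$). Ruling this out is impossible without invoking the failed disjoint-paths test, and you never do.

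In addition, the step you yourself identify as the obstacle --- pinning down which interface vertices of other tree nodes can coincide with vertices of $G_j$, so that contractions performed in unrelated branches cannot hit a relevant edge --- is left as a sketch, so even the weaker statement ``only the first and last edge of $Q$ can be contracted'' is not actually established. (For comparison, the paper's proof confines attention to contractions with $e(i)=j$ and does not need your cross-branch analysis, because its argument is driven by the disjoint-paths criterion rather than by incidence to $\{s_j,t_j\}$.) Your opening paragraph on vertex deletions is fine and essentially matches facts the paper uses elsewhere (\Cref{lemma:st_path_banana_to_2_connected}, property 2), but the core of the claim is the contraction analysis, and there your proposal both omits the decisive idea and leaves its own substitute unproven.
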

	
	\begin{claimproof}
		Let $i$ be such that a contraction is made for $G_i$ in  Line~\ref{line:edge_contraction} with $e(i)=j$ and $d_i>0$.
		There are no two disjoint paths between $\{s_j,t_j\}$ and $\{s_i,t_i\}$ of total length at least $d_i+1$.
		
		Without loss of generality, we assume that $s_i\neq s_j$, $t_i\neq s_j$ and $s_i\neq t_j$ and the edge $s_js_i$ is contracted.
		If $s_i \notin V(Q)$, then $Q$ is not affected in $H$.
		We assume that $s_i \in V(Q)$.
		If $s_i$ is the second vertex in $V(Q)$, then $s_js_i$ is the first edge of $V(Q)$ as required.
		
		Suppose now that $s_i$ is not the second vertex in $Q$.
		Then the $(s_j,s_i)$-subpath of $Q$ is of length at least two.
		If $t_i$ does not belong to this subpath, we add the trivial (of length zero or one) $(t_i,t_j)$-path in $G_j$ and obtain two disjoint paths between $\{s_j,t_j\}$ and $\{s_i,t_i\}$ of total length at least $2+|\{t_i,t_j\}|-1>d_i$, which is a contradiction.
		
		Hence, $t_i$ is present on the $(s_j,s_i)$-subpath of $Q$.
		Then $t_i\neq t_j$, so $d_i=2$ and $s_i,t_i,s_j,t_j$ are all distinct.
		We have an $(s_j,t_i)$-subpath of $Q$ and an $(s_i,t_j)$-subpath of $Q$ which are disjoint.
		If one of them is of length at least two, then we have two disjoint paths of total length more than $d_i$.
		Hence, $s_jt_i$ and $s_it_j$ are the first and the last edge in $Q$.
		The proof of the claim is complete.
		
	\end{claimproof}

    Now we are ready to prove the main lemma that bounds the length of the $(s, t)$-path returned by \Cref{alg:long_nested_path}.

\begin{lemma}
	\texttt{\LongNestedPathName} outputs an $(s,t)$-path in $G$ of length at least $\delta(G-\{s,t\})+f(k)/32-3$, where $k=L-\delta(G-\{s,t\})$ and $L$ is the length of the longest $(s,t)$-path in $G$.
\end{lemma}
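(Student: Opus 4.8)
The plan is to exhibit, among the $(s,t)$-paths that \Cref{alg:long_nested_path} compares, one of length at least $\delta+f(k)/32-3$, where I abbreviate $\delta:=\delta(G-\{s,t\})$. There are two families of candidates. The first is the path returned by \texttt{\NestedDecompressionName} from the black-box approximation $Q$ computed on the compressed graph $H$; by \Cref{lemma:path_in_H} it has length at least $\delta+r/8-3$, where $r=|E(Q)|\ge f(L_H)$ and $L_H$ is the longest $(s,t)$-path length in $H$. The second family consists, for each component $G_i$ of the nested \bananadec, of a path obtained by taking a long $(s_i,t_i)$-path $R_i$ in $G_i$ --- the longest of the black-box approximation, the Erd\H{o}s--Gallai path of length $\delta(G_i-\{s_i,t_i\})$ from \Cref{thm:relaxed_st_path}, and (for a non-decomposed $G_i$) the path $P_i$ produced by \Cref{lemma:st_path_or_tunnel} --- and extending it to an $(s,t)$-path with a pair of disjoint paths from $\{s,t\}$ to $\{s_i,t_i\}$; by \Cref{claim:at_least_sumx} this extension has length at least $|E(R_i)|+D_i$, where $D_i$ is the sum of the $d$-values along the branch from $G_1$ to $G_i$.

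First I would dispose of the regime $\delta<f(k)/8$: here the black-box candidate on $G$ itself (the case $i=1$) has length at least $f(L)\ge f(k)$, and $\tfrac{31}{32}f(k)>\tfrac18 f(k)>\delta-3$ gives $f(k)>\delta+f(k)/32-3$. So assume $\delta\ge f(k)/8$. Fix a longest $(s,t)$-path $P^*$ in $G$; applying \Cref{lemma:st_path_edge_of_banana}, \Cref{lemma:st_path_banana_consecutive} and \Cref{lemma:st_path_banana_to_2_connected} I follow the branch $G_1=G_{j_1},G_{j_2},\ldots,G_{j_c}=:G_h$ of the decomposition tree along which $P^*$ descends (entering each $G_{j_{i+1}}$ as a single consecutive subpath), $G_h$ being the deepest component $P^*$ reaches. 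A swap argument --- replacing the $(s_h,t_h)$-subpath of $P^*$ by a hypothetical longer one and using \Cref{lemma:st_path_banana_consecutive} to keep the result a simple $(s,t)$-path --- shows that the $(s_h,t_h)$-subpath of $P^*$ is in fact a longest $(s_h,t_h)$-path in $G_h$. Writing $L_i$ for the length of the part of $P^*$ inside $G_{j_i}$ but outside $G_{j_{i+1}}$ (and $L_c$ for the part inside $G_h$), and setting $D:=\sum_{i<c}d_{j_{i+1}}$, $\mathrm{off}_h:=L_c-\delta(G_h-\{s_h,t_h\})\ge0$, $A:=\sum_{i<c}(L_i-d_{j_{i+1}})\ge0$, and $\mathrm{slack}:=\delta(G_h-\{s_h,t_h\})-(\delta-D)\ge0$ (nonnegativity of $\mathrm{slack}$ is the iterated minimum-degree bound of \Cref{lemma:st_path_banana_to_2_connected}, and $L_i\ge d_{j_{i+1}}$ because $P^*$ crosses the slice via two disjoint subpaths), one gets $L_c=\delta(G_h-\{s_h,t_h\})+\mathrm{off}_h$, $\sum_{i<c}L_i=A+D$, $\delta=\delta(G_h-\{s_h,t_h\})+D-\mathrm{slack}$, hence the key identity $k=L-\delta=\mathrm{off}_h+A+\mathrm{slack}$; in particular at least one of the three summands is $\ge k/3$.

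Next I would track how much of $P^*$ survives into $H$. Replacing the $G_h$-part of $P^*$ by its marking edge when $G_h$ is a (non-decomposed) leaf, and applying the contractions of \Cref{alg:nested_compress}, one checks --- using \Cref{claim:two_edges_affected} together with the observation that a slice is contracted only when $P^*$ traverses it using exactly its $d_{j_{i+1}}$ matching edges, which are then precisely the contracted ones (degenerate $d=0$ slices cost nothing) --- that the image of $P^*$ is a valid $(s,t)$-path in $H$ of length at least $L_c+A$ when $G_h$ is decomposed and at least $A+1$ when $G_h$ is a leaf. Now I case-split. If $L_H\ge k/3$ (in particular whenever $A\ge k/3$), then $r\ge f(k/3)\ge f(k)/3$ by subadditivity and \Cref{lemma:path_in_H} gives a path of length $\ge\delta+f(k)/24-3\ge\delta+f(k)/32-3$. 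Otherwise $L_H<k/3$: when $G_h$ is decomposed this forces $\mathrm{off}_h+A<k/3$, hence $\mathrm{slack}>k/3$; when $G_h$ is a leaf it forces $A<k/3$, hence $\mathrm{off}_h>k/3$ or $\mathrm{slack}>k/3$. In every branch with $\mathrm{slack}>k/3$ I use the Erd\H{o}s--Gallai path of length $\delta(G_h-\{s_h,t_h\})$ inside $G_h$ extended to $\{s,t\}$: by \Cref{claim:at_least_sumx} its length is at least $\delta(G_h-\{s_h,t_h\})+D=\delta+\mathrm{slack}>\delta+k/3$. The remaining branch is $G_h$ a leaf with $A<k/3$, $\mathrm{slack}\le k/3$ and $\mathrm{off}_h>k/3$, so the longest $(s_h,t_h)$-path in $G_h$ has length at least $\delta(G_h-\{s_h,t_h\})+k/3$ and the black box on $G_h$ outputs at least $f(k/3)\ge f(k)/3$. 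I close this by a sub-split on $\delta(G_h-\{s_h,t_h\})$: if $\delta(G_h-\{s_h,t_h\})<f(k)/8$, then $D=\delta-\delta(G_h-\{s_h,t_h\})+\mathrm{slack}>\delta-f(k)/8$ and the extended black-box candidate on $G_h$ has length at least $f(k)/3+D>\delta+\tfrac{5}{24}f(k)>\delta+f(k)/32-3$; if $\delta(G_h-\{s_h,t_h\})\ge f(k)/8$, then as a leaf with such large minimum degree $G_h$ must be of the type where \Cref{lemma:st_path_or_tunnel} returned a path $P_h$ of length at least $\min\{\tfrac54\delta(G_h-\{s_h,t_h\})-3,|V(G_h)|-1\}$, of offset at least $\tfrac14\delta(G_h-\{s_h,t_h\})-3\ge f(k)/32-3$ (or $P_h$ is Hamiltonian, hence a longest $(s_h,t_h)$-path of offset $\mathrm{off}_h>k/3$), and extending $P_h$ via \Cref{claim:at_least_sumx} yields length at least $\delta(G_h-\{s_h,t_h\})+D+(\text{offset of }P_h)=\delta+\mathrm{slack}+(\text{offset of }P_h)\ge\delta+f(k)/32-3$.

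The hard part is precisely this last sub-case: when the optimum's offset is concentrated inside a single leaf component whose minimum degree is large, the black box applied to that component is useless --- it can only promise $f$ of a quantity it cannot beat --- and the argument must instead appeal to the fact that \Cref{lemma:st_path_or_tunnel} does not merely certify a \bananadec but produces a concrete path $P_h$ whose offset is a constant fraction of $\delta(G_h-\{s_h,t_h\})$, which the $(s_h,t_h)$-approximation step of \Cref{alg:long_nested_path} picks up and the outer extension preserves. Equally delicate is the bookkeeping behind $k=\mathrm{off}_h+A+\mathrm{slack}$ and the survival bound for the image of $P^*$ in $H$: one must argue that the matching edges contracted between nested components are exactly those $P^*$ is forced to use there, so the contraction loss is compensated edge-for-edge by the drop in minimum degree. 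The boundary situations --- $\ell=1$, or $\delta(G_h-\{s_h,t_h\})<16$, or $f(k)$ below a small constant (where the target $f(k)/32-3$ is non-positive and the bare Erd\H{o}s--Gallai path of length $\delta$ already suffices) --- follow from the same family of inequalities and pose no additional difficulty.
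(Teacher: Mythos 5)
Your proposal is correct and follows essentially the same route as the paper's proof: track the optimal path down the nested decomposition to the deepest component $G_h$ it enters, bound how much of it survives the compression into $H$, and compare the decompressed black-box path (via \Cref{lemma:path_in_H}) against candidates built inside $G_h$ (black-box, Erd\H{o}s--Gallai, or the length-$\tfrac{5}{4}\delta(G_h-\{s_h,t_h\})-3$ path from \Cref{lemma:st_path_or_tunnel}) extended by \Cref{claim:at_least_sumx}; your identity $k=\mathrm{off}_h+A+\mathrm{slack}$ is a reorganization of the paper's bound \eqref{eq:outside_path_bound} together with its subadditivity chain in the non-decomposed case. Minor slips --- the survival bound should be $L_c-2+A$ rather than $L_c+A$, since by \Cref{claim:two_edges_affected} the first/last edge of the $G_h$-subpath may be contracted (and the contracted matching need not coincide with the edges $P^*$ uses, though in a contracted slice those contribute nothing to $A$), and $\delta(G_h-\{s_h,t_h\})\ge f(k)/8$ guarantees the long path $P_h$ only when $\delta(G_h-\{s_h,t_h\})\ge 16$ --- are absorbed by the slack in your constants, as you indicate in your discussion of boundary cases.
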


\begin{proof}
	Let $T$ be the longest $(s,t)$-path in $G$, $|E(T)|=L=\delta(G-\{s,t\})+k$.
	Our aim is to show that either $T$ yields a sufficiently long path in $H$ to use \Cref{lemma:path_in_H}, or conclude that
	after contractions most of the path stays inside one \banana, the deepest component visited.
	In case of the latter, we show that it suffices to take a long path inside this component.

	We now introduce some notations for $T$ with respect to the \nestedbananadec structure, similarly to the proof of \Cref{lemma:path_in_H}.
	Let $h\in [\ell]$ be the largest integer such that $T$ enters $G_h$.
	Let $j_1,j_2,\ldots, j_c$ be such that $j_c=j$, $j_1=1$ and $e(j_{i+1})=j_i$ for each $i\in[c-1]$.
	Denote by $Y$ the set of all $i \in [c-1]$ such that no contraction was made in Line~\ref{line:edge_contraction} between $\{s_{j_i},t_{j_i}\}$ and $\{s_{j_{i+1}},t_{j_{i+1}}\}$.
	Denote $X:=[c-1]\setminus Y$.
	
	Consider what happens to the path $T$ in the graph $H$ between two consecutive nested components $G_{j_i}$ and $G_{j_{i + 1}}$.
	Since $T$ is the longest $(s,t)$-path in $G$, edges in $E(T)\cap E(G_{j_i})\setminus E(G_{j_{i+1}})$ form two disjoint paths between $\{s_{j_i},t_{j_i}\}$ and $\{s_{j_{i+1}},t_{j_{i+1}}\}$ of longest possible total length.
	If $|E(T)\cap E(G_{j_i})\setminus E(G_{j_{i+1}})|=d_{j_{i+1}}$, all edges in $E(T)\cap E(G_{j_i})\setminus E(G_{j_{i+1}})$ are contracted in $H$.

	Otherwise, $|E(T)\cap E(G_{j_i})\setminus E(G_{j_{i+1}})|>d_{j_{i+1}}$.
	By \Cref{claim:two_edges_affected}, all edges in $E(T)\cap E(G_{j_i})\setminus E(G_{j_{i+1}})$ are present in $H$, except for possibly $d_{j_{i+1}}$ of them (the first and/or the last).
	Also, recall that by properties of \nestedbananadec, $T$ does not enter any $G_i$ with $e(i)=j_i$ except for $G_{j_{i+1}}$.
	Then the removal of the internal vertices of non-decomposed components does not affect edges in $E(T)\cap E(G_{j_i})\setminus E(G_{j_{i+1}})$.
	Hence, at least $|E(T)\cap E(G_{j_i})\setminus E(G_{j_{i+1}})|-d_{j_{i+1}}$ of the edges are present in $H$ in this case, which is at least one third of the edges in $E(T)\cap E(G_{j_i})\setminus E(G_{j_{i+1}})$ since $d_{j_{i + 1}} \le 2$.

	Let $T'$ be the path $T$ with all contractions applied to $H$.
	If $G_h$ is not decomposed, we assume that $T'$ contains the edge $s_ht_h$ marked with $G_h$.
	By the above, we have $$|E(T')\setminus E(G_h)|\ge \sum_{i\in[c-1]}|E(T)\cap E(G_{j_i})\setminus E(G_{j_{i+1}})|-d_{j_{i+1}}\ge\frac{1}{3}(|E(T)\setminus E(G_h)|-\sum_{i\in X}d_{j_{i+1}}).$$
	The last inequality holds since for each $i \in Y$ with $d_{j_{i + 1}} > 0$, $|E(T)\cap E(G_{j_i})\setminus E(G_{j_{i+1}})| -d_{j_{i+1}}$ is at least $\frac{1}{3}|E(T)\cap E(G_{j_i})\setminus E(G_{j_{i+1}})|$, and for all the remaining indices $i$, $|E(T)\cap E(G_{j_i})\setminus E(G_{j_{i+1}})| = d_{j_{i+1}}$.
	Denote $p:=|E(T')\setminus E(G_h)|$, and from the above obtain the equivalent
	\begin{equation}
	|E(T)\setminus E(G_h)|\le 3p+\sum_{i\in X}d_{j_{i+1}}.
	    \label{eq:outside_path_bound}
	\end{equation}
	
	If $p \ge k/4$, then an $(s,t)$-path of length at least $k/4+1$ is present in $H$.
	In this case, an approximation of the longest $(s,t)$-path in $H$ in Line~\ref{line:Q_approx} of \texttt{\LongNestedPathName} gives a path of length at least $f(k/4+1)\ge f(k)/4$.
	By \Cref{lemma:path_in_H}, running \texttt{\NestedDecompressionName} on this path results in an $(s,t)$-path of length at least $\delta(G-\{s,t\})+f(k)/32-3$ in $G$, so in this case the proof is finished.

	Otherwise, $p < k/4$.
	Denote by $T_h$ the $(s_h,t_h)$-subpath of $T$.
	For simplicity, denote $\delta:=\delta(G-\{s,t\})$ and $\delta_h:=\delta(G_h-\{s_h,t_h\})$.
	We consider two cases.
	
	\textbf{$G_h$ is decomposed.}
	In this case, $T_h$ does not enter any \banana of $G_h$.
	By Claim~\ref{claim:two_edges_affected}, at most two edges of $T_h$ can be contracted in $H$, denote the number of such edges by $d_h'$. Also, at least one edge is not contracted, so $|E(T_h)| - d_h' \ge |E(T_h)|/3$. Thus, $T'$ is an $(s,t)$-path of length at least $p+|E(T_h)| - d_h'\ge\frac{1}{3}\left(|E(T)|-\sum_{i\in X}d_{j_{i+1}}\right)$ in $H$.
	If $\sum_{i\in X}d_{j_{i+1}} \ge \delta+k/4-3$, then \texttt{\LongNestedPathName} outputs an $(s,t)$-path of length at least $\delta+k/4-3$ by \Cref{claim:at_least_sumx}, regardless of the length of $P_h$ at Line~\ref{line:longest_P_h}.

	Otherwise, $T'$ is an $(s,t)$-path in $H$ of length at least $\frac{1}{3}(3k/4+3) =k/4+1$.
    Analogously to the case $p\ge k/4$, \texttt{\LongNestedPathName} then finds a path of length at least $f(k)/4$, and the path $Q$ returned by \texttt{\NestedDecompressionName} is of length at least $\delta(G-\{s,t\})+f(k)/32-3$.
	
	\textbf{$G_h$ is not decomposed.}
	Here, our goal is to show that taking in $G_h$ one of the $(s_h, t_h)$-paths computed on Line~\ref{line:longest_P_h} together with an arbitrary connection from $\{s, t\}$ to $\{s_h, t_h\}$ gives a long enough $(s,t)$-path in $G$.
	Let $k_h:=|E(T_h)|-\delta_h$.
	Note that by the choice of $T$, $T_h$ is the longest $(s_h,t_h)$-path in $G_h$.
	We first show the following.
	
	\begin{claim}\label{claim:not_decomp_good_approx}
		If $G_h$ is not decomposed,
		then at Line~\ref{line:outer_appendage} the length of $P_h$ is at least $\delta_h+f(k_h)/8-3$.
	\end{claim}
	\begin{claimproof}
		First note that if the length of $P_h$ from definition of \nestedbananadec is at least $|V(G_h)|-1$, then $P_h$ is a hamiltonian $(s_h,t_h)$-path in $G_h$.
		Then its length is maximum possible and is equal to $|E(T_h)|=\delta_h+k_h\ge \delta_h+f(k_h)$.
		Hence, we can assume that the length of $P_h$ given by \nestedbananadec is at least $\frac{5}{4}\delta_h-3$.
		
		If $f(k_h)\ge \frac{8}{7}\delta_h$, then \texttt{long\_st\_path\_approx}($G_h$, $s_h$, $t_h$), the blackbox $(s, t)$-path approximation algorithm, returns a path of length at least $f(\delta_h+k_h)\ge f(k_h)\ge \delta_h+f(k_h)/8$, so we are done.

		Otherwise, $f(k_h)\le \frac{8}{7}\delta_h$.
		If $f(k_h)\le 24$, then it suffices for $P_h$ to have length $\delta_h$.
		In this case \texttt{long\_eg\_st\_path}($G_h$, $s_h$, $t_h$), the exact $(s,t)$-path algorithm from \Cref{thm:relaxed_st_path}, returns a $(s_h,t_h)$-path of length at least $\delta_h$.

		It only remains to deal with the case where $f(k_h)>24$, and $\delta_h\ge \frac{7}{8} f(k_h)$.
		Since $G_h$ is not decomposed and $\delta_h>16$, by definition of \nestedbananadec, $P_h$ is of length at least $$\frac{5}{4}\delta_h-3\ge \delta_h+\frac{1}{4}\delta_h-3\ge \delta_h+\frac{1}{4}\cdot \frac{7}{8}f(k_h)-3\ge\delta_h+\frac{1}{8}f(k_h)-3.$$
	\end{claimproof}
	
	Using \eqref{eq:outside_path_bound}, we can lower-bound $k_h$ by
	\begin{equation}
	k_h=|E(T)|-|E(T)\setminus E(G_h)|-\delta_h\ge|E(T)|-3p-\sum_{i\in X}d_{j_{i+1}}-\delta_h=\delta+k-3p-\sum_{i\in X}d_{j_{i+1}}-\delta_h.
	    \label{eq:k_h_bound}
	\end{equation}
	On Line~\ref{line:outer_appendage}, $P_h$ is transformed into an $(s,t)$-path in $G$ of length at least $|E(P_h)|+\sum_{i\in X\cup Y}d_{j_{i+1}}$, by \Cref{claim:at_least_sumx}.
	By \Cref{claim:not_decomp_good_approx}, this length is at least
	\begin{align*}
		\delta_h+f(k_h)&/8-3+\sum_{i\in X\cup Y}d_{j_{i+1}}\\
		&\ge\delta+(\delta_h-\delta)+f\left(k+\delta-3p-\sum_{i\in X}d_{j_{i+1}}-\delta_h\right)/8+\sum_{i\in X\cup Y}d_{j_{i+1}}-3 \\
		&\ge \delta+\underbrace{(\delta_h+\sum_{i\in X\cup Y}d_{j_{i+1}}-\delta)}_{\ge 0\text{ by \Cref{lemma:st_path_banana_to_2_connected}}}+f\left((k-3p)-(\delta_h+\sum_{i\in X}d_{j_{i+1}}-\delta)\right)/8-3\\
		&\ge \delta+f\left(\delta_h+\sum_{i\in X\cup Y}d_{j_{i+1}}-\delta\right)+f\left((k-3p)-(\delta_h+\sum_{i\in X}d_{j_{i+1}}-\delta)\right)/8-3 \\
		&\ge \delta +f(k-3p)/8-3\\
		&\ge \delta+f(k/4)/8-3\ge \delta+f(k)/32-3.
	\end{align*}
	Here for the first inequality we use \eqref{eq:k_h_bound}, then the properties of the function $f$, and the fact that we are in the case where $k > 4p$. Observe that for each $x \in \mathbb{Z}_{+}$, $f(x) \le x$, since we are given an algorithm that finds an $(s, t)$-path of length $f(x)$ in any graph with the longest $(s, t)$-path of length $x$.
	With this, we have shown that in each case the returned $(s, t)$-path is of desired length, and the proof is complete.
\end{proof}

Finally, observe that the running time of \Cref{alg:long_nested_path} is polynomial in the size of the given \nestedbananadec. By \Cref{lemma:nested_construction}, its size is polynomial in the size of the input graph $G$. This concludes the proof of \Cref{thm:eg_approx}.

\section{Approximation for cycles}\label{sec:cycle}

This section is devoted to the proof of \Cref{thm:long_cycle_approx} that establishes a way of lifting the approximation guarantee of the longest cycle in a $2$-connected graph $G$ to the offset above $2\delta(G)$.

\diracappxtheorem*

We first recall the concept of a \cyclebananadec and its properties that were established by Fomin et al.~\cite{FominGSS22}. Further in this section, we prove the novel crucial property that in a graph admitting a \cyclebananadec, there exists a separating pair of vertices with a long path between this pair.
Finally, we combine these results together with our approximation for $(s,t)$-path from the previous section (\Cref{thm:eg_approx}) to obtain the lifting algorithm in \Cref{thm:long_cycle_approx}.

\subsection{\Cyclebananadec}

This subsection contains the definition and properties of a \cyclebananadec, including the algorithmic result that allows to construct a \cyclebananadec from a given 2-connected graph $G$.
We start with the definition of a \cyclebananadec, which can be seen as an analogue of an \bananadec for cycles.

\begin{definition}[{\Cyclebananadec and \cyclebanana}, Definition 5 in \cite{FominGSS22}]
	Let $G$ be a 2-connected graph and let $C$ be a cycle in $G$ of length at least $2\delta(G)$.
	We say that  two disjoint paths $P_1$ and $P_2$ in $G$ induce \emph{a \cyclebananadec   for   $C$}  in $G$ if
	\begin{itemize}
		\item 
		The cycle $C$ is of the form $C=P_1 {P'}P_2{P''}$, where  each of the paths ${P'}$ and ${P''}$ has at least $\delta(G)-2$ edges.
		\item 
		For  every connected component $H$ of $G-V(P_1  \cup P_2)$ holds $|V(H)|\ge 3$ and one of the following. 
		\begin{enumerate}[label=(D\arabic*)]
			\item\label{enum:cycle_tunnel_path_bic} $H$ is $2$-connected, the maximum size of a matching in  $G'$ between $V(H)$ and $V(P_1)$  is one,  and between $V(H)$ and $V(P_2)$ is also  one;
			\item\label{enum:cycle_tunnel_path_cut_left} $H$ is not 2-connected,   
			exactly one vertex of $P_1$ has neighbors in $H$, that is, 			
			$|N_{G}(V(H))\cap V(P_1)|=1$, and no inner vertex from a  leaf-block of $H$ has a neighbor in $P_2$;
			\item\label{enum:cycle_tunnel_path_cut_right} The same as  \ref{enum:cycle_tunnel_path_cut_left}, but with $P_1$ and $P_2$ interchanged. That is, 
			$H$ is not 2-connected,  			
			$|N_{G}(V(H))\cap V(P_2)|=1$, and no inner vertex from  a leaf-block of $H$ has a neighbor in $P_1$.			
		\end{enumerate}
		
		\item There is exactly one connected component $H$ in $G-V(P_1\cup P_2)$ with $V(H)=V(P')\setminus \{s',t'\}$, where $s'$ and $t'$ are the endpoints of $P'$.
		Analogously, there is exactly one connected component $H$ in $G-V(P_1\cup P_2)$ with $V(H)=V(P'')\setminus \{s'',t''\}$.
	\end{itemize}
	The set of \emph{\cyclebanana}s for a \cyclebananadec 
	is defined as follows.
	First,  for each component $H$ of type \ref{enum:cycle_tunnel_path_bic}, $H$ is a \cyclebanana of the \cyclebananadec.
	Second, for each leaf-block of each $H$ of type \ref{enum:cycle_tunnel_path_cut_left}, or of type \ref{enum:cycle_tunnel_path_cut_right}, this leaf-block  is also  a \cyclebanana of the \cyclebananadec.	
\end{definition}

First, we recall an important property of a \cyclebananadec that restricts how a cycle can pass through a \cyclebanana.

\begin{lemma}[Lemma~17 in \cite{FominGSS22}]\label{lemma:dirac_cycle_banana_consecutive}
	Let $G$ be a $2$-connected graph and $C$ be a cycle in $G$. Let paths  $P_1, P_2$ induce a \cyclebananadec  for $C$ in $G$.
	Let $M$ be a \cyclebanana of the \cyclebananadec and $P$ be a path in $G$ such that $P$ contains at least one vertex in $V(P_1)\cup V(P_2)$.
	If $P$ enters $M$, then all vertices of $M$ hit by $P$ appear consecutively on $P$.
\end{lemma}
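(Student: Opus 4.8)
The plan is to imitate the proof of Lemma~\ref{lemma:st_path_banana_consecutive}, the analogous statement for an \bananadec, splitting according to the two possible shapes of the \cyclebanana $M$. Write $P_1,P_2$ for the two paths inducing the \cyclebananadec. By definition $M$ is either a type-\ref{enum:cycle_tunnel_path_bic} connected component $H$ of $G-V(P_1\cup P_2)$, or a leaf-block of a component $H$ of type \ref{enum:cycle_tunnel_path_cut_left} or \ref{enum:cycle_tunnel_path_cut_right}. In either case, the first step is to extract a ``portal set'' $Z\subseteq V(G)$ with $|Z|\le 2$ such that every edge of $G$ with exactly one endpoint in $V(M)$ has an endpoint in $Z$. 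For a type-\ref{enum:cycle_tunnel_path_bic} component, $H$ is $2$-connected and both the bipartite graph of edges between $V(H)$ and $V(P_1)$ and the one between $V(H)$ and $V(P_2)$ admit a matching of size at most one; by K{\"o}nig's theorem each of these bipartite graphs then has a vertex cover of size at most one, which gives one portal ``on the $P_1$-side'' and one ``on the $P_2$-side'' (each lying either in $V(H)$ or on the corresponding $P_i$). Since $H$ is a connected component of $G-V(P_1\cup P_2)$, it has no edges to the other components, so this two-element $Z$ indeed controls every boundary edge of $V(M)$. For a leaf-block $M$ of a type-\ref{enum:cycle_tunnel_path_cut_left} component $H$, I would take $Z=\{c,a_1\}$, where $c$ is the unique cut-vertex of $H$ lying in $M$ and $a_1$ is the unique vertex with $N_G(V(H))\cap V(P_1)=\{a_1\}$: every inner vertex $w$ of $M$ satisfies $N_G(w)\subseteq V(M)\cup\{a_1\}$, because $w$ has no neighbour in $P_2$, all its $P_1$-neighbours equal $a_1$, and $w$ lies in a single block of $H$; the type-\ref{enum:cycle_tunnel_path_cut_right} case is symmetric. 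Here $2$-connectedness of $G$ guarantees that $a_1$ actually exists and that $c\ne a_1$.

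\textbf{The run-counting step.} With $Z$ fixed, suppose towards a contradiction that $P$ meets $V(M)$ but traverses it in $r\ge 2$ maximal runs. Any run not containing an endpoint of $P$ is bordered, on both sides, by an edge with exactly one endpoint in $V(M)$ and hence by a $Z$-controlled edge; moreover, between two consecutive runs $P$ makes an excursion lying entirely outside $V(M)$ whose first and last edges are $Z$-controlled. Since $P$ is a simple path, every vertex of $Z$ appears at most once on $P$ and is incident to at most two edges of $P$, so with $|Z|\le 2$ the number of $Z$-controlled boundary edges available on $P$ is small. A short case analysis then shows that $r\ge 2$ forces either a repeated vertex of $Z$ on $P$, contradicting simplicity, or the configuration in which both endpoints of $P$ lie in $V(M)$ and the single excursion between the two runs already spends both portals. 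The hypothesis that $P$ contains a vertex of $V(P_1)\cup V(P_2)$ is precisely what excludes the latter: such a vertex lies outside $V(M)$, so $P$ must leave $V(M)$ to reach it and then return, which once more demands a portal that has already been used. In the leaf-block case one additionally invokes that $c$ is visited at most once, so $P$ cannot re-enter $M$ through the rest of $H$ after leaving $M$ that way.

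\textbf{Expected obstacle.} I expect the leaf-block case to be the main difficulty. There the portal set $\{c,a_1\}$ mixes a cut-vertex $c$ of $H$, which may have many neighbours outside $M$ (both in the rest of $H$ and on $P_2$), with a vertex $a_1$ of $P_1$, which may in turn have many neighbours inside $M$; one must carefully enumerate the ways a simple path can weave among $M$, the remainder of $H$, and the $P_1$-side, and check that the hypothesis on $V(P_1)\cup V(P_2)$ eliminates every borderline configuration, in particular the one where $P$ begins and ends inside $M$. The remaining arguments are routine bookkeeping about runs of a simple path, closely paralleling the proof of Lemma~\ref{lemma:st_path_banana_consecutive}.
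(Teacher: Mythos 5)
First, a remark on the comparison itself: the paper never proves this statement. It is imported verbatim as Lemma~17 of \cite{FominGSS22} and used as a black box (and only for cycles passing through a vertex of $V(P_1)\cup V(P_2)$, in Case~4 of the proof of \Cref{lemma:single_path_lemma}), so there is no in-paper proof to measure your argument against; I can only assess the argument on its merits. Your portal-set construction is fine: for a type-\ref{enum:cycle_tunnel_path_bic} component the two K\H{o}nig cover vertices control all boundary edges, and for a leaf-block the pair $\{c,a_1\}$ does, exactly as you say.

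The genuine gap is the step where you dispose of the remaining configuration, namely two runs with both endpoints of $P$ inside $V(M)$ and a single excursion spending both portals. You claim the hypothesis that $P$ contains a vertex of $V(P_1)\cup V(P_2)$ excludes this because reaching such a vertex ``once more demands a portal''. It does not: the excursion between the two runs already lies outside $V(M)$ and can itself contain the required vertex of $V(P_1)\cup V(P_2)$, so no additional boundary crossing is needed. Concretely, for a type-\ref{enum:cycle_tunnel_path_bic} component the two cover vertices may both lie \emph{inside} $H$: all $H$--$P_1$ edges incident to $z_1\in V(H)$ and all $H$--$P_2$ edges incident to $z_2\in V(H)$ with $z_1\neq z_2$ is perfectly consistent with the matching condition. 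Then a path that starts at a vertex of $H$, reaches $z_1$ inside $H$, steps onto $P_1$, travels outside $H$ (say along $C$ through the components carrying $P'$ or $P''$) to $P_2$, re-enters $H$ at $z_2$ and ends inside $H$ satisfies every hypothesis of the statement as written, yet meets $V(M)$ in two separated runs; with the definition of \cyclebananadec as given in this paper such a situation is realizable (e.g.\ $P_1,P_2$ very short, the $P'$- and $P''$-components $2$-connected, and $H$ a four-cycle $z_1xz_2y$ attached to $P_1$ only at $z_1$ and to $P_2$ only at $z_2$), and an analogous weave through $c$ and $a_1$ exists in the leaf-block case. So the portal count plus the stated hypothesis cannot close the argument. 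What actually makes the conclusion work --- and what the paper relies on, since it applies the lemma to cycles through a vertex of $V(P_1)\cup V(P_2)$, i.e.\ to closed walks whose two ``endpoints'' coincide at a vertex outside $M$, where consecutiveness is meant cyclically --- is a condition on the \emph{endpoints} of $P$ (both outside $V(M)$, e.g.\ on $V(P_1)\cup V(P_2)$), not the mere presence of some vertex of $V(P_1)\cup V(P_2)$ somewhere on $P$. To repair your proof you would need to impose or derive that endpoint condition (under it, your counting does go through), or invoke structural information beyond the two portals; the crucial exclusion is currently asserted rather than proved, and as argued it is not correct.
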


We now restate the result of \cite{FominGSS22} on the construction of a \cyclebananadec for a given graph.
Note that here we state it in a slightly different form, which is more convenient in the setting of this paper.
The statement is given below and the main difference is highlighted in bold.

\begin{lemma}[Lemma 20 in \cite{FominGSS22}]\label{lemma:main_cycle_lemma}
	Let $G$ be an $n$-vertex $2$-connected graph and $k$ be an integer such that  $\delta(G)\ge 12$,  $0< k \le \frac{1}{24}\delta(G)$, and 
	\[2k+12\leq \delta(G)<\frac{\textbf{n}}{\textbf{2}}.
	\]
	Then there is an algorithm that, given a \textbf{non-hamiltonian} cycle $C$ of length less than $2\delta(G)+k$ in polynomial time finds either
	\begin{itemize}
		\item Longer cycle in $G$, or
		\item Vertex cover of $G$ of size at most $\delta(G)+2k$, or
		\item Two paths $P_1, P_2$ that induce a \cyclebananadec for $C$ in $G$.
	\end{itemize}
\end{lemma}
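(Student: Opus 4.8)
The quickest route is to derive this from the general form of the lemma established in~\cite{FominGSS22}, where the statement additionally allows a designated low-degree vertex subset and, correspondingly, has slightly different outcomes; we specialise that subset to the empty set. All of the numerical preconditions required there --- $\delta(G)\ge 12$, $0<k\le\frac{1}{24}\delta(G)$, and $2k+12\le\delta(G)$ --- are exactly the ones assumed here, so the cited result applies and, in polynomial time, returns one of: a longer cycle, a vertex cover of $G$ of size at most $\delta(G)+2k$, a \cyclebananadec for $C$, or a Hamiltonian cycle of $G$. In the last case, since $C$ is assumed non-Hamiltonian we have $|V(C)|<|V(G)|$, so the produced Hamiltonian cycle is strictly longer than $C$ and we report it under the first outcome. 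The extra hypothesis $\delta(G)<\frac{n}{2}$ is what ensures we are genuinely in this regime: by Dirac's theorem it rules out the degenerate situation in which $G$ is forced Hamiltonian outright, so that handing over a non-Hamiltonian cycle $C$ of length below $2\delta(G)+k$ is meaningful. This reconciliation of hypotheses and outcomes is all that is needed.

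For the record, here is the shape of the argument behind the cited result, which I would reconstruct if a self-contained proof were wanted. Orient $C$ and set $L=|V(C)|<2\delta(G)+k$. \emph{Step 1 (simple enlargements).} Exhaustively: if an outside vertex has two $C$-neighbours that are consecutive on $C$, splice it in; more generally, for a component $D$ of $G-V(C)$ with attachment set $A=N_G(V(D))\cap V(C)$, if two attachment points cut off a short arc, reroute through a path in $D$; in each case output a longer cycle. Afterwards every arc of $C$ between consecutive attachment points of any outside component has length at least $2$. \emph{Step 2 (the skeleton).} Since $C$ is non-Hamiltonian there is an outside component $D$, and $2$-connectivity gives $|A|\ge 2$. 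Analysing both the neighbours of arc-interior vertices that land in $D$ (yielding re-routings) and chords within and across arcs (yielding rotations), one shows that --- up to outputting a longer cycle --- there are two distinguished arcs $P'$, $P''$ of $C$ of length at least $\delta(G)-2$ each; the remaining two arcs are the connectors $P_1,P_2$, and $|E(P_1)|+|E(P_2)|=L-|E(P')|-|E(P'')|\le k+4$. \emph{Step 3 (classification and fallback).} For each component $H$ of $G-V(P_1\cup P_2)$, a matching of size $\ge 2$ from $H$ to $P_1$ together with one to $P_2$ would let us cross two such edges with a path through $H$ and the connectors to beat $C$; hence each $H$ falls into one of the types \ref{enum:cycle_tunnel_path_bic}, \ref{enum:cycle_tunnel_path_cut_left}, \ref{enum:cycle_tunnel_path_cut_right}, with the interiors of $P'$ and $P''$ among these components, as the definition demands. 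If some configuration resists this classification because it is ``too connected'', it is instead used to read off a vertex cover: $V(P_1)\cup V(P_2)$ together with the attachment vertices and one vertex per exceptional piece, whose size stays within $\delta(G)+2k$ because the connectors carry only $O(k)$ vertices and there are only $O(k)$ exceptional pieces.

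The main obstacle is precisely Steps~2--3: forcing the two long arcs to exist with length $\ge\delta(G)-2$ while simultaneously bounding the number and total size of ``bad'' local configurations by $O(k)$, so that the fallback vertex cover really has size at most $\delta(G)+2k$. This is a careful simultaneous rotation/re-routing analysis in which the slack in $\delta(G)\ge 12$, $2k+12\le\delta(G)$, and $k\le\frac{1}{24}\delta(G)$ is consumed, and making those constants line up is the technical heart of the proof in~\cite{FominGSS22}.
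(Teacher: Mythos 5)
There is a genuine gap in your first (and operative) paragraph: the black-box invocation you describe is not available. According to the source, Lemma~20 of \cite{FominGSS22} does \emph{not} have ``exactly the preconditions assumed here'': besides the lower bounds, it requires the upper bound $\delta(G-B)<\frac{n}{2}-\frac{|B|+k}{2}$, which for $B=\emptyset$ reads $\delta(G)<\frac{n-k}{2}$ and is strictly stronger than the bound $\delta(G)<\frac{n}{2}$ assumed in the present statement. In the regime $\frac{n-k}{2}\le\delta(G)<\frac{n}{2}$ the original lemma simply cannot be invoked as a black box, and it also has no fourth ``Hamiltonian cycle'' outcome for your non-Hamiltonicity assumption to absorb; its conclusions are exactly the three listed (indeed, under the original upper bound one has $n>2\delta(G)+k>|V(C)|$, so $C$ is automatically non-Hamiltonian and such an outcome would be vacuous). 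Your reading of the role of $\delta(G)<\frac{n}{2}$ as ``ruling out that $G$ is forced Hamiltonian'' is likewise not what is going on. The whole content of the modified statement is precisely the weakening of the upper bound, and your proposal leaves that unaddressed. The paper's justification is necessarily a white-box one: inspecting the proof of Lemma~20, the only place the bound $\delta(G-B)<\frac{n}{2}-\frac{|B|+k}{2}$ is used is to guarantee a vertex in $V(G)\setminus(V(C)\cup B)$; with $B=\emptyset$ this is exactly the non-Hamiltonicity of $C$, which is now assumed explicitly, so the stronger bound may be traded for that hypothesis.

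Your second and third paragraphs sketch the rotation/re-routing argument behind the original lemma, but you explicitly defer its technical heart (forcing the two arcs of length at least $\delta(G)-2$ and bounding the fallback vertex cover by $\delta(G)+2k$), so they do not close the gap either. To repair the proof you should either carry out that reconstruction in full under the weaker hypothesis, or, as the paper does, point to the specific step of the original proof where the discarded upper bound enters and check that the non-Hamiltonicity of $C$ supplies exactly what that step needs.
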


To clarify this form, note that in the original statement of Lemma~20 in \cite{FominGSS22}, the upper bound for $\delta(G-B)$ is $\frac{n}{2}-\frac{|B|+k}{2}$, where $B$ is a given set of small-degree vertices.
In the proof of Lemma~20 in \cite{FominGSS22}, one can easily note that the only reason for this bound is the existence of at least one vertex in $V(G)\setminus V(C) \setminus B$.
Since in our work $B$ is always empty, this is equivalent to $V(G)\neq V(C)$, i.e.\ non-hamiltonicity of $C$.
Hence, the replacement of this bound with the condition on non-hamiltonicity of $C$ is legitimate.

We finish this subsection with another important property of \cyclebananadec stating that there is a long cycle that enters at least one \cyclebanana.
Unfortunately, this property, Lemma~19 in \cite{FominGSS22}, is stated in a way requiring the offset above $2\delta(G)$ for this long cycle to be much smaller than $\delta(G)$.
Here we provide this property in the form that does not require this and is much more convenient in our setting.
Since it differs significantly from the original statement, we provide a proof of this result that is based on the proof of Lemma~19 from \cite{FominGSS22}.

\begin{lemma}[Modified Lemma~19 from \cite{FominGSS22}]\label{lemma:dirac_cycle_edge_of_banana}
	Let $G$ be a graph and $P_1, P_2$ induce a \cyclebananadec  for a cycle $C$ of length at most $2\delta(G-B)+\kappa$ in $G$ such that $2\kappa \le \delta(G)$.
	If there exists a cycle of length at least $2\delta(G)+k$ in $G$ that contains at least one vertex in $V(P_1)\cup V(P_2)$, then there exists a cycle of length at least $2\delta(G)+k/2-1$ in $G$ that  	
	enters a \cyclebanana.
\end{lemma}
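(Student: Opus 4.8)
The plan is to take the hypothesized cycle $D$, with $|E(D)|\ge 2\delta(G)+k$ and $V(D)\cap(V(P_1)\cup V(P_2))\neq\emptyset$, and show that either $D$ itself or a suitable rerouting of $D$ through a \cyclebanana has the required length. Write $\delta:=\delta(G)$ and $W:=V(P_1)\cup V(P_2)$; we may also assume $\delta$ exceeds a small absolute constant. From $|E(C)|\le 2\delta+\kappa$ and the fact that each of the two inner paths $P',P''$ of $C$ has at least $\delta-2$ edges, one gets $|E(P_1)|+|E(P_2)|\le\kappa+4$, hence $|W|\le\kappa+6$. Moreover, the two ``cycle-part'' components $H'$ (with $V(H')=V(P')\setminus\{s',t'\}$) and $H''$ (with $V(H'')=V(P'')\setminus\{s'',t''\}$) each have at most $\delta+\kappa+1$ vertices, and a direct count gives $|V(H')|+|V(H'')|+|W|=|E(C)|$.

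The first step is to show that $H'$ and $H''$ are of type \ref{enum:cycle_tunnel_path_bic}, i.e.\ $2$-connected, so that they are themselves \cyclebananas. Indeed, the interior of $P'$ is a Hamiltonian path of $H'$, so were $H'$ not $2$-connected its block-cut tree would be a path with exactly two leaf-blocks; by the standard degree estimate — an inner vertex of a leaf-block of $H'$ has all its neighbours inside that leaf-block except for at most the single distinguished $P_1$- or $P_2$-attachment vertex permitted by \ref{enum:cycle_tunnel_path_cut_left}/\ref{enum:cycle_tunnel_path_cut_right} — each leaf-block would have at least $\delta$ vertices, forcing $|V(H')|\ge 2\delta-2$ and contradicting $|V(H')|\le\delta+\kappa+1$ once $2\kappa\le\delta$. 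The same applies to $H''$; in particular $C$, which uses edges inside $V(H')$, already enters a \cyclebanana.

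Now split into cases. If $D$ itself contains an edge of a \cyclebanana, we are done, since $|E(D)|\ge 2\delta+k\ge 2\delta+k/2-1$. Otherwise $D$ uses no edge of any \cyclebanana, hence (as $D$ avoids $H'$, $H''$, every type-\ref{enum:cycle_tunnel_path_bic} component, and every leaf-block) every edge of $D$ is either incident to $W$ — at most $2|W|\le 2\kappa+12<2\delta$ such — or lies inside a non-leaf-block of some type-\ref{enum:cycle_tunnel_path_cut_left}/\ref{enum:cycle_tunnel_path_cut_right} component. Since $|E(D)|\ge 2\delta+k$, edges of the second kind exist, so $D$ traverses the ``core'' (the union of the non-leaf-blocks) of some such component $H_j$ along a subpath of positive length. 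The plan is to reroute that excursion so that it instead dips into a leaf-block $L$ of $H_j$ — a \cyclebanana — entering $L$ through its cut vertex and leaving $H_j$ through the unique $P_1$/$P_2$-neighbour granted by \ref{enum:cycle_tunnel_path_cut_left}/\ref{enum:cycle_tunnel_path_cut_right}, and replacing the inside-$L$ portion with a long path supplied by \Cref{thm:relaxed_st_path}, or by \Cref{lemma:separator_in_non_2c} when a small set of already-used vertices cuts into $L$; such a path has length at least $\tfrac12\delta-O(1)$. Matching the edges of $D$ that are kept against those that must be sacrificed then gives a cycle through a \cyclebanana of length at least $2\delta+k/2-1$, which is where the factor $\tfrac12$ and the additive $-1$ in the statement come from.

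The main obstacle is exactly this rerouting: one must pick how to splice the leaf-block detour into the surviving part of $D$ so that the result is a simple cycle — which forces using precisely the two ``legal'' openings of $H_j$, namely its leaf-block cut vertex and the single $P_1$- or $P_2$-neighbour — and one must verify that after discarding the part of $D$ that is no longer usable, at least half of the $2\delta+k$ edges survive (up to the additive constant). \Cref{lemma:separator_in_non_2c} is built for precisely this task, yielding a long path inside a separable component from a leaf-block cut vertex to a prescribed vertex while avoiding a bounded set of vertices, and it is the only place a factor of $2$ is lost; \Cref{lemma:dirac_cycle_banana_consecutive} can be invoked to keep the bookkeeping tidy, since it controls how a path may meet a \cyclebanana.
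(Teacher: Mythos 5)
Your high-level strategy is the same as the paper's (if the long cycle $D$ avoids every \cyclebanana, then by a vertex-cover count it must traverse a non-leaf-block of some \ref{enum:cycle_tunnel_path_cut_left}/\ref{enum:cycle_tunnel_path_cut_right}-type component, and one should splice in a detour through a leaf-block), but the proposal stops exactly at the step that carries the proof, and the quantitative accounting you sketch does not give the claimed bound. First, the splicing: you say the detour enters a leaf-block $L$ through its cut vertex and leaves $H_j$ through the unique attachment vertex on $P_1$ or $P_2$, but you never say how it reconnects to the surviving part of $D$ to form a simple cycle containing a long arc of $D$ --- and you flag this yourself as ``the main obstacle''. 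This is precisely where the hypothesis that $D$ contains a vertex of $V(P_1)\cup V(P_2)$ is used in the paper: starting from a vertex $w_1\in V(D)$ in a non-leaf-block, one walks inside $H_j$ to the cut vertex $c_1$, crosses $L$ to a vertex $v_1$ adjacent to the attachment vertex, and then follows the \emph{original} cycle $C$ from that attachment vertex until it first meets $D$ (it must, since $D$ meets $V(P_1)\cup V(P_2)\subseteq V(C)$). This yields a chord of $D$; closing it with the longer of the two arcs of $D$ gives the cycle. Your proposal uses neither $C$ nor this hypothesis, so the cycle is never actually constructed.

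Second, the lengths. You attribute the factor $\tfrac12$ to a detour of length $\tfrac12\delta-O(1)$ coming from \Cref{lemma:separator_in_non_2c}, with ``at least half of the $2\delta+k$ edges'' of $D$ surviving. That adds up to only $\tfrac32\delta+k/2-O(1)$, which is short of $2\delta+k/2-1$. The correct accounting needs a chord of length at least $\delta-O(1)$: apply \Cref{thm:relaxed_st_path} inside $L-c_1$ (whose minimum degree is at least $\delta(G)-2$) to get a $(c_1,v_1)$-path of length $\ge\delta(G)-2$, and observe that this long portion is automatically disjoint from $D$ because $D$, containing no edge of any \cyclebanana, cannot contain \emph{any inner vertex} of a leaf-block (it could only enter such a vertex through the single attachment vertex, which it would then have to use twice). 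This claim, which you never establish, is what makes the fallback to \Cref{lemma:separator_in_non_2c} unnecessary; with it, chord $\ge\delta-1$ plus half of $D$ ($\ge\delta+k/2$) gives $2\delta+k/2-1$, and the factor $\tfrac12$ comes from taking the longer arc of $D$, not from halving $\delta$. (Minor further points: your opening step that $H'$ and $H''$ are $2$-connected is not needed anywhere and silently assumes $\delta$ is large, an assumption the lemma does not grant; and \Cref{lemma:dirac_cycle_banana_consecutive} plays no role in this argument.)
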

\begin{proof}
	Suppose that there exists a cycle $C'$ of length at least $2\delta(G)+k$ in $G$ that contains at least one vertex in $V(P_1)\cup V(P_2)$.
	If $C'$ already contains an edge of a \cyclebanana, we are done.
	We now assume that $C'$ does not contain any edge of any \cyclebanana.
	We show how to use $C'$ to construct a cycle of length at least $2\delta(G)+k/2 - 1$ in $G$ that contains an edge of a \cyclebanana of the given \cyclebananadec.
	
	Let $W$ be the set of all vertices of $G$ that are vertices of non-leaf-blocks of \ref{enum:cycle_tunnel_path_cut_left}-type or \ref{enum:cycle_tunnel_path_cut_right}-type components in the \cyclebananadec. 
	We start with the following claim.
	
	\begin{claim}
		$|W\cap V(C')|>0$.
	\end{claim}
	\begin{claimproof}
		This is a counting argument.
		Note that $C'$ cannot contain an edge with both endpoints inside a \cyclebanana of $G$.
		Since \cyclebanana{s} of $G$ are \ref{enum:cycle_tunnel_path_bic}-type components of the \cyclebananadec and leaf-blocks of \ref{enum:cycle_tunnel_path_cut_left}-type or \ref{enum:cycle_tunnel_path_cut_right}-type connected components, each edge of $C'$ has an endpoint either in $V(P_1)\cup V(P_2)$, or inside a non-leaf-block of a \ref{enum:cycle_tunnel_path_cut_left}-type or a \ref{enum:cycle_tunnel_path_cut_right}-type connected component.
		The union of the vertex sets of the non-leaf-blocks forms the set $W$.
		Hence, $(W\cap V(C'))\cup V(P_1)\cup V(P_2)$ is a vertex cover of $C'$.
		
		Note that a vertex cover of any cycle consists of at least half of its vertices.
		Then $$2|(W\cap V(C'))\cup V(P_1)\cup V(P_2)|\ge |V(C')|.$$
		By definition of a \cyclebananadec, $|V(P_1)\cup V(P_2)|\le \kappa-2$.
		Immediately we get that $$2|W\cap V(C')|\ge 2\delta(G)+k-2|V(P_1)\cup V(P_2)|\ge 2\delta(G)+k-2(\kappa-2)>0.$$
	\end{claimproof}
	
	We now take a vertex $w_1 \in W\cap V(C')$.
	The following claim allows constructing a long chord of $C'$ starting in $w_1$.
	
	\begin{claim}\label{claim:hard_lemma_no_inner_vertices}
		Let $H$ be a \ref{enum:cycle_tunnel_path_cut_left}-type or a \ref{enum:cycle_tunnel_path_cut_right}-type component in the \cyclebananadec.
		$C'$ does not contain any inner vertex of the leaf-blocks of $H$.
	\end{claim}
	\begin{claimproof}
		Suppose that $C'$ contains some vertex $u \in V(H')$ that is an inner vertex of some leaf-block $L$ of $H$.
		As $L$ is a \cyclebanana of $G$, $C'$ cannot contain any edge of $L$, so $C'$ should enter $L$ from $V(P_1) \cup V(P_2)$ through $u$ and leave it immediately.
		By definition of \cyclebananadec{s}, the only option to enter or leave $L$ is to go through the only vertex in $V(P_1)$ (if $H$ is of type \ref{enum:cycle_tunnel_path_cut_left}) or in $V(P_2)$ (if $H$ is of type \ref{enum:cycle_tunnel_path_cut_right}).
		As $C'$ cannot contain any vertex twice, this is not possible.
	\end{claimproof}

	Now construct the chord of $C'$ starting in $w_1$.
	Since $w_1$ is a vertex of a  separable component $H$, there is a cut vertex $c_1$ of a leaf-block $L_1$ of $H$ reachable from $w_1$ inside $H$.
	The leaf-block $L_1$ contains also at least one vertex $v_1\neq c_1$ that is adjacent to a vertex in $V(P_1)$ (if $H$ is of type \ref{enum:cycle_tunnel_path_cut_left}) or to $V(P_2)$ (if $H$ is of type \ref{enum:cycle_tunnel_path_cut_right}) outside $H$.
	We know that $\delta(L_1-c_1)\ge \delta(G-c_1)-1\ge \delta(G)-2$, since the only outside neighbour of vertices in $L_1-c_1$ is a single vertex in $V(P_1)$ or $V(P_2)$.
	By Corollary~\ref{thm:relaxed_st_path}, there exists an $(c_1,v_1)$-path inside $L_1$ of length at least $\delta(G)-2$.
	Combine this with a $(w_1,c_1)$-path inside $H$ and obtain a $(w_1,v_1)$-path inside $H$.
	
	Note that the constructed $(w_1,v_1)$-path can contain vertices from $V(C')$ apart from $w_1$.
	Let $w'_1 \in V(C')$ be the vertex from $V(C')$ on the $(w_1,v_1)$-path farthest from $w_1$.
	Note that the $(w'_1,v_1)$-subpath still contains the $(c_1,v_1)$-path as a subpath by Claim~\ref{claim:hard_lemma_no_inner_vertices}.
	Hence, we obtain a $(w'_1,v_1)$-path of length at least $\delta(G)-2$ inside $H$ that does not contain any vertex in $V(C')\setminus \{w'_1\}$.
	To obtain a long chord of $C'$, it is left to reach the vertex in $V(P_1)\cup V(P_2)$ from $v_1$ outside $H$, and then follow the cycle $C$ until a vertex $v'_1$ of $C'$ is reached. This is always possible since $V(C)\cap V(C')\supseteq (V(P_1)\cup V(P_2))\cap V(C')\neq \emptyset$.
	We obtain a chord of length at least $\delta(G)-1$ connecting $w'_1$ and $v'_1$.
	
	The $(w'_1,v'_1)$-chord of $C'$ splits $C'$ into two $(w'_1,v'_1)$-arcs, and one of the arcs has length at least $\delta(G)+k/2$.
	Combine this arc with the chord and obtain a cycle of length at least $2\delta(G)+k/2-1$ in $G$.
	This cycle contains an edge of a leaf-block of $H$, i.e.\ of a \Cyclebanana.
	The proof is complete.
\end{proof}

\subsection{Existence of a separating pair}

This subsection encapsulates the new combinatorial result behind \cyclebananadec that is crucial to our proof of \Cref{thm:long_cycle_approx}.
It helps us avoid using the \cyclebananadec explicitly in our algorithm, so that we can instead reduce to the algorithm for approximating $(s,t)$-paths. The formal statement is recalled next.
\lemsinglepath*
\begin{proof}
	Consider the longest cycle $C'$ in $G$.
	We assume that this cycle is of length at least $2\delta(G)+k$ and consider four cases.

	\textbf{Case 1.} \emph{$C'$ is completely contained in some connected component $H$ of $G-V(P_1\cup P_2)$, and $H$ is $2$-connected.}
	Then $H$ is a \Cyclebanana of type \ref{enum:cycle_tunnel_path_bic}.
	Since the matching size between $V(H)$ and $V(P_i)$ for each $i\in\{1,2\}$ is exactly one, by K\H{o}nig's theorem all edges between $V(H)$ and $V(P_i)$ are covered by a single vertex.
	Denote this vertex by $u$ for $i=1$ and by $v$ for $i=2$.
	Since $G$ is $2$-connected, $u$ and $v$ are distinct.
	As $\{u,v\}$ separates $H$ from the rest of the graph and $V(C)\not\subset V(H\cup P_1\cup P_2)$, we have that $G-\{u,v\}$ is not connected.
	It is left to show that there exists a long $(u,v)$-path in $G$.
	Toward this, denote $u'=u$ if $u\in V(H)$, and $u'\in N_G(u)\cap V(H)$ if $u \in V(P_1)$.
	Choose $v' \in V(H)$ similarly, i.e.\ $v'$ either equals $v$ or is a neighbour of $v$.
	Since $G$ is $2$-connected, there is always a way to choose distinct $u'$ and $v'$.
	By \Cref{lemma:cycle_to_path}, there is a $(u',v')$-path of length at least $\delta(G)+k/2$ in $H$, hence there is also a $(u,v)$-path of length at least $\delta(G)+k/2$ in $G$.
	
	\textbf{Case 2.} \emph{$C'$ is completely contained in a leaf-block of a connected component $H$ of $G-V(P_1\cup P_2)$.}
	That is, $C'$ is contained in a \cyclebanana of type \ref{enum:cycle_tunnel_path_cut_left} or \ref{enum:cycle_tunnel_path_cut_right}.
	The choice of $u$ and $v$ is similar to Case 1.
	That is, if the \cyclebanana is of type \ref{enum:cycle_tunnel_path_cut_left}, choose $u$ such that $u \in N_G(V(H))\cap V(P_1)$ and choose $v$ equal to the cut vertex of the \cyclebanana.
	$G-\{u,v\}$ is not connected as $\{u,v\}$ separates the \cyclebanana from the rest of $H$.
	There is an $(u,v)$-path of length at least $\delta(G)+k/2+1$ since there exists a $(z,v)$-path of length at least $\delta(G)+k/2$ by \Cref{lemma:cycle_to_path}, where $z\in N_G(u)\cap V(H-v)$.
	The choice of $u$ and $v$ for type \ref{enum:cycle_tunnel_path_cut_right} is symmetrical.
	
	\textbf{Case 3.} \emph{$C'$ is completely contained in a non-leaf-block of a connected component $H$ of $G-V(P_1\cup P_2)$.}
	In this case, $C'$ is not contained in a \cyclebanana.
	Denote the non-leaf-block of $H$ that contains $C'$ by $K$.
	Without loss of generality, we assume that $H$ corresponds to \ref{enum:cycle_tunnel_path_cut_left}, i.e.\ $|N_G(V(H))\cap V(P_1)|=1$.
	By Menger's theorem, there are either two vertices separating $V(K)$ from $V(P_1\cup P_2)$ in $G$ or three disjoint paths going from $V(K)$ to $V(P_1\cup P_2)$.
	If the former is the case, denote these two vertices by $u$ and $v$.
	Obviously, $G-\{u,v\}$ is not connected.
	There are two disjoint paths going from $V(K)$ to $V(P_1\cup P_2)$, and one of these paths contains $u$ and the other contains $v$.
	Connect the endpoints of these paths in $V(K)$ using a path of length at least $\delta(G)+k/2$ inside $V(K)$ given by \Cref{lemma:cycle_to_path}.
	Clearly, the obtained long path contains an $(u,v)$-path of length at least $\delta(G)+k/2$ as a subpath.
	
	If the latter is the case, then two of the three paths necessarily end in $V(P_2)$.
	Let these two paths start respectively from $u_1$ and $u_2$ in $V(K)$ and end in $v_1$ and $v_2$ in $V(P_2)$.
	Note that these paths use only edges of $H$ and edges between $V(H)$ and $V(P_2)$ in $G$.
	Moreover, none of the paths has an internal vertex in $V(K)$ or $V(P_2)$.
	By \Cref{lemma:cycle_to_path}, there is an $(u_1,u_2)$-path of length at least $\delta(G)+k/2$ in $K$.
	Now construct a cycle in $G$ by combining the $(u_1,u_2)$-path with the $(u_1,v_1)$-path, $(u_2,v_2)$-path, and the subpath of $C$ that goes between $v_1$ and $v_2$ outside of $P_2$.
	Since that subpath contains $P'$ and $P''$ from the definition of \Cyclebananadec as subpaths, the obtained cycle is of length at least $(\delta(G)+k/2)+1+1+2\cdot(\delta(G)-2)\ge 3\delta(G)+k/2-2\ge 2\delta(G)+k/2$.
	This cycle contains an edge of $P'$, so it enters a \Cyclebanana, so we can replace $C'$ with this cycle and apply the following case.
	
	\textbf{Case 4.} \emph{$C'$ has a common vertex with $V(P_1\cup P_2)$.
	By \Cref{lemma:dirac_cycle_edge_of_banana}, we can assume that $C'$ enters a \Cyclebanana $K$ but its length is at least $2\delta(G)+k/2-1$.}
	Following Case 1 and Case 2, we know that there are $u, v \in V(K\cup P_1 \cup P_2)$ such that in $G-\{u,v\}$ vertices in $V(K)$ are separated from the rest of the graph.
	By \Cref{lemma:dirac_cycle_banana_consecutive}, vertices in $V(K)$ appear consecutively on $C'$, so vertices and edges of $C'$ induce a path inside $K$.
	Since $C'$ is not contained in $V(K)$, at least one of $u$ and $v$ is present in $C'$, so we have two cases depending on $|V(C')\cap\{u,v\}|$.
	If $u,v \in V(C')$, then the longest arc of $C'$ going between $u$ and $v$ simply yields a path of length at least $(2\delta(G)+k/2-1)/2= \delta(G)+(k-2)/4$.
	If exactly one of $u$ and $v$ is present on $C'$, without loss of generality we assume $u \in V(C')$.
	Then $V(C')\subseteq V(K)\cup \{u\}$, as $C'$ does not pass through $v$ ---  the only other entry to $K$.
	Similarly to Case 1 and Case 2, we have a vertex $v' \in V(K)$, which is either equal to $v$ or is a neighbour of $v$.
	Take a shortest path from $v'$ to $V(C')$ inside $K$.
	Denote its endpoint by $w$.
	Prolong the path starting in $v'$ with the longest arc of $C'$ that goes between $w$ and $u$.
	This yields a $(v',u)$-path, hence a $(v,u)$-path, of length at least $\delta(G)+(k-2)/4$ in $G$.
\end{proof}

\subsection{Proof of \Cref{thm:long_cycle_approx}}

In this subsection, we combine \Cref{thm:eg_approx} and the results presented earlier in this section  into the proof of \Cref{thm:long_cycle_approx}.

\begin{proof}[Proof of \Cref{thm:long_cycle_approx}]
	Assume that we are given a blackbox algorithm that finds a cycle of length $f(L)$ in a graph with the longest cycle length $L$.
	We now describe the desired approximation algorithm that finds a cycle of length at least $2\delta(G)+h(k)$ based on the blackbox algorithm, where $$h(k)=\frac{1}{128}f(k)-8.$$
	
	The input to our algorithm is a graph $G$,
	let $L$ be the length of the longest cycle in $G$ and $k=L-2\delta$.
	For convenience, denote $\delta:=\delta(G)$.
	The goal of our algorithm is to find a cycle of length at least $2\delta+h(k)$ in $G$.
	Note that the algorithm does not estimate $h(k)$ in any way, it merely outputs the longest cycle that was found during its run.
	We focuse on showing that this cycle always has length at least $2\delta+h(k)$.
	
	The pseudocode of our algorithm is presented in \Cref{alg:long_cycle}.
	The first few lines of the algrotihm are dedicated to eliminating various corner cases where either the blackbox approximation suffices directly, or a long Dirac cycle. This will help us avoid dealing with extreme parameter values later in the analysis.

	If $2 \delta \ge n$, the algorithm will find and output a Hamiltonian cycle in $G$ following Dirac's theorem on Line~\ref{line:return_hamiltonian}. For the rest of the analysis, we assume $2\delta < n$.
	On Line~\ref{line:cycle_blackbox} our algorithm applies the blackbox $f(L)$-approximation algorithm to $G$.
	If $f(L) \ge \frac{49}{24}\delta$, then the resulting cycle is of length at least $2\delta+(f(L)-2\delta)\ge 2\delta+\frac{1}{49}f(L)$, which is at least $2 \delta + h(k)$. As the algorithm never makes the current cycle shorter, in this case the output will be automatically valid.
	We now also assume that $f(L)<\frac{49}{24}\delta$.
	
    \IncMargin{1em}
	\begin{algorithm}[!h]
		\SetKwFunction{LongestPath}{longest\_path}%%
		\SetKwFunction{LongCycleApprox}{longest\_cycle\_approx}%%
		\SetKwFunction{LongSTPath}{long\_st\_path}%%
		\SetKwFunction{LongSTPathApprox}{longest\_st\_path\_above\_degree\_approx}%%
		\SetKwFunction{LongErdosSTPath}{long\_eg\_st\_path}%%
		\SetKwFunction{LongSTCycle}{long\_st\_cycle}%%
		\SetKwFunction{LongDiracCycle}{long\_dirac\_cycle}%%
		\SetKwFunction{LongDiracCycleApprox}{longest\_cycle\_above\_degree\_approx}%%
		\SetKwFunction{HamPath}{hamiltonian\_path}
		\let\oldnl\nl% Store \nl in \oldnl
		\newcommand{\nonl}{\renewcommand{\nl}{\let\nl\oldnl}}% Remove line number for one line
		\Indm \nonl \LongDiracCycleApprox{$G$}
		
		\Indp
		
		\KwIn{$2$-connected graph $G$ of minimum degree $\delta$}
		\KwOut{a cycle $C$ of length at least $2\delta+h(k)$, where $k=\cyclelength-2\delta$ and \mathcyclelength is the length of the longest cycle in $G$}
		
		$C \longleftarrow \LongCycleApprox(G)$\label{line:cycle_blackbox}\;
		
		\eIf{$\LongDiracCycle(G,\emptyset,1)$is \textsc{Yes}\label{line:apply_long_dirac}}{
		    $C\longleftarrow \text{the longest of $C$ and the computed cycle of length at least } 2\delta+1 \text{ in } G$\;
		}{
		    \Return the cycle of length $2\delta$ in $G$\label{line:return_hamiltonian}\;
		}
		
		\If{$\delta\le 24$\label{line:check_delta}}{
			\Return $C$\;
		}
		
		\label{line:cycle_enlarging}\While{$|V(C)|-2\delta< \lfloor\frac{1}{24}\delta\rfloor$ and \Cref{lemma:main_cycle_lemma} applied to $G$ and $C$ gives a longer cycle}{
			$C \longleftarrow \text{ a longer cycle in } G$\;
		}
		
		\If{$|V(C)|\ge \lfloor 2\frac{1}{24}\delta\rfloor$ or \Cref{lemma:main_cycle_lemma} gives the vertex cover of $G$}{
			\Return $C$;
		}
		
		\ForEach{$u,v \in V(G)$ such that $G-\{u,v\}$ is not connected}{
			$Q,R\longleftarrow $ empty paths\;
			\ForEach{connected component $H$ in $G-\{u,v\}$}{$S \longleftarrow \LongSTPathApprox(G[V(H)\cup \{u,v\}]+uv,u,v)$\;
				$Q,R \longleftarrow $ two longest paths among ${Q,R,S}$\;
			}
			$C \longleftarrow $ the longest of $C, Q\cup R$\;
		}
		\Return $C$\;
		\caption{The algorithm finding a cycle of length at least $2\delta(G)+h(k)$ in a $2$-connected graph $G$.}
		\label{alg:long_cycle}
	\end{algorithm}
    \DecMargin{1em}

	On Line~\ref{line:apply_long_dirac} our algorithm applies the FPT algorithm for \probDC to find a cycle of length at least $2\delta+1$ in $G$ in polynomial time.
	If such cycle is found, then our algorithm keeps the longest of this cycle and previously computed approximation.
	If $h(k)=\frac{1}{128}f(k)-8\le 1$, then this cycle is a required approximation.
	On the other hand,  if a cycle of length at least $2\delta+1$ does not exist in $G$, then $k=L-2\delta=0$, so a cycle of length $2\delta$ is a valid approximation. Hence, in this case our algorithm just outputs a cycle of length at least $2\delta$ guaranteed by Dirac's theorem in this case and stops.
		
	We now can assume that $f(k)\ge 9 \cdot 128$.
	Since $f(L)<\frac{49}{24}\delta$, it follows that $\delta>24$.
	Thus, on Line~\ref{line:check_delta}, if $\delta\le 24$, our algorithm just stops as the required approximation cycle was already encountered by the algorithm.

	Now we reach the main case of the algorithm, where we use the structural results on \cyclebananadec to find a long cycle.
    Before Line~\ref{line:cycle_enlarging}, the current cycle $C$ has length at least $2\delta$.
	If the length of $C$ is less than $\lfloor 2\frac{1}{24}\delta\rfloor$, then the algorithm of \Cref{lemma:main_cycle_lemma} is applied to the graph $G$ and the cycle $C$ with the parameter $k'=|V(C)|-2\delta+1$.
	If the outcome is a cycle longer than $C$, then it replaces $C$ with this cycle.
	If it still holds that $|V(C)|<\lfloor 2\frac{1}{24}\delta\rfloor$, then our algorithm applies \Cref{lemma:main_cycle_lemma} to $G$ and $C$ again.
	This process repeats until one of the three possible structures are found in $G$:
	\begin{itemize}
		\item Cycle $C$ of length at least $\lfloor2\frac{1}{24}\delta\rfloor$;
		\item Vertex cover of size at most $\delta+2(|V(C)|-2\delta+1)$;
		\item Two paths $P_1, P_2$ that induce a \cyclebananadec for $C$.
	\end{itemize}
	
	The first outcome is the desired $h(k)$-approximation of the offset since $|V(C)|\ge 2\delta+\frac{1}{49}f(L)-1\ge 2\delta+h(k)$ in this case.
	In the second outcome, since a vertex cover upper-bounds the length of any cycle, $L \le 2\cdot (2\cdot |V(C)|-3\delta+2)$, hence $|V(C)|-2\delta \ge \frac{L-4-2\delta}{4}$, so $|V(C)| \ge 2\delta+\frac{k}{4}-1$.
	Automatically, $C$ is a valid approximation in this case as well. Thus if any of the two situations occur, our algorithm simply returns the current cycle $C$.
	
	We move on to the most involved case where the two paths $P_1, P_2$ inducing a \cyclebananadec are found.
	Our goal now is to use \Cref{lemma:single_path_lemma} to find a separating pair of vertices that has a long path between them, and then use 
	the already established algorithm from \Cref{thm:eg_approx} to approximate the length of this path.
	
	Hence, before moving further, we need to obtain an approximation algorithm for finding a long $(s,t)$-path.
	For this, we apply \Cref{lemma:stpath_approx} to the blackbox $f(L)$-approximation algorithm for the longest cycle and obtain an algorithm that finds an $(s,t)$-path of length at least $\frac{1}{2}f(2p)$ in a graph with the longest $(s,t)$-path lenght $p$.
	Finally, we apply \Cref{thm:eg_approx} to the latter algorithm and obtain an algorithm finding an $(s,t)$-path of length $\delta+(\frac{1}{64}f(2k')-3)$ where $k'=p-\delta$ for the longest $(s,t)$-path length $p$.
	
	Since $2(|V(C)|-2\delta)< \delta$, we can apply \Cref{lemma:single_path_lemma} to $G$ and $C$.
	We obtain that there exists a pair of vertices $u,v\in V(G)$ such that $G-\{u,v\}$ is not connected and there is a path of length at least $\delta+(L-2\delta-2)/4$ between $u$ and $v$ in $G$.	
	Towards encountering this pair of vertices, our algorithm iterates over all possible $u,v \in V(G)$ such that $G-\{u,v\}$ is not connected.
	Assume that the pair $\{u, v\}$ is fixed.
	Then for each connected component $H$ of $G-\{u,v\}$, our algorithm applies the $(s, t)$-path approximation algorithm to $G[V(H)\cup \{u,v\}]+uv$ to find a long $(u,v)$-path.
	Note that this is a legitimate application of the algorithm since $G[V(H)\cup \{u,v\}]+uv$ is $2$-connected.
	
	Now, each application of the algorithm yields some $(u,v)$-path.
	There are at least two connected components in $G-\{u,v\}$, so at least two $(u, v)$-paths are produced, and our algorithm simply combines the longest two paths among them in a cycle.
	The length of each path is at least $\delta-2$.
	Moreover, if $\{u,v\}$ is the pair given by \Cref{lemma:single_path_lemma}, for at least one component the longest $(u,v)$-path has length at least $\delta+(k-2)/4$.
	Then for this connected component $H$ the approximation algorithm finds an $(u,v)$-path $Q$ of length at least
	$\delta(H-\{u,v\})+\frac{1}{64}f(2k')-3$, where $$k'=\delta(G)+(k-2)/4-\delta(H-\{u,v\}).$$
	
	Denote $x=(\delta(G)-\delta(H-\{u,v\})$.
	Note that $x\le 2$ and can be negative.
	Then the length of $Q$ is at least
	$$\delta(G)-x+\frac{1}{64}f(2(x+(k-2)/4))-3.$$
	If $x\ge 0$, then the length of $Q$ is at least
	$$\delta(G)-2+\frac{1}{64}f((k-2)/2)-3.$$
	If $x<0$, then, as $-x\ge f(|x|)$, the length of $Q$ is at least
	\begin{align*}
		\delta(G)&+f(|x|)+\frac{1}{64}f(-2|x|+(k-2)/2)-3 \\
		&\ge \delta(G)+\frac{1}{64}f(64|x|)+\frac{1}{64}f(-2|x|+(k-2)/2)-3 \\
		&\ge \delta(G)+\frac{1}{64}f(62|x|+(k-2)/2)-3 \\
		&\ge \delta(G)+\frac{1}{64}f((k-2)/2)-3.
	\end{align*}	
	In any case, the length of $Q$ is at least $$\delta+\frac{1}{64}f(k/2-1)-5\ge \delta+\frac{1}{128}f(k-2)-5\ge \delta+\frac{1}{128}f(k)-\frac{1}{128}f(2)-5\ge \delta+\frac{1}{128}f(k)-6.$$
	Hence, if our algorithm combines the longest two paths given by the pair $\{u,v\}$, it obtains a cycle of length at least $2\delta+\frac{1}{128}f(k)-8=2\delta+h(k)$.
	Since the algorithm outputs the longest cycle among those constructed, the proof is complete.
\end{proof}

Finally, we show the corollary of \Cref{thm:long_cycle_approx} that allows to  approximate the longest path in a graph that is not necessarily 2-connected.
\begin{corollary}\label{thm:long_path_approx}
	Let $f: \mathbb{R}_+\to \mathbb{R}$ be a non-decreasing subadditive function.
	If there exists a polynomial-time algorithm finding a cycle of length $f(\cyclelength)$ in a $2$-connected graph with the longest cycle length \mathcyclelength,
	there is a polynomial time algorithm that outputs a path of length at least $2\delta(G)+f(\pathlength-2\delta(G))/128-8$ in a graph $G$ with $\delta(G)<\frac{1}{2}|V(G)|$ and the longest path length \mathpathlength.	
\end{corollary}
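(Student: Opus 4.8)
The plan is to reduce approximating the longest path above $2\delta$ to approximating the longest cycle above $2\delta$, which is exactly what \Cref{thm:long_cycle_approx} provides, via the classical apex-vertex construction sketched in the introduction. I would first reduce to the case where $G$ is connected: if $G$ has connected components $C_1,\dots,C_p$, run the algorithm on each $C_j$ and return the longest path found. Since $\delta(C_j)\ge\delta(G)$, writing $\delta(C_j)=\delta(G)+y_j$ with $y_j\ge 0$, a path of length $2\delta(C_j)+\tfrac{1}{128}f(\ell_j-2\delta(C_j))-8$ inside $C_j$ (where $\ell_j$ denotes the longest path length of $C_j$) has length at least $2\delta(G)+2y_j+\tfrac{1}{128}f(\ell_j-2\delta(C_j))-8$; subadditivity of $f$ together with $f(x)\le x$ gives
\[
\tfrac{1}{128}f(\ell_j-2\delta(G))\le \tfrac{1}{128}f(\ell_j-2\delta(C_j))+\tfrac{1}{128}f(2y_j)\le \tfrac{1}{128}f(\ell_j-2\delta(C_j))+2y_j,
\]
so such a path already has length at least $2\delta(G)+\tfrac{1}{128}f(\ell_j-2\delta(G))-8$, and picking the component attaining $\ell_j=\pathlength$ yields the claimed bound. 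Components with $2\delta(C_j)\ge|V(C_j)|$ become Hamiltonian after adding the apex (Dirac), so for them an optimal path is computed outright and a short direct estimate using $f(x)\le x$ covers this degenerate case.

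So assume $G$ is connected with $2\delta(G)<|V(G)|$. I would form $G'=G+v$, adding a new vertex $v$ adjacent to every vertex of $G$. Then $G'$ is $2$-connected, $\delta(G')=\delta(G)+1$, and $|V(G')|=|V(G)|+1$, hence $2\delta(G')=2\delta(G)+2\le|V(G)|+1=|V(G')|$, so the hypotheses of \Cref{thm:long_cycle_approx} hold for $G'$. The governing correspondence is that $G$ contains a path of length at least $r$ if and only if $G'$ contains a cycle of length at least $r+2$: a path of length $r$ in $G$ closes through $v$ into a cycle of length $r+2$, while a cycle $C$ of $G'$ either misses $v$, in which case it already contains a path of length $|C|-1$ in $G$, or passes through $v$, in which case $C-v$ is a path of length $|C|-2$ in $G$. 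Consequently the longest cycle of $G'$ has length exactly $\pathlength+2$, and its offset above $2\delta(G')$ equals $k:=\pathlength-2\delta(G)$.

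Finally I would invoke \Cref{thm:long_cycle_approx} on $G'$, feeding it the given cycle-approximation algorithm, obtaining in polynomial time a cycle $C$ in $G'$ of length at least $2\delta(G')+\tfrac{1}{128}f(k)-8=2\delta(G)+\tfrac{1}{128}f(\pathlength-2\delta(G))-6$. Deleting $v$ from $C$ — or deleting one edge if $v\notin V(C)$ — produces a path in $G$ of length at least $|C|-2\ge 2\delta(G)+\tfrac{1}{128}f(\pathlength-2\delta(G))-8$, which is the desired output, and every step runs in polynomial time. I do not expect a genuine obstacle here: once \Cref{thm:long_cycle_approx} is available the argument is essentially bookkeeping, and the only mildly delicate points are propagating the additive constants through the $+1$ shift in minimum degree and the $\pm 2$ shift between paths and cycles, and disposing of the degenerate components in the disconnected case via subadditivity as above.
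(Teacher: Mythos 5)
Your proposal is correct and follows essentially the same route as the paper: add a universal apex vertex to each connected component, observe that $\delta$ increases by one while cycle lengths exceed path lengths by two so the offset $k=\pathlength-2\delta(G)$ is preserved, invoke \Cref{thm:long_cycle_approx} on the resulting $2$-connected graph, and convert the cycle back to a path. Your treatment of the disconnected case (via subadditivity and $f(x)\le x$ for components of larger minimum degree) is in fact more careful than the paper's one-line remark, but the core argument is identical.
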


\begin{proof}[Proof of \Cref{thm:long_path_approx}]
	Take a graph $G$.
	We assume that $G$ is connected, otherwise we apply the same algorithm to each of its connected components.
	
	Add a universal vertex to $G$ and obtain a $2$-connected graph $G'$.
	Note that $\delta(G')=\delta(G)+1$ and there is a cycle of length at least $c$ in $G'$ if and only if there is a path of length at least $c-2$ in $G$.
	Hence, $c-2\delta(G')=c-2\delta(G)-2=p-2\delta(G)$ for the longest cycle length $c$ in $G'$ and the longest path length $p$ in $G$.
	Moreover, a cycle of length $c$ in $G'$ can be transformed into a path of length at least $c-2$ in $G$.
	
	Apply \Cref{thm:long_cycle_approx} to $G'$.
	The obtained cycle is of length at least $2\delta(G')+f(k)/128-8$ where $k=c-2\delta(G')$ for the longest cycle length $c$ in $G'$.
	Finally, transform this cycle into a path of length at least $2\delta(G)+f(k)/128-8$ in $G$.
\end{proof}

\section{Conclusion}\label{sec:concl}

In this article, we have shown a general theorem that allows us to leverage all the algorithmic machinery for approximating the length of the longest cycle to approximate the ``offset'' of the longest cycle provided by the classical Dirac's theorem. As far as one can compute a cycle of length $f(\cyclelength)$ in a $2$-connected graph $G$ with the longest cycle length $L$, we can also construct a cycle of length $2\delta(G)+\Omega(f(\cyclelength-2\delta(G)))$.  
In particular, we can use the 
state-of-the-art approximation algorithm for Longest Cycle due to Gabow and Nie \cite{GabowN08}.
They achieve an algorithm finding a cycle of length $f(\cyclelength)=c^{ \sqrt{\log \cyclelength}}$ for some constant $c>1$ in a 
graph with the longest cycle length \mathcyclelength.
Note that $f$ is non-decreasing and subadditive (as $f$ is concave on $[1,+\infty]$, and any concave function is subadditive; we also can formally set $f(x)=\min\{x,c^{ \sqrt{\log x}}\}$ for $x\geq 1$ and $f(x)=x$ for $x<1$ to fit the statement of \Cref{thm:long_cycle_approx}).
By substituting this to \Cref{thm:long_cycle_approx}, we achieve a polynomial-time algorithm that outputs a cycle of length 
 $2\delta(G)+2^{\Omega(\sqrt{\log(L-2\delta(G))})}$ in a $2$-connected graph $G$ with the longest cycle length $L>2\delta(G)$.

In the field of parameterized algorithms, there are many results on computing longest cycles or paths above some guarantees. It is a natural question, whether approximation results similar to ours hold for other types of ``offsets''. To give a few concrete questions, recall that the \emph{degeneracy} $\dg(G)$ of a graph $G$ is the maximum $d$ such that $G$ has an induced subgraph of minimum degree $d$. By Erd{\H{o}}s and Gallai \cite{ErdosG59}, a graph of degeneracy $d\geq 2$ contains a cycle of length at least $d+1$. 
It was shown by Fomin et al. in~\cite{fomin_et_al:LIPIcs:2019:11168} that a cycle of length at least $\cyclelength=\dg(G)+k$ in a 2-connected graph can be found in $2^{\Oh(k)}\cdot n^{\Oh(1)}$ time. This immediately yields a polynomial-time algorithm for computing a cycle of length at least $\dg(G)+\Omega(\log(\cyclelength-\dg(G)))$.  Is there a better approximation of the longest cycle  above the degeneracy? 

Another concrete question.  Bez{\'{a}}kov{\'{a} et al.~\cite{BezakovaCDF17} gave an FPT algorithm that for $s,t\in V(G)$ finds a detour in an undirected graph $G$. In othere words, they gave an algorithm that finds an  $(s,t)$-path  of length at least $L=\dist_G(s,t)+k$ in $2^{\Oh(k)}\cdot n^{\Oh(1)}$ time. Here $\dist_G(s,t)$ is the distance  between $s$ and $t$.  Therefore, in undirected graph we can  find   an $(s,t)$-path of length $\dist_G(s,t)+\Omega(\log(\pathlength-\dist_G(s,t))$ in polynomial time. The existence of any better bound  is open. 
 For directed graphs, the question of whether finding a long detour  is FPT is widely open~\cite{BezakovaCDF17}. Nothing is known about the (in)approximability of long detours in directed graphs.

\end{document}